\def\0{\mathbf{0}}
\def\beqa{\begin{eqnarray}}
\def\eeqa{\end{eqnarray}}
\def\beqas{\begin{eqnarray*}}
\def\eeqas{\end{eqnarray*}}
\newtheorem{theorem}{Theorem}[section]
\newtheorem{lemma}{Lemma}[section]
\newtheorem{corollary}{Corollary}[section]
\newtheorem{remark}{Remark}[section]
\newtheorem{observation}{Observation}[section]
\newtheorem{definition}{Definition}[section]
\newtheorem{assumption}{Assumption}[section]
\newcommand{\old}[1]{{}}
\def\endpf{{\ \hfill\hbox{\vrule width1.0ex height1.0ex}\parfillskip 0pt}}
\newenvironment{proof}{\noindent{\bf Proof:}}{\endpf}
\title{Uniform unweighted set cover:\\ The power of non-oblivious local search 
}
\author{Asaf Levin\footnote{Chaya fellow. Faculty of Industrial Engineering and Management, Technion - Israel Institute of Technology, 32000 Haifa, Israel.
                  email: {\tt levinas@ie.technion.ac.il}}
         \and
         Uri Yovel\footnote{Faculty of Industrial Engineering and Management, Technion - Israel Institute of Technology, 32000 Haifa, Israel.
                            email: {\tt uyovel@tx.technion.ac.il} }}
\begin{document}
\maketitle

\begin{abstract}
We are given $n$ base elements and a finite collection of subsets of them. The size
of any subset varies between $p$ to $k$ ($p<k$). In addition, we assume that the
input contains all possible subsets of size $p$. Our objective is to find a subcollection
of minimum-cardinality which covers all the elements. This problem is known to be NP-hard.
We provide two approximation algorithms for it, one for the generic case, and an improved
one for the special case of  $(p,k) = (2,4)$.

The algorithm for the generic case is a greedy one, based on packing phases: at each phase
we pick a collection of disjoint subsets covering $i$ new elements, starting from $i=k$
down to $i=p+1$. At a final step we cover the remaining base elements by the subsets of
size $p$. We derive the exact performance guarantee  of this algorithm for all values of
$k$ and $p$, which is less than $H_k$, where $H_k$ is the $k$'th harmonic number.
However, the algorithm exhibits the known improvement methods over the greedy one
for the unweighted $k$-set cover problem (in which subset sizes are only restricted
not to exceed $k$), and hence it serves as a benchmark for our improved algorithm.

The improved algorithm for the special case of $(p,k) = (2,4)$ is based on non-oblivious local search:
it starts with a feasible cover, and then repeatedly tries to replace sets of size 3 and 4
so as to maximize an objective function which prefers big sets over small ones. For this
case, our generic algorithm achieves an asymptotic approximation ratio of $1.5 + \epsilon$, and
the local search algorithm achieves a better ratio, which is bounded by $1.458333...+ \epsilon$. \\
\noindent {\bf Keywords:}\\
Approximation algorithms, set cover, local search.
\end{abstract}

\section{Introduction}
In the \emph{unweighted set cover problem}, we are given $n$ base elements and a
finite collection of subsets of them. Our objective is to find a {\it cover}, i.e.,
a subcollection of subsets which covers all the elements, of minimum-cardinality.
This problem has applications in diverse contexts such as efficient testing, statistical
design of experiments, crew scheduling for airlines, and it also arises as a subproblem
of many integer programming problems. For more information, see, e.g., \cite{Hoch}, Chapter 3.

When we consider instances of unweighted set cover such that each subset has at most $k$
elements, we obtain the \emph{unweighted $k$-set cover problem}.
This problem is known to be NP-complete
\cite{Ka}, and it is MAX SNP-hard for all $k \geq 3$ \cite{PY,CK,KMSV}.

\looseness=-1It is well known (see \cite{Ch}) that a greedy algorithm is an
$H_k$-approximation algorithm for unweighted $k$-set cover,
where $H_k=\sum_{i=1}^k {1 \over i}$ is the $k$'th harmonic
number and that this bound is tight \cite{Jo,Lo}.  For unbounded values
of $k$, Slav\'{\i}k \cite{Sl} showed that the approximation ratio of
the greedy algorithm for unweighted set cover is $\ln
n - \ln \ln n + \Theta (1)$. Feige \cite{Fe} proved that unless
$NP \subseteq DTIME(n^{polylog \ n})$, unweighted set cover
 cannot be approximated within a factor $(1-\epsilon )\ln
n$ for any $\epsilon >0$.  Raz and Safra \cite{RS} proved that if
$P\neq NP$, then for some constant $c$, unweighted set cover
cannot be approximated within a factor $c\log n$. This
result shows that the greedy algorithm is an asymptotically best
possible approximation algorithm for this problem (unless $NP \subseteq DTIME(n^{polylog \ n})$).
Goldschmidt, Hochbaum, and Yu
\cite{GHY} modified the greedy algorithm for unweighted
$k$-set cover and showed that the resulting algorithm has a
performance guarantee of $H_k-{1 \over 6}$.  Halld\'orsson
\cite{Halld1} presented an algorithm based on a local search that has an
approximation ratio of $H_k - {1 \over 3}$ for  unweighted
$k$-set cover and a ($1.4+\epsilon$)-approximation algorithm for
unweighted 3-set cover.  Duh and F\"urer \cite{DF} later
improved this result and presented an ($H_k-{1 \over
2}$)-approximation algorithm for unweighted $k$-set cover.
Levin \cite{Le06} improved their result and obtained  an
($H_k-0.5026$)-approximation algorithm for $k \ge 4$, and
Athanassopoulos et al. \cite{ACK07} presented a further improved
algorithm for $k \ge 6$ with approximation ratio approaching $H_k-0.5902$ for
large values of $k$.

All of these improvements
\cite{GHY,Halld1,DF,Le06,ACK07} are essentially the greedy algorithm, with modifications
on the way it handles small subsets. That is, they are all
based on running the greedy algorithm until
each new subset covers at most $t$ new elements (the specific value of $t$ depends
on the exact algorithm), and then use a different method to cover the remaining
base elements.

In \cite{HL}, Hochbaum and Levin consider the problem of covering the edges of a bipartite
graph $G$ using a minimum number of $K_{p,p}$ bicliques (which need not
be subgraphs of $G$). This problem arises in the context of optical networks design (see \cite{HL}),
where $p$ is typically $2$ or $3$.
In addition, it can be viewed as an instance of unweighted $p^2-$set
cover, where the base elements are $G$'s edges, and the input collection consists of
all $K_{p,p}$ graphs over $G$'s vertices. In that paper, they analyze the greedy
algorithm applied for this special case, and show that it returns a solution whose cost is
at most $(H_{p^2} - H_p + 1)OPT + 1$ (where $OPT$ is the optimal cost). They also present an
improved algorithm for the case $p=2$ based on the property of the bipartite graph $G$,
achieving an approximation ratio of $1.3 + \epsilon$.

If, in addition, the input collection contains some graphs that have
up to $k$ edges, $k > p$, then the resulting problem is an instance of the
 \emph{$(p,k)$-uniform unweighted set cover problem} (see \cite{HL}), which we
 denote by $(p,k)$-UUSC. That is,
it is the variant of unweighted set cover where the size of every subset
varies between $p$ to $k$ ($p<k$), and the input contains all possible subsets of size $p$.
In fact, their analysis of the greedy algorithm is for this generalization.
Thus, the algorithms for unweighted $k$-set cover
serve as a benchmark for our algorithms for this problem.

Recall that the dual problem of  unweighted $k$-set cover is the
\emph{(maximum) unweighted $k$-set packing problem}:
We are given $n$ base elements and a collection of subsets of them. Our objective is to
find a {\it packing}, i.e., a subcollection of {\it disjoint} subsets, of maximum-cardinality.
The fractional version of unweighted set packing is the dual
linear program of the fractional version of unweighted set
cover.  The greedy algorithm for this problem, which returns any maximal
subcollection of subsets, achieves an approximation ratio of $1\over k$.
 Hurkens and Schrijver \cite{HS} proved that for unweighted $k$-set packing,
 a local search algorithm is a ${2 -\epsilon} \over k$-approximation algorithm.
Athanassopoulos et al. \cite{ACK07} use this local search algorithm in each of their
"packing phases", and then use the method of Duh and F\"urer \cite{DF} in a final phase.

The weighted $k$-set cover problem and the weighted $k$-set packing problem are defined analogously.  However, this time each set has a cost (in the set cover variant) or a profit (in the set packing variant) and the goal is to minimize the total cost or to maximize the total profit, respectively.  The greedy algorithm for the unweighted versions and the weighted versions have the same approximation guarantee (for each of the two problems).  Hassin and Levin \cite{HL06} improved the resulting approximation ratio for the weighted $k$-set cover problem for constant values of $k$, and Arkin and Hassin \cite{AH} improved the greedy algorithm for the weighted $k$-set packing problem.

The method of \emph{local search} has been widely used in many hard combinatorial optimization
problems. The idea is simple: start with an arbitrary (feasible) solution. At each step, search
a (relatively small) neighborhood for an improved solution. If such a solution is found,
replace the current solution with it. Repeat this procedure until the neighborhood (of the
current solution) contains no improving solutions. At this point, return the current solution,
which is \emph{locally optimal}, and terminate.
Observe that in order for this method to run in polynomial time,
each local change should be computable in polynomial time, and the number of iterations should be
polynomially bounded.

Local search algorithms are mainly used in the framework of metaheuristics, such
as simulated annealing, taboo search, genetic algorithms, etc. From a practical point of view, they are usually
very efficient and  achieve excellent results - the generated solutions are near optimal.
However, from a theoretical point of view, there is usually no guarantee on the their worst-case performance.
In the thorough survey \cite{Angel}, Angel reviews the main results on local search algorithms that have
a worst-case performance guarantee. See also Halld\'orsson \cite{Halld2} for applications of this
method to $k$-dimensional matching, $k$-set packing, and some variants on independent set, vertex cover, set cover and
graph coloring problems.

In \cite{KMSV}, Khanna et al. present the paradigm of \emph{non-oblivious} local search. The idea, as
they comment, has been implicitly used in some known algorithms such as interior-point methods.
In that paper, they define the formal general algorithm in the context of MAX SNP.
Then, they develop non-oblivious local search algorithms for MAX $k$-SAT, and for the problem MAX $k$-CSP
which they define, which is a generalization of all the problems in MAX SNP. The idea in the context of set cover
is as follows. Any standard (i.e., \emph{oblivious}) local search algorithm must explicitly have the same objective: minimizing the number of picked sets. (Different such algorithms may look at different neighborhoods).
However, a \emph{non-oblivious} local search algorithm may have a different objective function to direct
the search.

{\bf Paper overview.} In section~\ref{generalAlg}, we present an algorithm for
$(p,k)$-UUSC (for any values of $p,k$). This algorithm is based on
applying the best known approximation algorithm for set packing (described in \cite{HS}) in each of the
packing phases. For $(p,k)$-UUSC  where $p \ge 2$, this algorithm exhibits all previously
known methods to improve upon the greedy algorithm for unweighted $k-$set cover. Hence, this algorithm
serves as a benchmark for our improved algorithm. For the special case of $(p,k) = (2,4)$ it achieves an
asymptotic approximation ratio of $1.5 + \epsilon$.
In section \ref{alg24}, we present an improved algorithm for the case of $(p,k) = (2,4)$, which is based
on non-oblivious local search, and we show that its (absolute) approximation ratio is at most $\frac{35}{24} + \epsilon
=  1.458333...+ \epsilon$.
In section~\ref{concluding}, we discuss some open questions.

\section{A first approximation algorithm for $(p,k)$-UUSC} \label{generalAlg}
Our algorithm is described in Figure \ref{fig:A1alg}.

\begin{figure}[!h]
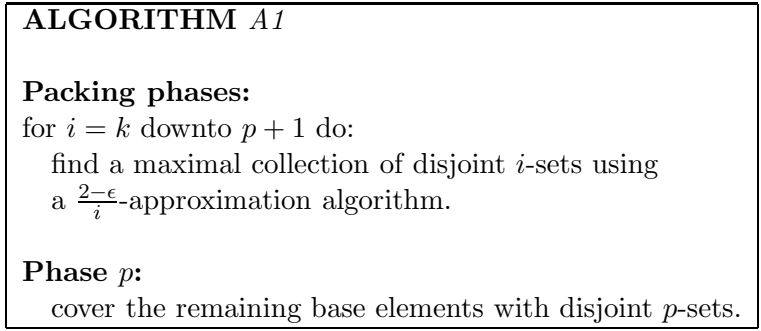
 \centering
\setlength{\parindent}{0pt}
\begin{tabular}{|l|}
\hline
\textbf{ALGORITHM {\it A1 } }\\
  \\
\textbf{Packing phases:}\\
for $i=k$ downto $p+1$ do: \\
\quad find a maximal collection of disjoint $i$-sets using \\
\quad a ${2-\epsilon} \over i$-approximation algorithm. \\
   \\
\textbf{Phase $p$:}\\
\quad cover the remaining base elements with disjoint $p$-sets. \\
\hline
\end{tabular}
\caption{Algorithm {\it A1}.}
\label{fig:A1alg}
\end{figure}

We analyze this algorithm using a \emph{factor revealing linear program}.
We assume that $p \ge 1$, $k \ge 2$, $k>p$.
We also assume that the input satisfies the subset closure property and, consequently,
that the cover consists of disjoint subsets. Note that in explicit representation,
this causes the input size to increase by a factor of $2^k-1$ at the most,
since for each subset, all its non-trivial subsets are added to the collection.
However, such explicit representation is not necessary for our algorithm, and we
use it only for the analysis.
Another simplifying assumption for the analysis is:
\begin{assumption}  \label{apx_opt}
The input consists exclusively of the sets in $APX$ and $OPT$. In addition,
$APX \cap OPT = \emptyset$.
\end{assumption}
The justification of this assumption is fairly simple.
Regarding its first part, observe that
if the sets selected by $A1$ in phase $i$ cannot be improved, then this collection
of $i-$sets cannot be improved by replacing some of them by subsets of $OPT$ (or
subsets of them).
Hence, subsets outside $APX \cup OPT$ can be removed.

For the second part, observe that if there is a subset $S$ in both $APX$ and $OPT$, removing
$S$ and its elements from the input results in an instance for which $APX \setminus \{
S \}$ is a feasible solution and $OPT \setminus \{ S \}$ is an optimal
solution. But the approximation ratio for this new instance is
$\rho' \equiv \frac{|APX|-1}{|OPT|-1} \ge  \frac{|APX|}{|OPT|} \equiv
\rho$.

At any point in the execution of the algorithm, we define an $i-$set
to be a subset of size $i$, such that all of its elements are uncovered.
We define $a_{i,j}$ to be the ratio of the number of $j-$sets
in {\it OPT} in the beginning of packing phase $i$, to $|OPT|$, $i=p+1,...,k$, $j=1,...,i$,
and for phase $p$ we define $a_{p,p}$ to be the ratio of the number of
uncovered elements in the beginning of phase $p$, to $p|OPT|$.

Our analysis of Algorithm $A1$ is similar to that of \cite{ACK07}. In each
packing phase $i$ $(p+1 \le i \le k)$ we find a collection of
$i-$sets which is maximal. Therefore, in all of the next phases $j$
$(p \le j < i)$ there are no $i-$sets available. Similarly, in phase
$p$ there are no  $i-$sets available, $i>p$. Thus:
\begin{eqnarray}
\label{sumLE1a}
\sum_{j=1}^i a_{i,j}  & \le & 1 \ , \qquad     i=p+1,...,k \ ,  \\
 \label{sumLE1b}
a_{p,p} & \le & 1 \ . \\  \nonumber
\end{eqnarray}

Denote by $V_i$ the remaining uncovered elements in the beginning of phase $i$, $i=p,...,k$.
By definition of $a_{i,j}$, their number is $|V_i| = \sum_{j=1}^i j a_{i,j} |OPT|$.
In packing phase $i$, we pick $i-$sets that cover the elements in $V_i \setminus V_{i-1}$.
Since $V_{i-1} \subseteq V_i$ their number is:
\begin{equation}  \label{numOfDiffV}
|V_i \setminus V_{i-1}| = |V_i|- |V_{i-1}| = \left(   \sum_{j=1}^i j
a_{i,j} -   \sum_{j=1}^{i-1} j a_{i-1,j}       \right) |OPT| \quad , \quad
i=p+1,...,k \ .
\end{equation}
At the beginning of packing phase $i$, there are at least
$a_{i,i}|OPT|$ available $i-$sets. Therefore, the
$\frac{2-\epsilon}{i}-$ approximation algorithm picks at least
$(\frac{2-\epsilon}{i}) a_{i,i}|OPT|$ $i-$sets, thus covering at
least $(2-\epsilon) a_{i,i}|OPT|$ new elements. Hence, $|V_i
\setminus V_{i-1}| \ge (2-\epsilon) a_{i,i}|OPT|$. Using
(\ref{numOfDiffV}) and omitting the $\epsilon$ term, this yields:
\begin{equation}  \label{approxCons}
\sum_{j=1}^{i-1} j a_{i-1,j} - \sum_{j=1}^{i-1} j a_{i,j} - (i-2)a_{i,i}
 \le   0   \quad , \quad  i=p+1,...,k \ .
\end{equation}
Define $t_i$ to be the number of $i-$sets that are picked in packing phase $i$,
$i=p+1,...,k$. Then (\ref{numOfDiffV}) yields:
\begin{equation}  \label{t_i}
t_i = \frac{1}{i}  |V_i \setminus V_{i-1}|
= \left(   \frac{1}{i} \sum_{j=1}^i j a_{i,j} -  \frac{1}{i} \sum_{j=1}^{i-1} j a_{i-1,j}
 \right) |OPT|   \quad , \quad  i=p+1,...,k \ ,
\end{equation}
and for phase $p$ define $t_p$ as:
\begin{equation}  \label{t_p}
t_p = a_{p,p} |OPT| \ .
\end{equation}
Note that  $\lceil a_{p,p} |OPT| \rceil$ is the number of $p-$sets that are picked
and possibly an additional set of size less than $p$, covering the remaining elements.
Due to this last set, we obtain an \emph{asymptotic} approximation ratio. Specifically,
it is $\frac { \sum_{j=p}^{k} t_i + 1} {|OPT|}$. Using (\ref{t_i}),(\ref{t_p}), we obtain:
\begin{eqnarray}  \label{primalUB}
\frac { \sum_{j=p}^{k} t_i + 1} {|OPT|}  & = & a_{p,p} + \sum_{i=p+1}^k
\left(  \frac{1}{i} \sum_{j=1}^i j a_{i,j} -  \frac{1}{i}
\sum_{j=1}^{i-1} j a_{i-1,j}        \right) + \frac{1}{|OPT|} \nonumber \\
 & = & \frac{1}{k}
\sum_{j=1}^k j a_{k,j} + \sum_{i=p+1}^{k-1} \left( \frac{1}{i(i+1)}
\sum_{j=1}^i j a_{i,j} \right)  + \frac{1}{p+1} a_{p,p}  + \frac{1}{|OPT|} \ .
\end{eqnarray}

Thus, maximizing the right-hand side of (\ref{primalUB}) subject to
the constraints (\ref{sumLE1a}),(\ref{sumLE1b}),(\ref{approxCons}) and $a_{i,j} \ge
0$, yields an upper-bound on the approximation ratio of Algorithm
$A1$. Observe that asymptotically, the term $\frac{1}{|OPT|}$ is arbitrarily small. For convenience,
since it is a constant in the objective function, we omit it.
The resulting LP is: \\

\noindent {\bf Program} {\boldmath $(P)$}
\begin{eqnarray}
\max &  \frac{1}{k} \sum_{j=1}^k j a_{k,j}
+ \sum_{i=p+1}^{k-1} \left(  \frac{1}{i(i+1)} \sum_{j=1}^i j a_{i,j} \right)
 + \frac{1}{p+1} a_{p,p} & \nonumber\\
\label{primalCons1}
\textrm{s.t.} &  \sum_{j=1}^i a_{i,j} \le 1  \textrm{  }   \qquad \qquad  \qquad \qquad \qquad  \qquad   &  i=p+1,...,k  \\
\label{primalCons2}
& a_{p,p}  \le 1  \qquad \qquad \qquad \qquad \qquad \qquad \qquad  \quad & \\
\label{primalCons3}
&  \sum_{j=1}^{i-1} j a_{i-1,j} - \sum_{j=1}^{i-1} j a_{i,j} - (i-2)a_{i,i}
 \le   0  &  i=p+1,...,k  \\
&  a_{i,j} \ge 0  \qquad \qquad \qquad \qquad \qquad \qquad \qquad  \quad  &  i=p,...,k,  \quad  j=1,...,i \ . \nonumber
\end{eqnarray}
It is possible to derive a closed-form solution for this LP.
\begin{theorem} \label{LP_solution}
The solution of program $(P)$ is given by:
\begin{itemize}

\item{\bf Case 1:}
$k-p$ even:
\label{primalSolEvenDiff}
$a_{p+2j+1,p+2j}=a_{p+2j,p+2j}=1$  for all $j=0,...,\frac{k-p-2}{2}$,
$a_{k,k}=1$,
and all other $a_{i,j}$'s are zeros.

\item{\bf Case 2:}
$k-p$ odd:
\label{primalSolOddDiff}
$a_{p+2j+1,p+2j}=a_{p+2j,p+2j}=1$ for all $j=0,...,\frac{k-p-3}{2}$,
$a_{k,k} = a_{k-1,k-2} = 1$,
and all other $a_{i,j}$'s are zeros.
\end{itemize}
$A1$ is an asymptotic $(\rho + \epsilon)$-approximation algorithm for
$(p,k)$-UUSC, where $\rho$ is $(P)$'s objective function value, and is given by:
\begin{displaymath}
\rho = \left\{
\begin{array}{lll}
H_\frac{k}{2} - H_{\frac{p}{2}} + 1 &  p  \textrm{ even, }  k \textrm{ even}  \\
H_\frac{k-1}{2} - H_\frac{p}{2}   + 1 + \frac{1}{k} - \frac{1}{k(k-1)}
&  p  \textrm{ even, }  k \textrm{ odd} \\
2(H_{k} - H_{p+1}) - H_\frac{k}{2}  + H_\frac{p+1}{2} + 1 +  \frac{1}{k} - \frac{1}{k(k-1)}
&  p  \textrm{ odd, }  k \textrm{ even} \\
2(H_{k+1} - H_{p+1}) - H_\frac{k+1}{2}  + H_\frac{p+1}{2} + 1
&  p  \textrm{ odd, }  k \textrm{ odd} \ . \\
\end{array}  \right.
\end{displaymath}
\end{theorem}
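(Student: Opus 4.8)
The plan is to show that the displayed assignment is an optimal solution of $(P)$ and that its value $\rho$ has the stated closed form; the approximation‑ratio claim then comes for free, since the derivation culminating in (\ref{primalUB}) already shows that the optimum of $(P)$ upper‑bounds $|APX|/|OPT|$ up to the discarded additive $\frac{1}{|OPT|}$ term and the $\epsilon$ coming from the $\frac{2-\epsilon}{i}$‑packing subroutine (all under Assumption~\ref{apx_opt}, whose reduction was justified above). As a preliminary step I would eliminate the $a_{i,j}$'s in favour of the aggregated variables $v_i:=\sum_{j=1}^{i}j\,a_{i,j}$ for $i\ge p+1$ and $v_p:=p\,a_{p,p}$, i.e. the normalized number of elements still uncovered entering phase $i$. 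In these variables the objective is $\frac{v_k}{k}+\sum_{i=p}^{k-1}\frac{v_i}{i(i+1)}$, constraint (\ref{primalCons2}) becomes $v_p\le p$, constraint (\ref{primalCons3}) rewrites as $v_i-v_{i-1}\ge 2a_{i,i}$, and (\ref{primalCons1}) with nonnegativity forces both $v_i\le i$ and (via $v_i\le (i-1)+a_{i,i}$) $a_{i,i}\ge v_i-i+1$; projecting out the $a_{i,i}$'s shows that $(P)$ is equivalent to the ``chain'' LP in the $v_i$'s with $0\le v_p$, $v_{i-1}\le v_i\le i$ and $v_{i-1}+v_i\le 2(i-1)$.

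For the stated solution one has $v_i=i$ whenever $i\equiv p\pmod 2$ and $v_i=i-1$ otherwise (with the obvious modification near the top in Case~2), and I would verify directly that all the chain constraints hold — in Case~1 every coupling constraint $v_{i-1}+v_i\le 2(i-1)$ is in fact tight — and that translating these $v_i$'s back reproduces exactly the listed $a_{i,j}$'s, because once the $v_i$'s sit at these extreme values the per‑row conditions $\sum_{j<i}a_{i,j}\le 1-a_{i,i}$ and $\sum_{j<i}j\,a_{i,j}=v_i-i\,a_{i,i}$ pin down each row. Optimality I would establish by LP duality: introducing dual variables $y_i$ for (\ref{primalCons1}), $w$ for (\ref{primalCons2}) and $z_i$ for (\ref{primalCons3}) (with $z_{k+1}:=0$), the dual minimizes $\sum_i y_i+w$ subject to one inequality per primal variable $a_{i,j}$. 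Complementary slackness against the candidate tells me which dual inequalities must be equalities — exactly those indexed by the nonzero (diagonal and sub‑diagonal) $a_{i,j}$'s — and which dual variables must vanish (none in Case~1, where every primal constraint is tight; in Case~2 a coupling constraint near the top is slack, forcing the corresponding $z_i$ to zero). Solving that essentially triangular system from $i=k$ downward yields explicit nonnegative $y_i,z_i,w$; I would then check the remaining dual inequalities — those of the $a_{i,j}$ with $j<i-1$, which reduces to a monotonicity‑in‑$j$ sign check — and that the dual and primal objectives agree, so weak duality finishes.

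It remains to simplify the optimal value. Plugging the optimal $v_i$'s into $\frac{v_k}{k}+\sum_{i=p}^{k-1}\frac{v_i}{i(i+1)}$ and using $\frac{1}{i(i+1)}=\frac1i-\frac{1}{i+1}$ together with the partial fraction $\frac{i-1}{i(i+1)}=\frac{2}{i+1}-\frac1i$, the contributions of the two parity classes telescope against one another and collapse to $1+2\sum_m\frac1m$, with $m$ running over the step‑$2$ progression from $p+2$ to $k$ (plus the $\frac1k-\frac1{k(k-1)}$ correction in Case~2, where the progression stops at $k-1$); splitting that progression via $\sum_{m\textrm{ even},\,m\le N}\frac1m=\frac12 H_{\lfloor N/2\rfloor}$ then delivers the four listed expressions — for instance in the all‑odd case after the identity $2H_{k+1}-H_{(k+1)/2}=2H_k-H_{(k-1)/2}$.

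The hard part will be the duality step: producing a correct closed‑form dual solution and discharging all of its inequalities uniformly in $i,j,p,k$, made more delicate by the fact that the dual optimum is not unique (already visible at $(p,k)=(2,4)$), so one must commit to a convenient branch. By contrast, the harmonic‑number bookkeeping of the last step is tedious across the four parity cases but routine, and the $v_i$‑reformulation together with the feasibility check is straightforward once the substitution is in place.
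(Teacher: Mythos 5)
Your plan is correct and follows essentially the same route as the paper's appendix: verify primal feasibility of the stated point, exhibit an explicit dual solution obtained by solving the (triangular) system of constraints forced tight by complementary slackness, check the remaining dual inequalities, and then evaluate the objective by telescoping $\frac{1}{i(i+1)}=\frac{1}{i}-\frac{1}{i+1}$ into harmonic sums split by parity. The only difference is your preliminary aggregation into the variables $v_i=\sum_j j\,a_{i,j}$, which streamlines the feasibility check but leaves the substantive duality argument identical to the paper's (where the dual recurrence $\gamma_i=\gamma_{i+2}+\frac{2}{i(i+1)(i+2)}$ and the identity $\gamma_i+\gamma_{i+1}=\frac{1}{i(i+1)}$ carry out exactly the step you defer).
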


\noindent
The proof is technical, and can be found in the Appendix. This is an asymptotic approximation ratio
due to the $\frac {1}{|OPT|}$ term which we neglected.
\begin{corollary}
$A1$ is an asymptotic $(1.5 + \epsilon)$-approximation algorithm for $(2,4)$-UUSC.
\end{corollary}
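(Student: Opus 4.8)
The plan is to derive the corollary directly from Theorem~\ref{LP_solution} by specializing it to $p=2$, $k=4$. First I would determine which branch of the piecewise formula for $\rho$ applies: here $k-p=2$ is even and both $p$ and $k$ are even, so the relevant branch is $\rho = H_{k/2} - H_{p/2} + 1$. Next I would evaluate this quantity: $H_{k/2} = H_2 = 1 + \frac{1}{2} = \frac{3}{2}$ and $H_{p/2} = H_1 = 1$, so $\rho = \frac{3}{2} - 1 + 1 = \frac{3}{2}$. Since Theorem~\ref{LP_solution} asserts that $A1$ is an asymptotic $(\rho+\eps)$-approximation algorithm for $(p,k)$-UUSC, this immediately yields the asymptotic $(1.5+\eps)$ bound for $(2,4)$-UUSC.

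As a consistency check I would also substitute the explicit optimal point from Case~1 of Theorem~\ref{LP_solution} into Program~$(P)$. For $(p,k)=(2,4)$ the running index in Case~1 is only $j=0$, so the sole nonzero variables are $a_{2,2} = a_{3,2} = 1$ and $a_{4,4} = 1$. Feasibility is immediate: (\ref{primalCons1}) and (\ref{primalCons2}) hold since each nonzero $a_{i,j}$ equals $1$ and no two of them share an index $i$; constraint (\ref{primalCons3}) for $i=3$ reads $2a_{2,2} - 2a_{3,2} - a_{3,3} = 2 - 2 - 0 \le 0$, and for $i=4$ it reads $2a_{3,2} - 2a_{4,4} = 2 - 2 \le 0$, the remaining terms vanishing. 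Evaluating the objective $\frac{1}{k}\sum_{j=1}^{k} j a_{k,j} + \sum_{i=p+1}^{k-1}\frac{1}{i(i+1)}\sum_{j=1}^{i} j a_{i,j} + \frac{1}{p+1}a_{p,p}$ at this point gives $\frac{1}{4}(4 a_{4,4}) + \frac{1}{12}(2 a_{3,2}) + \frac{1}{3} a_{2,2} = 1 + \frac{1}{6} + \frac{1}{3} = \frac{3}{2}$, in agreement with $\rho$. Optimality of this point is part of the Appendix proof of Theorem~\ref{LP_solution} and need not be reproved.

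There is essentially no obstacle here: once Theorem~\ref{LP_solution} is in hand, the corollary is a one-line substitution into a closed-form expression. The only point that requires any care is picking the correct case of the formula for $\rho$ --- the ``$p$ even, $k$ even'' case rather than one of the other three --- and carrying out the elementary evaluation of $H_1$ and $H_2$; the direct check against Program~$(P)$ above serves precisely to guard against a mis-selection of the branch.
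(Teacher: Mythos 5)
Your proposal is correct and matches the paper's (implicit) argument: the corollary is obtained exactly by substituting $p=2$, $k=4$ into the ``$p$ even, $k$ even'' branch of Theorem~\ref{LP_solution}, giving $\rho = H_2 - H_1 + 1 = \tfrac{3}{2}$. Your consistency check against Program~$(P)$ with $a_{2,2}=a_{3,2}=a_{4,4}=1$ is accurate and a reasonable safeguard, though not needed beyond the one-line substitution.
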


\section{An improved algorithm for $(2,4)$-UUSC }  \label{alg24}
In this section, we describe an improved algorithm for the case
$(p,k) = (2,4)$. That is, subsets' sizes are between $2$ to $4$, and
all possible $2-$sets are available.
Our algorithm is based on a \emph{non-oblivious} local search.
Specifically, denote by $X_2, X_3, X_4$, the
number of $2,3,4$-sets in $APX$, respectively. Then the number of
base elements is $n = 2X_2 + 3X_3 + 4X_4$ and the set cover objective is
to minimize $X_2 + X_3 + X_4$. However, the objective of our
algorithm is to {\it maximize } $4X_4 + X_3$. This is equivalent to
minimize $X_2 + X_3$. Intuitively, the large
sets are given higher priority because a cover which consists of
many large sets is good (due to the disjointness assumption).
Observe that this objective function is related to that of packing
problems, which are the dual of covering problems.
Our local search algorithm is described in Figure
\ref{fig:A2alg}. It is parameterized by $\epsilon$, which we assume
to be small enough, say $\epsilon \le  \frac{1}{100}$, and in addition, without loss of generality we assume that
$\frac{1}{\epsilon}$ is an integer.
\begin{figure}[!h] \centering
\setlength{\parindent}{0pt}
\begin{tabular}{|l|}
\hline
\textbf{ALGORITHM {\it A2 } }\\
  \\
1. Start with an arbitrary feasible cover. \\
2. Perform a {\it local search improvement step:} \\
 \qquad remove up to $\frac{1}{\epsilon}$ $3-$ and $4-$sets,\\
 \qquad insert any number of $3-$ and $4-$sets, so as to maximize $4X_4+X_3$. \\
3. Goto step 2, until no local search improvement step exists.\\
4. Cover the remaining base elements with $2-$sets. \\
\hline
\end{tabular}
\caption{Algorithm {\it A2}.}
\label{fig:A2alg}
\end{figure}

\noindent $APX$, the cover returned by the algorithm, is a
 {\it local optimum}. The following observation is trivial:
\begin{observation}   \label{apxOptSameOrder}
Every feasible solution $SOL$ is of size ${n \over 4} \le |SOL| \le {n \over 2}$.
Consequently, $|APX| = \Theta (|OPT|)$, $|OPT| = \Theta (|APX|)$.
\end{observation}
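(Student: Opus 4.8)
The statement splits into the two inequalities bounding $|SOL|$ and the resulting $\Theta$-relation between $|APX|$ and $|OPT|$, and I would treat these in that order. For the lower bound, the plan is the standard counting argument: in $(2,4)$-UUSC every subset has at most $k=4$ elements, so a subcollection of $m$ sets covers at most $4m$ base elements; since a feasible cover must account for all $n$ of them, $n \le 4|SOL|$, i.e. $|SOL| \ge n/4$. This holds for \emph{every} feasible cover and needs nothing beyond $k=4$.

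For the upper bound I would use that $p=2$ and that the input contains \emph{all} $2$-sets, together with the subset-closure convention adopted in Section~\ref{generalAlg}: a feasible cover may be taken to consist of pairwise disjoint subsets (overlapping sets can be shrunk to disjoint subsets of the same cardinality, which remain in the input), so its sets partition the $n$ base elements, and since each has size at least $p=2$ we get $n = \sum_{S\in SOL}|S| \ge 2|SOL|$, hence $|SOL| \le n/2$. Equivalently, writing $n = 2X_2 + 3X_3 + 4X_4$ for such a cover, $|SOL| = X_2+X_3+X_4 \le \frac12(2X_2+3X_3+4X_4) = n/2$. I would remark that the inequality is meant for the covers the analysis actually manipulates — an optimal cover $OPT$ and the local optimum $APX$, both of which we may assume irredundant / disjoint — rather than for arbitrarily redundant feasible collections; for $OPT$ this is immediate since a partition of the elements into $\lfloor n/2\rfloor$ disjoint $2$-sets (plus at most one extra set for a leftover element) is a feasible cover, so $|OPT| \le \lceil n/2\rceil$.

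Finally, combining the two bounds: both $|OPT|$ and $|APX|$ lie in the interval $[\,n/4,\; n/2\,]$ (up to an additive $O(1)$ coming from a possible leftover element), so $|APX| \le n/2 \le 2\cdot(n/4) \le 2|OPT|$ and, symmetrically, $|OPT| \le 2|APX|$. Hence each of $|APX|,|OPT|$ is within a constant factor of the other, which is exactly $|APX| = \Theta(|OPT|)$ and $|OPT| = \Theta(|APX|)$. I do not expect any real obstacle here: the only point requiring a word of care is specifying the class of covers for which the $|SOL|\le n/2$ bound is asserted (disjoint/irredundant covers, which is all the later analysis uses); once that is fixed, every step is a one-line inequality.
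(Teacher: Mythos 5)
Your argument is correct and is exactly the counting argument the paper has in mind: the paper states this observation without proof (labeling it trivial), relying on the standing assumptions that sets have sizes between $2$ and $4$ and that feasible solutions consist of disjoint subsets, which give $2|SOL| \le \sum_{S \in SOL}|S| = n \le 4|SOL|$. Your added caveat about restricting to disjoint covers is the right reading of the paper's conventions, so there is nothing to fix.
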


Note that this observation implies that if  $|OPT| \le  \frac{1}{2\epsilon}$,
then $APX$ is also an optimal solution. We use the following definition for convenience:
\begin{definition}
The $i-$sets in $OPT$ are called \emph{$i-$columns}; the $i-$sets in
$APX$ are called \emph{$i-$rows}. We  simply use \emph{columns} and
\emph{rows} in places where their size is irrelevant or clear from the context.
\end{definition}

\subsection{Restricting the input type}
In order to analyze the performance of Algorithm $A2$, we  assume, as in the previous
section, that the input collection satisfies the subset closure property, and that feasible
solutions consist of disjoint subsets.
We also continue to assume Assumption \ref{apx_opt},
i.e., that the input is $APX \cup OPT$, where $APX \cap OPT = \emptyset$.
The next assumption, which is less trivial, restricts the type of instance in the bad examples for the algorithm:
\begin{assumption} \label{opt4sets}
The instance belongs to one of the following two types:
\begin{itemize}
\item{Type A:}
$OPT$ consists exclusively of $4-$columns, $APX$ consists of $2-,3-$ and $4-$rows,
\item{Type B:}
$OPT$ consists exclusively of $3-$ and $4-$columns, $APX$ consists exclusively of $2-$ and $4-$rows.
\end{itemize}
\end{assumption}
In order to justify this assumption, we prove the following result:
\begin{lemma} \label{Lemma_opt4sets}
Let $I$ be a given instance. Let $APX$ be a local optimum in $I$, let $SOL$ be an arbitrary (feasible)
solution in $I$ with $|SOL| \le |APX|$, and let $\rho \equiv \frac{|APX|}{|SOL|}$.
Then there exists an instance $I'$ having solutions denoted by $SOL'$ and $APX'$, satisfying:
(i) $APX'$ is a local optimum in $I'$ achieving the same approximation ratio, i.e., $\rho' \equiv \frac{|APX'|}{|SOL'|}  =\rho$,
(ii) $SOL'$ contains no $2-$columns,
(iii) $SOL'$ contains no $3-$columns or $APX'$ contains no $3-$rows.
\end{lemma}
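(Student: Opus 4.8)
The plan is to establish Lemma~\ref{Lemma_opt4sets} by a sequence of local modifications to the instance $I$, each of which preserves both the property that $APX$ is a local optimum and the ratio $\rho = |APX|/|SOL|$, while progressively enforcing conditions (ii) and (iii). I would first dispose of (ii). Suppose $SOL$ contains a $2$-column $C$. The two base elements of $C$ are each covered (in $APX$) by some row; replacing $C$ by these (at most two) rows intersected with $C$ — i.e., by $2$-sets that are available because of the subset closure property — keeps $SOL$ feasible and does not increase its size by more than a controlled amount. The cleaner route, which I would take, is to \emph{delete} the two elements of $C$ from the instance altogether: this removes $C$ from $SOL$ and shrinks the rows of $APX$ that met $C$ to smaller (still available) subsets, possibly deleting a row if it becomes empty. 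As in the justification of Assumption~\ref{apx_opt}, deleting a column of $SOL$ and a nonempty set of rows of $APX$ can only increase the ratio (the argument $\frac{|APX|-a}{|SOL|-1} \ge \frac{|APX|}{|SOL|}$ whenever $a \ge \rho \ge 1$, together with the boundedness $1 \le \rho \le 2$ from Observation~\ref{apxOptSameOrder}), and local optimality of the shrunken $APX$ follows because any improving step in the new instance would lift to an improving step in the old one. Iterating over all $2$-columns yields an instance with no $2$-columns; I would need to check that these deletions do not reintroduce $2$-columns, which they do not, since we never enlarge any set of $SOL$.

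Next I would enforce (iii): in the resulting instance, either there are no $3$-columns or there are no $3$-rows. Here the idea is a swap argument exploiting non-oblivious local optimality. Suppose both a $3$-column $D \in SOL$ and a $3$-row $R \in APX$ exist. I would like to argue that we may transform the instance so that one of the two types disappears without hurting $\rho$ or local optimality — for instance, by ``merging'' the roles so that $APX$ takes the $3$-column and $SOL$ keeps only $4$-columns, or by an element-deletion argument analogous to the one above that eliminates all $3$-rows when $3$-columns are present. Concretely, if $3$-columns exist I would try to delete elements so as to kill every $3$-row: a $3$-row $R$, being part of a local optimum, cannot be profitably merged with its neighbors, and removing a single element of $R$ turns it into a $2$-row, which is ``invisible'' to the objective $4X_4+X_3$ — so the bookkeeping on the objective function, and hence on local optimality, is controlled. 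The subtlety is that deleting an element of a row of $APX$ also deletes that element from whatever column of $SOL$ contains it, which can turn a $4$-column into a $3$-column or a $3$-column into a $2$-column, so the two reductions (ii) and (iii) interact. I would therefore interleave them or re-run reduction (ii) after (iii), and argue termination by a monotone potential (e.g.\ total number of base elements strictly decreases at each deletion step).

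The main obstacle I anticipate is precisely this interaction and the need to preserve \emph{local optimality} — not just the ratio — under each reduction. Preserving the ratio is routine monotonicity once $1 \le \rho \le 2$ is in hand; but showing that the shrunken $APX'$ is still a local optimum in $I'$ requires a careful lifting argument: given an improving local-search step in $I'$ (removing up to $1/\epsilon$ $3$- and $4$-rows and inserting others to increase $4X_4 + X_3$), one must produce a corresponding improving step in $I$, accounting for the rows and elements that were deleted. This is delicate when a deleted element lay in one of the rows the hypothetical step would remove or in one it would insert. I expect the proof handles this by choosing the reductions so that every set of $I'$ is a subset of a set of $I$ under a fixed correspondence, so that any $I'$-configuration pulls back to an $I$-configuration with the same or better objective value, making the contrapositive ``$APX$ local-optimal in $I$ $\Rightarrow$ $APX'$ local-optimal in $I'$'' go through. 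Once (ii) and (iii) both hold, reading off the two cases — $SOL'$ has only $4$-columns (Type~A), or $SOL'$ has $3$- and $4$-columns but then $APX'$ has no $3$-rows (Type~B) — gives exactly the dichotomy in Assumption~\ref{opt4sets}, completing the proof.
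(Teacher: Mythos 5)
Your reduction goes in the wrong direction, and this is not a repairable detail: deleting the elements of a $2$-column does not preserve local optimality of $APX$. When you delete the two elements of a $2$-column $C\in SOL$, a $4$-row of $APX$ meeting $C$ shrinks to a $3$-row, so its contribution to the objective $4X_4+X_3$ drops from $4$ to $1$, and it becomes much cheaper to discard in $I'$ than in $I$. Concretely, suppose a $4$-row $R$ covers one element of $C$ and three elements of a $4$-column $D\in SOL$ whose fourth element is uncovered. In $I$, swapping $R$ for $D$ changes the objective by $0$ (the leftover element of $C$ is later handled by a $2$-set), so this configuration is consistent with $APX$ being a local optimum; but in $I'$ the shrunken $R'$ is a $3$-row, and swapping it for $D$ gains $3$, so $APX'$ is not a local optimum. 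Your proposed lifting (``any improving step in $I'$ lifts to an improving step in $I$'') fails at exactly this point: restoring a removed row from $I'$ back to its larger version in $I$ can only increase $4t_4+t_3$ (by $3$ for each row restored from a $3$-row to a $4$-row), while the inserted sets gain nothing, so the strict inequality $4t_4+t_3<4s_4+s_3$ is lost. The same defect affects your phase-(iii) deletion (turning a $3$-row into a $2$-row erases its weight entirely). A secondary issue is that deletion yields only $\rho'\ge\rho$, not the $\rho'=\rho$ the lemma asserts, since $|SOL|$ drops by one while $|APX|$ need not drop at all.

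The paper's proof does the opposite: for each offending column $c$ it \emph{adds} a fresh base element $x_c$, extending $c$ to a larger column and simultaneously extending a distinct small row $r_c$ of $APX$ (a $2$- or $3$-row in Phase~1, specifically a $3$-row in Phase~2) to cover $x_c$. Because rows only grow, a removed row in $I'$ is worth at least as much as its source in $I$, and the bookkeeping
$4t_4+t_3 = 4t_4'+t_3'-m_3-3m_4 < 4s_4'+s_3'-m_3-3m_4 = 4s_4+s_3-2m_4 \le 4s_4+s_3$
runs in the right direction; moreover $|APX'|=|APX|$ and $|SOL'|=|SOL|$, so the ratio is preserved exactly. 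Note also that Phase~2 may exhaust the supply of $3$-rows before all $3$-columns are extended, which is precisely why the conclusion is the disjunction in (iii) rather than the outright elimination of $3$-columns.
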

\begin{proof}
Recall that $APX \cap SOL = \emptyset$ by assumption. We refer to $SOL$'s sets as columns.
Given $I$, we  construct the new instance $I'$ in two phases. In Phase $1$ we
eliminate the $2-$columns in $SOL$ (if any); in Phase $2$ we  try to eliminate the $3-$columns in it.
We begin by describing Phase $1$.
Denote by $n^{APX}_{\{2,3\}}$ the number of $2-$ and $3-$rows in $APX$, and by
$n^{SOL}_2$  the number of $2-$columns in $SOL$.
We may assume that $n^{APX}_{\{2,3\}} \ge 1$, otherwise both $APX$ and $SOL$ are
optimal solution (consisting entirely of $4-$sets).
We  show how to eliminate $min\{  n^{APX}_{\{2,3\}}  ,  n^{SOL}_2  \}$ $2-$columns
from $SOL$.  Thus, if $n^{SOL}_2 > n^{APX}_{\{2,3\}}$ we may recursively apply this
transformation to the resulting new instance, until (the new) $SOL$ contains no $2-$columns.
In addition, the approximation ratio, $\rho$, remains the same.

Let $\mathcal{C}$ be a collection of $min\{  n^{APX}_{\{2,3\}}  ,  n^{SOL}_2  \}$ $2-$columns
in $SOL$ (if $n^{SOL}_2 \le n^{APX}_{\{2,3\}}$ then it is unique).
Then for each $2-$column $c \in \mathcal{C}$, there exists a distinct $2-$ or $3-$row in $APX$ which
we denote by $r_c$.
Let $I'$ be the instance in which each $c \in \mathcal{C}$ is extended to a $3-$column
$c' \equiv c \cup \{x_c\}$ and $r_c \in APX$ is extended to $r_c' \equiv r_c \cup \{x_c\}$,
where $x_c$ is a distinct new base element corresponding to $c$.
These extended sets will be referred to as {\it new}.
Sets from which new sets were obtained will be called {\it source sets}.

Construct from $APX$ a feasible solution for $I'$ by replacing each source row
by the new row extending it. Denote the resulting collection by $APX'$.
Similarly, construct $SOL'$ from $SOL$ by replacing each source column in $SOL$ by the new
column extending it. That is, $SOL'$ contains new $3-$columns obtained from source $2-$columns in $SOL$ ;
$APX'$ contains new $3-$ and $4-$rows obtained from source $2-$ and $3-$rows in $APX$.

We show that $APX'$ is a local optimum in $I'$.
Suppose to the contrary that this is not so. Then there exist a row collection $T' \subseteq APX'$,
and a subset collection $S'$ consisting of columns and (possibly, by the subset-closure assumption) of
sub-rows of $T'$ satisfying: (i) $|T'| \le \frac{1}{\epsilon}$, and (ii) replacing $T'$ by $S'$ improves
the objective function value.
More specifically, for $j \in \{2,3,4\}$, denote by
$t_j'$ and $s_j'$ the number of $j-$sets in $T'$ and $S'$, respectively.
Then by assumption:
\begin{equation}  \label{contraLO}
4t_4' + t_3' < 4s_4' + s_3' \ .
\end{equation}
Let $T \subseteq APX$ consist of the source ($2-$ and $3-$)rows from which the new ($3-$ and $4-$)rows in $T'$ were obtained, and of all the remaining non-new rows in $T'$.
Similarly, let $S$ consist of the source ($2-$)columns from which the new ($3-$)columns
in $S'$ were obtained, and of all the remaining non-new columns in $S'$.
Let $m_3,m_4$ be the number of new $3-$,$4-$rows in $T'$, respectively (i.e., $S$ has $m_3+m_4$
source ($2-$)columns which were extended to new ($3-$)columns in $S'$). Thus,
\begin{equation}  \label{s_t_treat2columns}
t_4' = t_4 + m_4, \  t_3' = t_3 + m_3 - m_4,
 \  s_4' = s_4, \  s_3' = s_3 + m_3 + m_4 \ .
\end{equation}
Using (\ref{contraLO}) and (\ref{s_t_treat2columns}), we obtain:
\begin{displaymath}
\begin{array}{lll}
4t_4 + t_3 & = & 4(t_4' - m_4) + t_3' -m_3 + m_4
 =  4t_4' + t_3' - m_3 - 3 m_4   \\
& < &  4s_4' + s_3' - m_3 - 3 m_4
= 4s_4 + s_3 - 2m_4 \le 4s_4 + s_3 \ ,  \\
\end{array}
\end{displaymath}
that is, $4t_4 + t_3 < 4s_4 + s_3$. But this implies that the algorithm can replace $T$
by $S$ in $I$ and improve the objective function. This is a contradiction to $APX$ being a local
optimum in $I$. Finally, since $|APX'| = |APX|$ and $|SOL'| = |SOL|$, it follows that
$\rho' = \rho$. Thus, at the end of Phase $1$, properties (i),(ii) stated in the Lemma hold.

We now proceed to describe Phase $2$. The idea is similar to that of Phase $1$, but with two differences:
first, the new rows which are used to cover the new base elements in the new
($4-$)columns are only $4-$rows (extending $3-$rows in $APX$). (This is so because extending a $2-$row
in $APX$ to a $3-$row may result in a non-local optimum); second, let $n^{APX}_3$ ($n^{SOL}_3$) denote the
number of $3-$rows (columns) in $APX$ ($SOL$). Then this time, as opposed to what we did in Phase $1$,
if $n^{APX}_3 < n^{SOL}_3$, we cannot repeatedly perform the transformation on the new instance, since
it is possible for a local optimum to contain no $3-$rows. Thus, $SOL'$ - the new solution constructed
from $SOL$, is only guaranteed to have $min \{ n^{APX}_3 , n^{SOL}_3 \}$ less $3-$columns than $SOL$.

With a slight abuse of notation, we let $I$ denote the instance resulted from Phase $1$,
with $APX$ and $SOL$ its corresponding solutions,  and let  $I'$  denote the new instance which
we construct in this phase, with $APX'$ and $SOL'$ its corresponding solutions.

Let $\mathcal{C}$ be a collection of $min \{ n^{APX}_3 , n^{SOL}_3 \}$ $3-$columns
in $SOL$. Thus, for each $c \in \mathcal{C}$, there exists a distinct $3-$row in $APX$, denoted $r_c$.
Define $I'$ to be the instance in which each $c \in \mathcal{C}$ is extended to
the new $4-$column $c' \equiv c \cup \{x_c\}$ and $r_c \in APX$ is extended to the new
$4-$row $r_c' \equiv r_c \cup \{x_c\}$, for a new distinct element $x_c$.

As was done in Phase 1, construct $APX'$ ($SOL'$) from $APX$ ($SOL$) by replacing
source sets by the new sets extending them.
That is, $SOL'$ contains new $4-$columns extending source $3-$columns in $SOL$ ;
$APX'$ contains new $4-$rows extending source $3-$rows in $APX$.

We show that $APX'$ is a local optimum in $I'$.
If this is not the case, there exists $T' \subseteq APX'$, with
$|T'| \le \frac{1}{\epsilon}$ that can be replaced by a collection $S'$ consisting of columns
and subsets of rows, improving the objective function value.
That is, using the notation $t_j'$ and $s_j'$ from before, the inequality
(\ref{contraLO}) holds.

Let $T \subseteq APX$ ($S$) consist of the source $3-$sets in $APX$ ($SOL$) from
which the new sets in $T'$ $(S')$ were obtained from, and all the other non-new sets
in $T'$ ($S'$). Let $m$ be the number of new columns in $S'$, which is equal to that of the
new rows in $T'$. Thus,
\begin{equation}  \label{s_t}
t_4' = t_4+m, \ t_3' = t_3 - m,
 \    s_4' = s_4 + m,  s_3' = s_3 - m \ .
\end{equation}
Using (\ref{contraLO}) and (\ref{s_t}), we obtain:
\begin{displaymath}
4t_4 + t_3 = 4(t_4' - m) + t_3' + m
= 4t_4' + t_3' - 3 m
< 4s_4' + s_3' - 3 m
= 4s_4 + s_3 \ ,
\end{displaymath}
that is, $4t_4 + t_3 < 4s_4 + s_3$ - contradicting the fact the $APX$ is a local optimum in $I$.
Finally, we have $|APX'| = |APX|$, $|SOL'| = |SOL|$, implying that $\rho' = \rho$.

At the end of Phase $2$, the constructed instance $I'$ with its corresponding solutions
$APX'$ and $SOL'$ satisfy properties (i),(ii),(iii).
\end{proof}
  \\

Note that $4X_4 + X_3$, the objective function of Algorithm $A2$, does not take into account the number
of $2-$rows (as the algorithm only uses them to cover the remaining elements that were failed
to be covered by $3-$ or $4-$rows). This observation
motivates the following terminology, which we make solely for convenience: We will refer to
the base elements which are covered by $2-$rows as \emph{uncovered}.

Once again, we use a factor revealing LP to bound the approximation ratio of the algorithm.
That is, our goal is to formulate an LP whose objective function value is an upper
bound on the worst case approximation ratio of  $A2$ (denoted by $\rho$).
We treat each of the two instance types separately.

\subsection{Bounding $\rho$ in Type A-instances}
In this subsection we assume that the instance is of Type A, that is, $OPT$ consists
exclusively of $4-$columns, while there is no restriction on $APX$.
We  use the following notation:
\begin{definition}
For given $OPT$ and $APX$,
let $O_{i,j}$ be the set of columns in which $i$ elements are covered by $4-$rows
and $j$ elements are covered by $3-$rows, $0 \le i+j \le 4$, and let
$X_{i,j} \equiv \frac{|O_{i,j}|}{|OPT|}$ be the proportion of
$O_{i,j}$-columns in $OPT$.
\end{definition}
Observe that all $X_{i,j}$'s are non-negative and that they sum up to $1$. We would like to express
the objective function of set cover in terms of these new variables. We do so using a simple pricing
method: as each row of $APX$ costs $1$ and as the rows are disjoint, an element covered by an
$i-$row costs $1 \over i$, $i=2,3,4$. Thus, an $O_{i,j}$-column costs
\begin{equation} \label{cij}
c_{i,j} \equiv \frac{1}{4}i+\frac{1}{3}j+\frac{1}{2}(4-i-j) \ ,  \qquad      0 \le i+j \le 4 \ .
\end{equation}
Therefore:
\begin{displaymath}
|APX| = X_2+X_3+X_4  = \sum_{0 \le i+j \le 4} c_{i,j} |O_{i,j}|
= \sum_{0 \le i+j \le 4} c_{i,j} X_{i,j}|OPT| \ .
\end{displaymath}
Dividing by $|OPT|$ gives the approximation ratio of the given instance, which is
$\sum_{i,j} c_{i,j} X_{i,j}$. Thus, $\rho = \max_{\ I} \sum_{i,j} c_{i,j} X_{i,j}$
(the maximum taken over all legal instances), so our LP's objective is:
\begin{equation}  \label{objFunc}
\max \sum_{0 \le i+j \le 4} c_{i,j} X_{i,j} \ .
\end{equation}
In order to bound this function, we derive additional linear constraints. Our goal is to bound
the $X_{i,j}$'s with the highest $c_{i,j}$ coefficients. In light of our pricing scheme, this is
interpreted as not buying too many expensive columns. Starting by considering the most expensive ones,
the  following constraints are easy to establish:
\begin{lemma}  \label{zeroXij}
For any Type A-instance, $O_{0,0}, O_{0,1}, O_{0,2}, O_{0,3}, O_{1,0} = \emptyset$.
Equivalently, \\
$X_{0,0}, X_{0,1}, X_{0,2}, X_{0,3}, X_{1,0} = 0$.
\end{lemma}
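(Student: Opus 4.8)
The plan is to argue by contradiction in every case: if one of the listed classes $O_{i,j}$ is nonempty, pick a column $c$ in it and exhibit a single local search improvement step — removing at most three $3$- or $4$-rows and inserting one $3$- or $4$-set — that strictly increases the objective $4X_4+X_3$, contradicting the fact that $APX$ is a local optimum. (Recall that the move is legal provided at most $\frac{1}{\epsilon}$ rows are removed, the inserted set is available in the input, and the inserted set is disjoint from the surviving $3$- and $4$-rows; the $2$-row layer may always be repaired afterward, and since $4X_4+X_3$ ignores $2$-rows this repair is free.) The five cases split according to how many elements of $c$ are \emph{uncovered}, i.e.\ covered by $2$-rows.

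For $O_{0,0}$, $O_{0,1}$ and $O_{1,0}$ the column $c$ has at least three uncovered elements; pick any three of them, say $\{x,y,z\}$. By the subset-closure assumption $\{x,y,z\}\subseteq c\in OPT$ lies in the input, and since $x,y,z$ are covered only by $2$-rows it is disjoint from every $3$- and $4$-row of $APX$. Inserting $\{x,y,z\}$ as a new $3$-row, removing nothing, and adjusting the $2$-rows that covered $x,y,z$ (removing or shrinking them to singletons, available by subset closure) is a legal move with $t_3=t_4=s_4=0$ and $s_3=1$, so $4X_4+X_3$ increases by $1$; for $O_{0,0}$ one may instead insert the $4$-set $c$. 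Either way local optimality is contradicted, so these three classes are empty.

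For $O_{0,2}$ and $O_{0,3}$, the column $c$ has respectively two and three elements covered by $3$-rows. Let $R$ be the set of \emph{distinct} $3$-rows of $APX$ meeting $c$; since rows are disjoint, $|R|\le 2$ in the first case and $|R|\le 3$ in the second, so in particular $|R|\le\frac{1}{\epsilon}$ as $\epsilon\le\frac{1}{100}$. The move is: remove all rows of $R$ and insert the $4$-set $c$ (in the input by subset closure). Every element of $c$ was covered either by a $2$-row or by a row of $R$, so after the removal $c$ is disjoint from all surviving $3$- and $4$-rows; the leftover elements — those of $\bigcup R$ outside $c$, and those that shared a $2$-row with an element of $c$ — are recovered by $2$-rows and singletons without affecting $4X_4+X_3$. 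Thus $t_4=0$, $t_3=|R|\le 3$, $s_4=1$, $s_3=0$, and $4X_4+X_3$ changes by $4-|R|\ge 1>0$, again contradicting local optimality.

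I do not anticipate a genuine obstacle here: the only real work is the bookkeeping that in each case the proposed move is a bona fide improvement step of Algorithm~$A2$ (equivalently, of the contradiction hypothesis used in Lemma~\ref{Lemma_opt4sets}), and each of the requirements — availability of the inserted set, disjointness from surviving big rows, at most $\frac{1}{\epsilon}$ removed rows, repairability of the $2$-row layer — is immediate from the definition of $O_{i,j}$, the subset-closure assumption, and the fact that $4X_4+X_3$ counts only $3$- and $4$-rows. (One should also note why $O_{0,4}$ is \emph{not} on the list: a column with all four elements covered by $3$-rows may meet four distinct $3$-rows, and then the analogous move gives $4-4=0$, no improvement — confirming the statement is tight.)
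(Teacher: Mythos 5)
Your proof is correct and follows essentially the same strategy as the paper's: for each allegedly nonempty class, exhibit a local improvement step (legal by subset closure and row-disjointness) that strictly increases $4X_4+X_3$, contradicting local optimality of $APX$. The only difference is cosmetic — for $O_{0,0},O_{0,1},O_{1,0}$ you perform a pure insertion of the $3$-set of three uncovered elements, whereas the paper removes the intersecting rows and inserts the full $4$-column (plus, for $O_{1,0}$, the $3$-row subset of the removed $4$-row); both yield a strict improvement.
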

\begin{proof}
Consider $O_{0,i}$, $i=0,...,3$. If, by contradiction, $O_{0,i} \ne \emptyset$
for some $i$, then there exists a column $S$ with $i$ of its
elements covered by $3-$rows, and the other elements are uncovered.
Removing these $3-$rows from $APX$ and inserting $S$ would increase
$A2$'s objective function. Thus, $O_{0,i} = \emptyset$.
If $O_{1,0} \ne \emptyset$ then there exists a column
$S$ having one element covered by a $4-$row, which we denote by $R$,
and the other elements are uncovered. Removing $R$ from $APX$, inserting $S$
and the $3-$row subset of $R$: $R \setminus (R \cap S)$
(recall the subset closure assumption), would again, increase $A2$'s
objective function. In either case we obtained a contradiction to $APX$ being a
local optimum.
\end{proof}  \\

Among the remaining variables, the two $X_{i,j}$'s which have the
largest coefficients in the objective function of the LP are,
according to (\ref{cij}),  $X_{1,1}$, with $c_{1,1}=\frac{19}{12}$,
and $X_{2,0}$, with $c_{2,0}=\frac{3}{2}$. We would like to obtain
an upper-bound on them, using a linear inequality. For this purpose,
we use an {\it intersection graph}.\\

\noindent {\bf The intersection graph $G$}\\
With a little abuse of terminology we will refer to $APX$, $OPT$, and
to subsets of them, as both the sets of indices representing the
subsets of base elements, and the sets of vertices representing them
in the following graph.

For a given instance, let $G = (V,E)$ be a bipartite graph, in which
one partite is the set of all $3-$ and $4-$row members of $APX$, and the second
partite is $OPT$. For $u$ a $3-$ or $4-$row in $APX$ and $v \in OPT$ there are
 $l$ (parallel) edges connecting $u$ and $v$ if the intersection of (the subsets
represented by) $u$ and $v$ consists of $l$ base elements.
Thus, for $v \in O_{i,j}$, $deg_G(v) = i+j$.
$G$ is {\it the intersection graph
corresponding to $APX$ and $OPT$}, or, {\it the intersection graph
of the given instance}, where $APX$ is a local optimum and $OPT$ is an
optimal solution of that instance.

Let $G$ be an intersection graph of a given instance, and
let $F = (V(F),E(F))$ be any induced subgraph of $G$. Denote by $O_{i,j}^F$ the columns in $F$ which are in $O_{i,j}$,
and denote by $n_r^F$ and $n_c^F$ the number
of rows and columns in $F$, respectively (i.e., $n_r^F \equiv |V(F) \cap APX|$,
$n_c^F \equiv |V(F) \cap OPT|$). Also let $n_c \equiv n_c^G$ , $n_r \equiv n_r^G$.
Note that $n_c = |OPT|$, and that $n_r \le |APX|$
(due to the uncovered elements, i.e., those covered by $2-$rows).
Finally, $F$ is called {\it small} if $n_r^F \le \frac{1}{\epsilon} - 2$,
otherwise it is called {\it big}.
(The reason for defining small subgraphs as those of size at most $\frac{1}{\epsilon} - 2$
rather than $\frac{1}{\epsilon}$ will be clear in the sequel).

Throughout the rest of the paper, we use 'CC' as an abbreviation for 'connected component'.
We  analyze the performance of Algorithm $A2$ by considering $G$'s CC's.
Recall that when we stated the algorithm, we observed that it is optimal for instances in which
an optimal solution consists of $\frac {1} {2\epsilon}$ sets at the most. In terms of $G$, this is generalized
to small CC's:
\begin{lemma}  \label{smallCC}
Let $G$ be an intersection graph of a given instance, and
let $F$ be a small CC of $G$.
Then the base elements covered by $F$'s columns are covered optimally by
Algorithm $A2$, and $|O_{4,0}^F| = n_c^F$, implying that
$|O_{i,j}^F| = 0$ for all $(i,j) \ne (4,0)$.
\end{lemma}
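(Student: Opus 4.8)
The plan is to argue by contradiction: suppose $F$ is a small CC of $G$ but the base elements covered by $F$'s columns are \emph{not} covered optimally by $A2$. First I would observe that since $F$ is a CC, the rows in $V(F) \cap APX$ (i.e., the $3$- and $4$-rows meeting the columns of $F$) cover \emph{exactly} the base elements covered by $F$'s columns plus possibly some elements covered by $2$-rows --- wait, more carefully: every $3$- or $4$-row that intersects a column of $F$ lies entirely inside $F$ (as $F$ is a connected component), so all the ``covered-by-$3$-or-$4$-rows'' elements of $F$'s columns are covered by rows in $F$, and these rows cover no elements outside $F$'s columns except possibly uncovered-type elements. The key point is that $|V(F) \cap APX| = n_r^F \le \frac{1}{\epsilon} - 2 < \frac{1}{\epsilon}$, so the \emph{entire} set of $3$- and $4$-rows of $APX$ that touch $F$ can be removed in a single local search step.

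Next I would exhibit the improving move. Since the columns of $F$ form a feasible sub-cover (by $4$-columns, since we are in Type A) of exactly the elements they span, and the elements of $F$'s columns that are currently ``uncovered'' (covered by $2$-rows in $APX$) can be handled too: consider removing all $n_r^F \le \frac{1}{\epsilon}-2$ many $3$- and $4$-rows of $F$ from $APX$, and inserting instead all $n_c^F$ columns of $F$ (which are $4$-sets). This covers every element previously covered by those removed rows. The new objective contribution from the inserted sets is $4 n_c^F$ (all $4$-rows now), while the old contribution was $4 t_4 + t_3$ where $t_4 + t_3 = n_r^F$ counts the removed rows and $t_3 \ge 0$. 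If the current covering of $F$'s columns is suboptimal, then $|APX|$ restricted to $F$ exceeds $n_c^F$, i.e. $t_2 + t_3 + t_4 > n_c^F$ where $t_2$ is the number of $2$-rows touching $F$'s columns; since the inserted $n_c^F$ $4$-columns cover all of $F$ with $n_c^F$ sets, and $4n_c^F \ge 4t_4 + t_3 + \ldots$, one checks $4 n_c^F - (4 t_4 + t_3) \ge 4n_c^F - 4(t_4 + t_3) = 4(n_c^F - t_4 - t_3) \ge 0$, with strict inequality unless $t_4 = n_c^F$ and $t_3 = 0$, i.e. unless $F$ is already covered by exactly $n_c^F$ $4$-rows matching its columns, which forces $|O_{4,0}^F| = n_c^F$. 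The main obstacle --- and the reason ``small'' is defined with the $-2$ slack --- is being careful about the move size: I may need to insert a few extra rows or use subsets to patch up elements that were shared between an $F$-row and the boundary, but since $F$ is a full connected component there is no boundary, so the bound $n_r^F \le \frac{1}{\epsilon}-2 \le \frac{1}{\epsilon}$ already suffices and the slack is reserved for later lemmas, not this one.

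Finally, I would conclude: local optimality of $APX$ forbids the improving move just described, so the covering of $F$'s columns must already be optimal, and the equality analysis above forces every column of $F$ to be in $O_{4,0}$, i.e. $|O_{4,0}^F| = n_c^F$ and $|O_{i,j}^F| = 0$ for $(i,j) \ne (4,0)$. I expect the only delicate point to be verifying that the replacement move is genuinely \emph{within} the allowed neighborhood (at most $\frac{1}{\epsilon}$ rows removed) and that it strictly improves $4X_4 + X_3$ whenever the sub-cover is suboptimal --- both follow from $n_r^F \le \frac1\epsilon - 2$ and the arithmetic $4n_c^F > 4t_4 + t_3$ whenever $(t_4,t_3) \ne (n_c^F, 0)$, using that any feasible sub-cover of $4|O^F|$ elements by $3$- and $4$-sets uses at least $n_c^F$ sets.
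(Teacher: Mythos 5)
Your approach is the same as the paper's: since $F$ is a connected component with $n_r^F \le \frac{1}{\epsilon}-2 \le \frac{1}{\epsilon}$, every element touched by a $3$- or $4$-row of $F$ lies in a column of $F$, so the algorithm may legally swap all of $F$'s rows for all of $F$'s ($4$-)columns, and local optimality forces this swap to be non-improving; the paper states exactly this and then invokes the Type~A assumption.

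There is, however, one step in your arithmetic that fails as written. You bound $4n_c^F - (4t_4+t_3) \ge 4(n_c^F - t_4 - t_3) \ge 0$, which presumes $t_3 + t_4 \le n_c^F$. That is not guaranteed: the rows of $F$ need not be fewer than its columns (e.g.\ three $4$-columns whose $12$ elements are covered by four disjoint $3$-rows give $t_3+t_4 = 4 > 3 = n_c^F$), so the quantity $4(n_c^F - t_4 - t_3)$ can be negative and your equality analysis (``strict unless $t_4 = n_c^F$, $t_3 = 0$'') is not justified by this chain. The correct route is the element count: the rows of $F$ are disjoint and all their elements lie in $F$'s columns, so $4t_4 + 3t_3 \le 4n_c^F$, whence $4t_4 + t_3 \le 4t_4 + 3t_3 \le 4n_c^F$; local optimality forces equality throughout, the first equality gives $t_3 = 0$ and the second gives $t_4 = n_c^F$ with every element of every column covered by a $4$-row, i.e.\ $|O_{4,0}^F| = n_c^F$, which is also exactly the statement that these elements are covered optimally. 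With that one-line repair your proof is complete and coincides with the paper's.
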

\begin{proof}
The algorithm, which has no access to $G$, performs local
improvement steps on collections of $3-$ and $4-$rows of size at most
$\frac{1}{\epsilon}$. Thus, it can remove all the $n_r^F \le \frac{1}{\epsilon}$
rows of $F$ and replace them with $F$'s columns, which optimally cover the base
elements in this CC. The rest of the claim follows from the fact that the instance is of Type A.
\end{proof} \\

Our goal is to upper-bound $A2$'s approximation ratio.
Since the following analysis can be performed componentwise on each of $G$'s CC's,
Lemma \ref{smallCC} implies that small CC's in $G$ can only improve the algorithm's performance,
decreasing its approximation ratio.
Thus, we may assume, without loss of generality:
\begin{assumption}  \label{Gconnected}
The intersection graph $G$ is connected and big.
\end{assumption}

We now turn to deal with $X_{2,0}$  and  $X_{1,1}$.
We derive a linear inequality in the $X_{i,j}$ variables which will be an
additional constraint in the LP that we construct. It is derived
using a special graph, which we construct in two stages.\\

\noindent {\bf The $H$ subgraph}\\
We define the following subgraph of $G$, which we refer to as {\it the $H$
subgraph}: it is the subgraph of $G$ induced by the set of $O_{1,1}$ and $O_{2,0}$
columns and the set of $4-$rows which intersect at least one $O_{1,1}$ or $O_{2,0}$ column.
See an example in Figure \ref{exampleH}.

\begin{figure} [t!]\vspace{-0.4in} %
\hspace{1.15in}
\begin{center}
\subfigure
[An instance ] 
{ \includegraphics[angle=270,width=0.45\textwidth]
{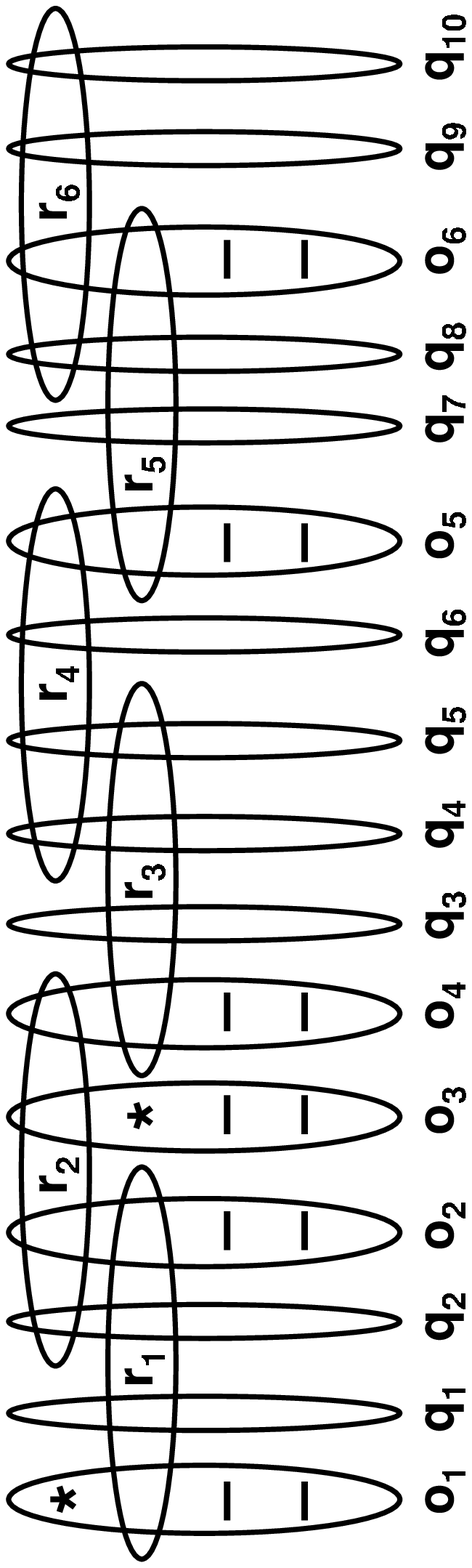}  }
\subfigure
[The corresponding $H$]{  \includegraphics[angle=270,width=0.45\textwidth]{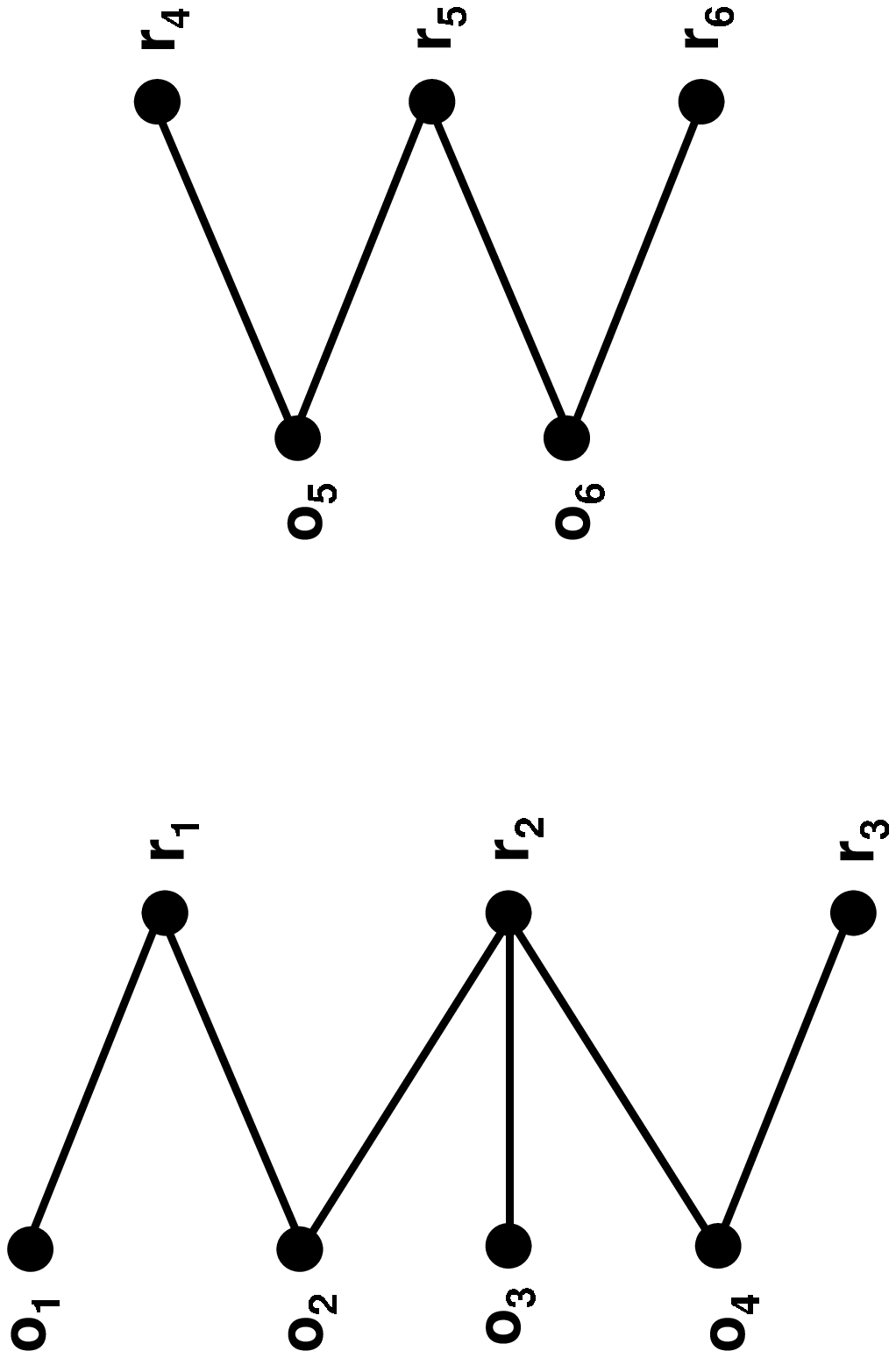}  }
\vspace{0.4in}
\caption{An example of an instance and the corresponding $H$ subgraph. In (a), the given
instance is shown, where only the $4-$rows intersecting $O_{2,0}\cup O_{1,1}$ are included.
A '-' stands for an uncovered base element, and a '*' stands for an element covered by a $3-$row.
Thus, $o_1 , o_3 \in O_{1,1}$, $o_2 , o_4 , o_5 , o_6 \in O_{2,0}$, and $q_1,...,q_9 \notin O_{1,1} \cup O_{2,0}$.
Also observe that $r_1 , r_5 \in R_{\{2\}}^H$, $r_3 , r_4 , r_6 \in R_{\{1\}}^H$, and $r_2 \in R_{\{3\}}^H$.}
\label{exampleH}
\end{center}
\end{figure}

Observe that $H$ need not be connected (as opposed to $G$, by Assumption \ref{Gconnected}).
Also observe that since the only rows in $H$ are $4-$rows, each $O_{1,1}$ vertex
has a single neighbor in $H$ (i.e., the $4-$row intersecting it). We record this fact for
future reference:
\begin{lemma}   \label{O11_leafH}
Each $O_{1,1}$ vertex is a leaf in $H$.
\end{lemma}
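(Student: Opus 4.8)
The plan is to read off the degree of an $O_{1,1}$ vertex directly from the definition of the intersection graph, and then to check which of its incident edges survive the passage to the induced subgraph $H$. Recall that in $G$ a column $v \in O_{i,j}$ has $deg_G(v) = i+j$, where the $i$ edges (counted with multiplicity) run to $4$-rows and the $j$ edges run to $3$-rows. Hence for $v \in O_{1,1}$ we have $deg_G(v) = 2$: exactly one base element of $v$ is covered by a $4$-row and exactly one by a $3$-row. Since each of these two rows shares only a single base element with $v$, there are no parallel edges, so $v$ has precisely one $4$-row neighbour, which I will call $r_4$, and precisely one $3$-row neighbour, which I will call $r_3$.

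Next I would invoke the definition of $H$: its vertex set consists of the $O_{1,1}$ and $O_{2,0}$ columns together with every $4$-row that intersects at least one such column, and $H$ is the subgraph of $G$ induced on this set. In particular both $v$ and $r_4$ lie in $V(H)$ — the former as an $O_{1,1}$ column, the latter as a $4$-row meeting the $O_{1,1}$ column $v$ — whereas $r_3$, being a $3$-row, is not a vertex of $H$. Consequently, of the two edges incident to $v$ in $G$, the one to $r_4$ is retained in the induced subgraph and the one to $r_3$ is deleted.

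Therefore $deg_H(v) = 1$; that is, $v$ is a leaf of $H$ with (unique) neighbour $r_4$. As $v$ was an arbitrary $O_{1,1}$ vertex, the claim follows. There is no genuine obstacle here; the one point worth stating carefully is that the surviving degree is exactly $1$ and not $0$, which is guaranteed precisely because the definition of $H$ includes \emph{all} $4$-rows that intersect an $O_{1,1}$ or $O_{2,0}$ column, so the $4$-row neighbour of $v$ is never discarded.
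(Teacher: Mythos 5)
Your proof is correct and is essentially the paper's own justification: the paper disposes of this lemma with the one-line observation that since the only rows in $H$ are $4$-rows, each $O_{1,1}$ column retains only the single edge to the $4$-row intersecting it. You have simply written out that observation in full detail (computing $deg_G(v)=2$ and noting which of the two incident edges survives the induced subgraph), which is fine.
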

\noindent
For any subgraph $F$ of $H$, let $\Delta(F)$ denote the maximum degree of a
vertex in $F$. In addition, for $A \subseteq \{0,...,4\}$,
let $R_A^F$ denote the set of row vertices in $F$ of degree $i$ in $F$ for some
$i \in A$. 

We start by investigating the number of $O_{1,1}$ vertices in $H$. The following
result implies that there cannot be too many of them:
\begin{lemma}   \label{bigCC}
Let $c,d$ be two distinct column vertices in $O_{1,1}$ which belong to the same CC
of $H$. Then every $c-d$ path $P$ in $H$ has $n_r^P \ge \frac{1}{\epsilon}-1$
row vertices.
\end{lemma}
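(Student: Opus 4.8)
The plan is to argue by contradiction: suppose there is a $c$–$d$ path $P$ in $H$ with $n_r^P \le \frac{1}{\epsilon}-2$, i.e. $P$ uses at most $\frac{1}{\epsilon}-2$ row vertices. Since all rows of $H$ are $4$-rows, and by Lemma \ref{O11_leafH} each of $c,d$ is a leaf of $H$ whose single neighbour is a $4$-row, the path $P$ alternates between $4$-rows and columns, starting and ending at an $O_{1,1}$ column. I would consider the set $T$ of all row vertices occurring on $P$ — at most $\frac{1}{\epsilon}-2$ of them, so certainly $|T| \le \frac{1}{\epsilon}$, a legal candidate for one local-search improvement step — and show that replacing $T$ by a suitable collection $S$ of columns (together with sub-rows of the removed rows, using the subset-closure assumption) strictly increases $4X_4 + X_3$, contradicting local optimality of $APX$.

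The key construction is the choice of $S$. The rows in $T$ are $4$-rows, so removing them uncovers exactly $4|T|$ base elements (rows are disjoint). Among those uncovered elements are: the two elements of $c$ that lay in columns of $OPT$ adjacent along $P$ — more precisely, walking along $P$ I would pick up, for each internal column $v$ on $P$, the two edges of $P$ incident to $v$, which correspond to two distinct base elements of $v$ that were covered by rows of $T$; and for the two endpoint columns $c,d \in O_{1,1}$, the single $P$-edge incident to each corresponds to one base element (the other covered element of an $O_{1,1}$ column sits in a $3$-row, outside $T$). I would then take $S$ to consist of all columns appearing on $P$, plus, for the leftover uncovered elements of each removed $4$-row that are not covered by these chosen columns, the appropriate sub-rows (which are $3$-rows or smaller) of the removed rows. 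Let $P$ have $\ell$ internal columns; then it has $\ell+1$ rows and $2$ endpoint columns, so $n_r^P = \ell+1$. Inserting the $\ell+2$ columns of $P$ covers $4\ell + 2$ of the $4(\ell+1)$ newly-uncovered elements (two per removed row in the "middle", but at the two ends a removed row has three uncovered elements going to a sub-row and one to an endpoint column). Covering the remaining $4(\ell+1) - (4\ell+2) = 2$ elements requires replacing the removed $4$-rows by sub-rows: this can be arranged so that the sub-rows used are collectively one fewer in "big-set weight" than the rows they replace — specifically the two endpoint rows each lose their status and contribute, in the worst case, a $3$-row. A careful bookkeeping of $4s_4 + s_3$ versus $4t_4 + t_3$ then shows $S$ strictly improves the objective: intuitively each inserted column of $P$ is either a $4$-column (Type A gives only $4$-columns, each weight $4$ in the packing objective) contributing $4$, while the removed rows contributed only $4$ each as well, but the $O_{1,1}$ endpoints let us "save" a $3$-row's worth, tipping the inequality.

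The main obstacle I anticipate is the exact accounting of which base elements of each removed $4$-row get re-covered by a chosen column versus by a residual sub-row, and verifying that the residual sub-rows can always be chosen as genuine members of the (subset-closed) input without accidentally re-covering an element twice or leaving one uncovered. This is the same style of argument already used in the proof of Lemma \ref{Lemma_opt4sets}, so I would model the inequality manipulation on (\ref{contraLO})–(\ref{s_t}): introduce $t_j, s_j$ for the $j$-set counts in $T,S$, express them via $\ell$ and the number of new columns, and reduce to showing $4t_4 + t_3 < 4s_4 + s_3$. The only subtlety beyond Lemma \ref{Lemma_opt4sets} is that here we are inside $H$ where all rows are $4$-rows and all columns are $4$-columns, which actually simplifies the count; the real content is that the two $O_{1,1}$ leaves each buy us the extra slack needed, and that $|T| = n_r^P \le \frac{1}{\epsilon}-2 < \frac{1}{\epsilon}$ makes the swap admissible. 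Since $APX$ is a local optimum this contradiction proves $n_r^P \ge \frac{1}{\epsilon}-1$.
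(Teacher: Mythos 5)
There is a genuine gap, and it is exactly at the point your proposal waves at with ``the two $O_{1,1}$ endpoints let us save a $3$-row's worth.'' You take $T$ to be only the $4$-rows on $P$ and explicitly leave the $3$-rows covering the second elements of $c$ and $d$ \emph{outside} $T$. But then you cannot insert the full columns $c$ and $d$ (each intersects a $3$-row that remains in the cover), only their $3$-column subsets. Counting correctly: with $\ell$ internal $O_{2,0}$ columns and $\ell+1$ removed $4$-rows, each removed row has exactly two of its four elements re-covered by the inserted path columns (one in each adjacent column of $P$), so the leftover pieces are $2$-subsets, which contribute nothing to $4X_4+X_3$ and cannot even be inserted by the algorithm. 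The swap therefore yields $4s_4+s_3 = 4(\ell-1)+2 = 4\ell-2$ against $4t_4+t_3 = 4(\ell+1)$ \emph{removed} -- wait, $t$ counts only the $\ell+1$ path rows, giving $4(\ell+1)$ versus $4(\ell-1)+2$, i.e.\ the objective strictly \emph{decreases}. Your intermediate tallies ($4\ell+2$ elements re-covered, only $2$ elements left over) are also off: only $2\ell+2$ of the $4(\ell+1)$ freed elements lie in path columns, and $2\ell+2$ are left over. So no contradiction with local optimality is obtained, and the argument as proposed fails.

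The paper's proof supplies the missing idea: it \emph{extends} the path to $P'=(r,c,r_1,\dots,r_l,d,s)$, where $r$ and $s$ are the $3$-rows meeting $c$ and $d$, and puts $r$ and $s$ \emph{into} the removed set. Then the $\ell+1$ path columns (all of $c,c_1,\dots,c_{l-1},d$, inserted as full $4$-columns) give $4s_4+s_3=4(\ell+1)$ against $4t_4+t_3=4\ell+2$ for the removed $\ell$ $4$-rows plus two $3$-rows -- a strict gain of $2$. This is also precisely why the bound is $\frac{1}{\epsilon}-1$ rather than $\frac{1}{\epsilon}+1$: the removal set has $n_r^P+2$ rows, so the swap is admissible exactly when $n_r^P\le\frac{1}{\epsilon}-2$. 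Finally, the paper must (and does) rule out the degenerate case $r=s$, where $P'$ would be a cycle, by a separate small exchange (remove $r$, insert the two $3$-column subsets $c\setminus r_1$ and $d\setminus r_l$); your proposal does not address this case. To repair your argument you would need to add $r$ and $s$ to $T$, redo the counting as above, and treat $r=s$ separately.
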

\begin{proof}
We may assume that $c$ and $d$ are connected by a (simple) path $P$ of minimum length among the paths
in $H$ connecting a pair of $O_{1,1}$ vertices.
Let $P \equiv (c,r_1,c_1,...,r_{l-1},c_{l-1},r_l,d)$.
By Lemma \ref{O11_leafH}, the $O_{1,1}$ vertices are leaves in $H$.
Therefore, the vertices $c_1,...,c_{l-1}$ are $O_{2,0}$ columns, and $r_1,...,r_l$ are
$4-$rows.
Denote by $r$ and $s$ the $3-$row neighbors in $G$ of $c$ and $d$,
respectively, and define $P' \equiv (r,c,r_1,c_1,...,r_{l-1},c_{l-1},r_l,d,s)$.
Observe that $P'$ cannot be a cycle: if $r=s$, then removing (the $3-$row) $r$
from $APX$, and inserting the two $3-$column subsets $c \setminus r_1$ (i.e.,
$c$'s two uncovered base elements and the singleton $c \cap r$)
and $d \setminus r_{l}$ increases the objective function by $1$,
which is a contradiction to $APX$ being a local optimum.
Thus, $P'$  is path from $r$ to $s$.
If $n_r^{P'} \le \frac{1}{\epsilon}$, the algorithm
can replace the rows $r,r_1,...,r_l,s$ with the
columns $c,c_1,...,c_{l-1},d$, again increasing its objective function,
which is a contradiction.
Thus, $n_r^{P'} \ge \frac{1}{\epsilon} +1$, implying
$n_r^P =  n_r^{P'} - 2 \ge \frac{1}{\epsilon} - 1$.
\end{proof}
\begin{corollary}  \label{smallCCOne11}
Every small CC of $H$ has at most one $O_{1,1}$ vertex.
\end{corollary}
As for big CCs, we have:
\begin{lemma}   \label{Voronoi}
Let $F$ be a big CC of $H$. Then $|O_{1,1}^F| = O(\epsilon)n_r^F$.
\end{lemma}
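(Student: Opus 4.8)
The plan is a packing / Voronoi argument: Lemma~\ref{bigCC} forces the $O_{1,1}$-vertices of a connected component $F$ of $H$ to be pairwise far apart, so a big component cannot contain many of them. Write $D:=\tfrac1\epsilon-1$, so that ``$F$ big'' means $n_r^F\ge D$, and let $v_1,\dots,v_m$ be the $O_{1,1}$-vertices of $F$. If $m\le 1$ there is nothing to prove, since $m\le 1\le n_r^F/D=\tfrac{\epsilon}{1-\epsilon}n_r^F=O(\epsilon)\,n_r^F$; so I would assume $m\ge 2$. First I would record that the $v_i$ are pairwise far apart in $F$: by Lemma~\ref{O11_leafH} each $v_i$ is a leaf of $H$, hence adjacent to a single $4$-row, and two distinct $v_i,v_j$ cannot be adjacent to the same $4$-row $r$ (else $(v_i,r,v_j)$ is a $v_i$–$v_j$ path in $H$ with one row, contradicting Lemma~\ref{bigCC} since $\tfrac1\epsilon-1>1$). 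More importantly, by Lemma~\ref{bigCC} every $v_i$–$v_j$ path in $F\subseteq H$ has at least $D$ row vertices; since such a path alternates column–row–$\cdots$–row–column, a path with $k$ rows has $2k$ edges, so the graph distance $d_F$ in the connected graph $F$ satisfies $d_F(v_i,v_j)\ge 2D$ for all $i\ne j$.

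Next I would set up the Voronoi partition. Assign each row vertex $r$ of $F$ to an index $\sigma(r)\in\{1,\dots,m\}$ minimizing $d_F(r,v_{\sigma(r)})$ (ties broken by smallest index), and put $C_i:=\{\,r:\sigma(r)=i\,\}$; the $C_i$ are disjoint and cover all rows of $F$, so $n_r^F=\sum_{i=1}^m|C_i|$. The key claim is that every row $r$ with $d_F(r,v_i)<D$ lies in $C_i$: if instead $\sigma(r)=j\ne i$, then $d_F(r,v_j)\le d_F(r,v_i)<D$, whence $d_F(v_i,v_j)\le d_F(v_i,r)+d_F(r,v_j)<2D$, contradicting the previous paragraph.

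Finally I would show each cell is large. Fix $i$; since $m\ge 2$ there is $v_j$ ($j\ne i$) with $d_F(v_i,v_j)\ge 2D$, so a shortest $v_i$–$v_j$ path $v_i=u_0,u_1,\dots,u_L=v_j$ has length $L\ge 2D$. As $v_i$ is a column and $F$ is bipartite, $u_t$ is a row exactly when $t$ is odd, and $d_F(v_i,u_t)=t$; hence the $u_t$ with $t$ odd and $t<D$ form at least $\frac{1}{2\epsilon}-1$ distinct rows of $F$, all within distance $<D$ of $v_i$, hence all in $C_i$. Therefore $|C_i|\ge\frac1{2\epsilon}-1$ for every $i$, and summing gives $n_r^F\ge m\bigl(\frac1{2\epsilon}-1\bigr)$, i.e. $m\le \dfrac{n_r^F}{\frac1{2\epsilon}-1}=\dfrac{2\epsilon}{1-2\epsilon}\,n_r^F=O(\epsilon)\,n_r^F$, which is the assertion.

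I expect the only delicate point to be the bookkeeping in passing between ``number of rows on a path'' (the quantity bounded by Lemma~\ref{bigCC}) and graph distance in the bipartite graph $F$, together with checking that the Voronoi cells are genuinely disjoint and exhaustive; everything else is a routine triangle-inequality and connectivity argument. One should also verify the boundary constants — e.g. $D=\tfrac1\epsilon-1\ge 99$ for $\epsilon\le\tfrac1{100}$ and $\tfrac1{2\epsilon}-1>0$ — so that no degenerate case arises.
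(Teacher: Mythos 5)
Your proposal is correct and follows essentially the same route as the paper: a Voronoi partition of $F$ with the $O_{1,1}$ columns as centers, using Lemma~\ref{bigCC} to force the centers pairwise far apart so that each cell contains $\Omega(1/\epsilon)$ rows, whence $|O_{1,1}^F|\le O(\epsilon)n_r^F$. You merely fill in the bookkeeping (translating ``number of rows on a path'' into graph distance and verifying disjointness of the cells) that the paper's terser proof leaves implicit.
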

\begin{proof}
Assume that $|O_{1,1}^F| > 1$, otherwise the claim is trivial.
Construct a Voronoi diagram on the set of $F$'s vertices, with centers being
its $O_{1,1}$ columns. By Lemma \ref{bigCC}, any path connecting two distinct
such centers has at least $\frac{1}{\epsilon} - 1$ row vertices. Therefore, each Voronoi
cell contains at least $\lfloor \frac{1}{2}(\frac{1}{\epsilon} - 1) \rfloor$
vertices. Thus, $n_r^F \ge |O_{1,1}^F| \lfloor \frac{1}{2}(\frac{1}{\epsilon}
- 1) \rfloor$, implying $|O_{1,1}^F| = O(\epsilon)n_r^F$.
\end{proof}

Thus, the $O_{1,1}$ vertices are "negligible" in $H$, both in small and big components.
We proceed to investigate the number of $O_{2,0}$ vertices.
We specify two useful properties of $H$: the first states that
small CCs are either double edges (i.e., two parallel
edges between a pair of vertices), cycles, or trees,
and the second is a characterization of a local optimum.
\begin{lemma} \label{smallCC_H}
Every small CC of $H$ is either a double edge, a cycle, or a tree.
\end{lemma}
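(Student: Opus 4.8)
The plan is to argue by contradiction: suppose a small CC $F$ of $H$ is none of these three shapes. Since $F$ is a multigraph, the only way it can fail to be a tree, a cycle, or a double edge is that it contains a subgraph which is ``more than unicyclic'': either two distinct cycles (sharing a vertex or joined by a path), or a single cycle together with a pendant path, or a double edge attached to something else, or a theta-like configuration. In every such case $F$ contains two (edge-disjoint, or appropriately chosen) cycles, or a cycle plus an extra incident edge. I would first reduce all of these to the existence, inside $F$, of a closed walk $W$ using the row vertices $r_1,\dots,r_m$ of some subcollection with $m\le n_r^F\le \frac1\epsilon-2$, such that the columns of $F$ touched by $W$ collectively cover \emph{strictly more} than $m$ of the base elements of those columns via $H$'s $4$-rows — equivalently, the multiset of edges of $H$ incident to $\{r_1,\dots,r_m\}$ already supplies $\ge m+1$ usable ``column slots''.

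The key local-optimality computation is the one already used in Lemmas 4.4 and 4.6: a $4$-row $r$ in $H$ contributes $4$ to the objective $4X_4+X_3$, and the columns it meets are $O_{2,0}$ or $O_{1,1}$ columns each of which, when bought as a $3$-column subset (using the subset-closure assumption: take the intersection elements together with the column's uncovered elements), contributes $3$ on its own, or $4$ if it can be completed to a full $4$-column. So if we can remove a set $R'$ of $t\le \frac1\epsilon$ rows of $H$ and in their place insert $t'$ columns (as $3$- or $4$-sets) with $4t < $ (gain), we contradict $APX$ being a local optimum. Concretely: in a small CC that is not a tree, cycle, or double edge, a short cycle through $\ell$ rows and $\ell$ $O_{2,0}$-columns already lets us swap those $\ell$ rows for those $\ell$ columns \emph{at no loss}, and the extra cyclomatic structure (second cycle, or pendant $O_{1,1}$-leaf, or an $O_{2,0}$-column meeting a third row inside $F$) provides one additional $3$-column that can be inserted for free, giving a strict improvement. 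The ``$-2$'' slack in the definition of small (Lemma's parenthetical remark, and the way $P'$ was extended by its two $3$-row endpoints $r,s$ in Lemma 4.5) is exactly what lets us also bring in the two $3$-row neighbors of any $O_{1,1}$ leaves involved without exceeding the budget $\frac1\epsilon$.

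In more detail, I would carry out the steps in this order. (1) Since $F$ has only $4$-row vertices and every $O_{1,1}$ column is a leaf (Lemma 4.4 / Lemma on $O_{1,1}$ leaves), strip the $O_{1,1}$ leaves: what remains, $F'$, has all column vertices in $O_{2,0}$, hence every column of $F'$ has degree exactly $2$ in $G$ and at most $2$ in $H$. A multigraph in which one side (columns) all have degree $\le 2$ and which is not a tree, a cycle, or a double edge must contain either a vertex (necessarily a row) of degree $\ge 3$ lying on a cycle, or two vertex-disjoint cycles connected by a path. (2) In the first case, let $C$ be the cycle ($\ell$ rows, $\ell$ $O_{2,0}$-columns, $\ell\le \frac1\epsilon-2$) and let $r^\*$ be the row on $C$ with an extra edge into an $O_{2,0}$-column $c^\*$ not on $C$ (or into a further $O_{1,1}$-leaf): removing the $\ell$ rows of $C$ and inserting the $\ell$ $O_{2,0}$-columns of $C$ as $4$-columns keeps the objective unchanged, but now $c^\*$ has an uncovered-by-$H$ slot usable as a $3$-column — more carefully, one reroutes so that $r^\*$'s budget frees a $3$-set gain of $1$ — contradiction. (3) In the two-cycles case, swap both cycles' rows for their columns ($\le \frac1\epsilon-2$ rows total since $F$ is small) and use the connecting path's endpoint column to gain an extra $3$-set; the arithmetic is $4t_4+t_3$ strictly increasing exactly as in Lemma 4.6. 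The main obstacle is bookkeeping: making the row-budget count airtight when $O_{1,1}$ leaves are present (each costs us its $3$-row neighbor outside $H$, as in Lemma 4.5), and checking that every exceptional topology really does yield the ``$+1$'' free $3$-column; I expect to handle this by a short case analysis on the cyclomatic number and leaf structure of the small CC, invoking the $\frac1\epsilon-2$ slack each time. With the contradiction reached in every case, $F$ must be a double edge, a cycle, or a tree. $\qed$
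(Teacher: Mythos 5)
Your proposal is correct and follows essentially the same route as the paper: swap the rows of a cycle (or double edge) for its $O_{2,0}$-columns at no change in $4X_4+X_3$, and use the extra incident edge to insert one additional $3$-column subset for a net gain of $1$, contradicting local optimality. The paper organizes this more economically --- it simply shows that a double edge or a cycle cannot occur as a \emph{proper} subgraph of a small CC, which subsumes all of your topological cases at once, and it never needs the $\frac{1}{\epsilon}-2$ slack or the $3$-row neighbours of $O_{1,1}$ leaves, since the swap removes only the cycle's $4$-rows.
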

\begin{proof}
We  prove the claim by showing that a small CC of $H$ cannot include
a double edge or a cycle as a proper subset. Thus, any small CC which is
not a double edge or a cycle must be a tree.

We start by showing that two vertices that are connected by a double edge have
no other neighbors in $H$, implying that a CC of $H$ cannot include
a double edge as a proper subset.
Suppose that a column $c$ and a row $r$ are connected by a double edge.
Since, by Lemma \ref{O11_leafH}, the $O_{1,1}$ vertices are leaves, it follows that
$c \in O_{2,0}$. Thus, $|r \cap c|=2$ (i.e., $r$ covers two base elements of $c$),
and $c$ has no neighbors other than $r$.
So suppose to the contrary that $r$ has an additional neighbor $d \ne c$.
If $r$ covers two elements of $d$, then replacing $r$ with
$c,d$ produces a better solution, which is a contraction.
Otherwise, $|r \cap d| = 1$ and $d$ has an additional neighbor, which we denote by $s$.
Then removing $r$ and inserting $c$ and $d \setminus (d \cap s)$ (i.e., the $3$-row subset
of $d$ consisting of $d$'s two uncovered elements and the singleton $r \cap d$)
again produces a better solution, which is a contradiction.

In order to complete the proof, we  show that a small cycle has no neighbors
outside it, again, implying that a CC of $H$ cannot include it
as a proper subset.
Let $C$ be a small cycle in $H$. We show that for each vertex in $C$, its neighbors
in $H$ are precisely its two neighbors in $C$.
Again, since $O_{1,1}$ vertices are leaves (by Lemma \ref{O11_leafH}),
it follows that $C$'s vertices alternate between rows and $O_{2,0}$-columns. By
definition, each $O_{2,0}$ column has exactly two $(4-)$row
neighbors, hence, they are in $C$. As for the rows of $C$, suppose to the contrary that
there exists a $(4-)$row vertex $r \in C$ that has a neighbor $c \in H \setminus C$.
First observe that $r$ cannot be connected to $c$ by a double edge since in that case,
as we just proved, that double edge is by itself a CC, which is a contradiction.
Thus $r$ covers a single base element of $c$. Let $c'$ be the $3-$column subset of $c$
consisting of $c$'s two uncovered base elements and (the singleton) $r \cap c$.
As $n_r^C \le \frac{1}{\epsilon}$, the following local step
can be applied: remove $C$'s rows from the current solution and
insert $C$'s columns and $c'$.
The number of $4-$sets in the new
solution is the same, while the number of $3-$sets increases by one.
Thus, this step is a local improvement one, which is a contradiction.
\end{proof}\\
\begin{lemma} \label{rowLeavesInT}
Let $T$ be a small subtree of $H$.
(i) If all the leaves in $T$ are $(4-)$row vertices, then their number, $|R_{\{1\}}^T|$, is at most $4$.
(ii) If $T$ has exactly $|R_{\{1\}}^T| = 4$ leaves, then $T$ contains no $O_{1,1}$ vertices.
\end{lemma}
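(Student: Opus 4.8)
The plan is to derive both parts at once from the inequality $|R_{\{1\}}^T|+\lambda\le 4$, where $\lambda$ denotes the number of column vertices that are leaves of $T$. I would obtain this by describing a single candidate local improvement step, performed simultaneously on all of $T$, and then invoking the fact that $APX$ is a local optimum, so that this step cannot increase the objective $4X_4+X_3$.

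Some preliminaries first. By the argument in the proof of Lemma~\ref{smallCC_H} (two vertices joined by a double edge have no other neighbours in $H$), a small subtree with at least three vertices has no parallel edges, so every edge $rv$ of $T$ has $|r\cap v|=1$; the cases $|V(T)|\le 2$ are trivial since then $|R_{\{1\}}^T|\le 1$, so I assume $|V(T)|\ge 3$. Every row of $T$ is a $4$-row, and the non-leaf (``internal'') columns of $T$ are precisely the $O_{2,0}$ vertices both of whose $H$-edges lie in $T$ (an $O_{1,1}$ vertex has $H$-degree $1$ and is always a leaf; an $O_{2,0}$ vertex has $H$-degree $2$). A short case analysis then shows that every incidence between a row of $T$ and a column of $T$ is realised by a $T$-edge, and that in the case $|V(T)|\ge 3$ each leaf row attaches to an internal column and each leaf column attaches to an internal row.

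The step I would use: remove from the current solution all rows of $T$ (there are $n_r^T\le\frac{1}{\epsilon}$ of them, so this is an admissible move), and insert \emph{(a)} each internal column of $T$, as a $4$-row; \emph{(b)} for each leaf column $c$ of $T$, the $3$-set obtained from $c$ by deleting the unique element of $c$ whose covering row survives (for $c\in O_{2,0}$ this is the element covered by $c$'s $H$-neighbour outside $T$; for $c\in O_{1,1}$ it is the element covered by the $3$-row meeting $c$) --- such a $3$-set covers the element of $c$ freed by the removal together with $c$'s two previously uncovered elements; \emph{(c)} for each leaf row $r_i$ of $T$, with unique $T$-neighbour $c_i$, the $3$-set $r_i\setminus\{r_i\cap c_i\}$, which belongs to the input by subset closure and re-covers the three elements of $r_i$ outside $c_i$. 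Every remaining element freed by the removal --- the elements of internal rows of $T$ lying outside $T$'s columns --- I would leave uncovered. The routine check is that the inserted sets are pairwise disjoint and disjoint from the surviving $3$- and $4$-rows; this uses that internal $O_{2,0}$ columns carry both their $H$-edges inside $T$, that a row of $T$ meets $T$'s columns only along its $T$-edges, and that $APX$ and $OPT$ each consist of pairwise disjoint sets.

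Now the count. Write $|R|=n_r^T$, $|C|=n_c^T$, $L=|R_{\{1\}}^T|$. The step changes $X_4$ by $(|C|-\lambda)-|R|$ and $X_3$ by $\lambda+L$, hence changes $4X_4+X_3$ by $4(|C|-|R|)-3\lambda+L$. Since $T$ is a bipartite tree, $2|C|-\lambda=\sum_{v\in C}\deg_T(v)=|E(T)|=|R|+|C|-1$ (internal columns have $T$-degree $2$, leaf columns $T$-degree $1$), so $|C|-|R|=\lambda-1$ and the net change is exactly $L+\lambda-4$; local optimality of $APX$ then forces $L+\lambda\le 4$. Part~(i) follows since its hypothesis forces $\lambda=0$, giving $L\le 4$; part~(ii) follows since $L=4$ forces $\lambda\le 0$, so $T$ has no leaf columns and hence (every $O_{1,1}$ vertex is a leaf, by Lemma~\ref{O11_leafH}) no $O_{1,1}$ vertices. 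The main obstacle is setting up the compensating insertion correctly --- seeing that each leaf row and each leaf column buys back exactly one $3$-set --- and verifying the disjointness of the rebuilt collection; once the identity $|C|-|R|=\lambda-1$ is in hand, everything collapses to $L+\lambda-4\le 0$.
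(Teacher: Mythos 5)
Your argument is correct in substance and takes a genuinely different, more unified route than the paper's. The paper proves (i) and (ii) with two separate local moves: for (i) it uses the hypothesis that every leaf is a row to get $n_r^T=n_c^T+1$, inserts all of $T$'s columns as $4$-sets plus one $3$-subset per leaf row, and reads off $|R_{\{1\}}^T|\le 4$ from the trade of one $4$-set for $|R_{\{1\}}^T|$ $3$-sets; for (ii) it first augments $T$ by the $3$-row neighbour of the (unique, by Corollary \ref{smallCCOne11}) $O_{1,1}$ leaf, removes that $3$-row as well, and derives a contradiction from a net gain of $+3$. Your single move --- internal columns reinserted as $4$-sets, one $3$-set bought back per leaf, whether that leaf is a row or a column --- yields the cleaner inequality $L+\lambda\le 4$ from which both parts drop out at once, and it even covers trees with $O_{2,0}$ leaf columns, a case the paper's statement never needs. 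The accounting ($|C|-|R|=\lambda-1$ from the column degree sum, net change $L+\lambda-4$) is right, and your move removes only $n_r^T\le\frac1\epsilon$ sets, so admissibility is immediate.

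The one step you defer to ``a short case analysis'' does need care, because as stated it is false in one configuration: the claim that every row--column incidence inside $T$ is realised by a $T$-edge. If $r,c\in V(T)$ are joined by an $H$-edge outside $E(T)$, that edge together with the $r$--$c$ path in $T$ forms a small cycle of $H$, which by the proof of Lemma \ref{smallCC_H} is an entire CC with no further incidences; hence $T$ is a subpath of that cycle and the offending edge can only join the two endpoints of $T$. In exactly that configuration your insertion breaks down (both rows meeting the leaf column are removed, so ``the element whose covering row survives'' does not exist, and the leaf row's $3$-subset would collide with the leaf column's). But there $T$ is a path with two leaves, so $L+\lambda=2\le4$ and both parts hold trivially --- this is the degenerate case the paper dismisses up front by assuming $\Delta(T)>2$. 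With that case split stated explicitly, your proof goes through.
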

\begin{proof}
Assume $\Delta(T) > 2$, otherwise the claim is trivial (note that for part (ii),
if $\Delta(T) \le 2$ then $T$ cannot have $4$ leaves).
Hence $R_{\{3,4\}}^T \ne \emptyset$. \\
(i)
Since $T$'s vertices alternate between rows and
columns, it follows that
\begin{equation}  \label{nr_nc1}
n_r^T = n_c^T + 1 \ .
\end{equation}
To see this, partition
$T$ into edge-disjoint paths by the following iterative procedure:
start with any path $P$ connecting two arbitrary leaves, and mark
its vertices. Clearly, $n_r^P = n_c^P + 1$. As long as there exist
unmarked vertices, choose a minimal (with respect to inclusion) path $Q$ connecting an
unmarked leaf to
a marked vertex $u$. Note that $u \in R_{\{3,4\}}^T$, i.e., $n_r^Q =
n_c^Q+1$, and since $Q$ is minimal, all of $Q$'s vertices except for
$u$ are unmarked. Marking $Q$'s vertices, the number of row vertices
which are marked for the first time is equal to the number of
such column vertices. Summing over all paths, we obtain $n_r^T =
n_c^T + 1$.

Observe that for each row leaf $r \in R_{\{1\}}^T$, $r$'s neighbor is a column in $O_{2,0}$,
since the $O_{1,1}$ are leaves (by Lemma \ref{O11_leafH}) and $\Delta(T) > 2$ by assumption.
As $n_r^T \le \frac{1}{\epsilon}$, the following local step can be
applied:
\begin{itemize}
\item remove the $n_r^T$ rows of $T$ from the current solution,
\item insert the $n_c^T$ columns of $T$,
\item for each ($4-$row) leaf $r \in R_{\{1\}}^T$, insert its $3-$row subset consisting of the
three elements which are not covered by $r$'s ($O_{2,0}$) neighbor in
$T$.
\end{itemize}
Thus, we traded one $4-$row for $|R_{\{1\}}^T|$ $3-$sets. Due to our
objective function, we must have $|R_{\{1\}}^T| \le 4$, otherwise
this step would be a local improvement one, which is a
contradiction. \\

(ii)
Suppose to the contrary that there exists a subtree $T \subseteq H$ with $|R_{\{1\}}^T| = 4$ row
leaves  such that $O_{1,1}^F \ne \emptyset$.
Denote these row leaves by $r_1,...,r_4$, and let $c_1,...,c_4$ be their corresponding
neighbors. Observe that $c_1,...,c_4 \notin O_{1,1}$, since if $c_i \in O_{1,1}$ for
some $i$, then by Lemma \ref{O11_leafH} it is a leaf, implying that $(r_i,c_i)$ is an
isolated edge, contradicting the assumption that $T$ is a tree with four leaves.
By Corollary \ref{smallCCOne11}, there is exactly
one $O_{1,1}^F$ vertex, which we denote by $c$. Let $r$ be $c$'s $3-$row neighbor,
and let $T' \equiv (V(T) \cup \{ r \} , E(T) \cup \{ (c,r) \})$. Thus, all the
leaves in $T'$ are row vertices. It then follows, by exactly the same argument in part (i),
that $n_r^{T'} = n_c^{T'} + 1$. Therefore, in $T$ we have: $n_r^T = n_c^T$.
As $n_r^{T'} \le \frac{1}{\epsilon}$, we can remove the $n_r^T$ $4-$rows and (the $3-$row) $r$,
and insert the $n_c^T$ columns and the four $3-$row subsets of the leaves: $r_i \setminus c_i$, $i=1,...,4$.
The number of $4-$sets remain the same, while the number of $3-$sets increases by $3$, which
is a contradiction.
\end{proof}
  \\

\noindent
We emphasize that $T$ need not be a CC of $H$. It may be a proper subset
of a CC. If $T$ is a CC, the following result holds:
\begin{corollary}  \label{corTreeR1R34}
Let $T$ be a small CC of $H$ which is a tree. Then $|R_{\{1\}}^T| \le 4$, $|R_{\{3,4\}}^T| \le 2$.
Consequently, $|R_{\{2\}}^T| \ge n_r^T - 6$.
\end{corollary}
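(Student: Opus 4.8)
The plan is to reduce the statement to the handshake identity on the tree $T$ once its degree structure is pinned down, and then to bound $|R_{\{1\}}^T|$ in two cases. Since $T$ is a connected component of $H$, every vertex of $T$ has its full $H$‑degree, and as $T$ is a tree it contains no double edge; hence each row of $T$ is a $4$‑row of degree in $\{1,2,3,4\}$, while each column of $T$ lies in $O_{1,1}\cup O_{2,0}$ and, by Lemma~\ref{O11_leafH}, has degree $1$ if it is in $O_{1,1}$ and degree $2$ if it is in $O_{2,0}$. Thus the only column vertices that can be leaves of $T$ are $O_{1,1}$ vertices, and by Corollary~\ref{smallCCOne11} there is at most one such; write $a=|O_{1,1}^T|\in\{0,1\}$. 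Summing $2-\deg_T(v)$ over $V(T)$ gives $2$; the $O_{2,0}$ columns and the degree‑$2$ rows contribute $0$, an $O_{1,1}$ column and a row leaf each contribute $1$, a degree‑$3$ row contributes $-1$ and a degree‑$4$ row contributes $-2$, so
\[
|R_{\{3\}}^T| + 2|R_{\{4\}}^T| \;=\; |R_{\{1\}}^T| + a - 2 .
\]
Since $|R_{\{3,4\}}^T|=|R_{\{3\}}^T|+|R_{\{4\}}^T|\le|R_{\{3\}}^T|+2|R_{\{4\}}^T|$, it suffices to prove $|R_{\{1\}}^T|\le 4$ always and $|R_{\{1\}}^T|\le 3$ (in fact $\le 1$) when $a=1$; the final bound follows because $n_r^T=|R_{\{1\}}^T|+|R_{\{2\}}^T|+|R_{\{3,4\}}^T|$.

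When $a=0$, every leaf of $T$ is a $4$‑row vertex, so $T$ is a small subtree of $H$ to which Lemma~\ref{rowLeavesInT}(i) applies, giving $|R_{\{1\}}^T|\le 4$; then $|R_{\{3,4\}}^T|\le|R_{\{1\}}^T|-2\le 2$.

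The case $a=1$ is the crux: the counting above only yields $|R_{\{3,4\}}^T|\le 3$, and Lemma~\ref{rowLeavesInT} does not apply to a tree with an $O_{1,1}$ leaf. Let $c$ be the unique $O_{1,1}$ vertex of $T$; it is a leaf, its single $H$‑neighbour is a $4$‑row $r_c\in T$, and in $G$ it has exactly one $3$‑row neighbour $s$ (the $3$‑row covering $c$'s single "$3$‑covered" element), which is not a vertex of $H$. Attach $s$ to $c$ to form the auxiliary tree $T':=(V(T)\cup\{s\},\,E(T)\cup\{(c,s)\})$; in $T'$ the vertex $c$ has degree $2$, so $T'$ has no column leaf, its leaves are exactly the rows $R_{\{1\}}^T\cup\{s\}$, and all its columns have degree $2$. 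Contracting those columns (as in the proof of Lemma~\ref{rowLeavesInT}(i)) shows $n_r^{T'}=n_c^{T'}+1$, i.e.\ $n_r^T=n_c^T$. Now apply the local step that deletes the $n_r^T$ $4$‑rows of $T$ together with $s$ — a total of $n_r^T+1\le \frac{1}{\epsilon}-1$ sets, which is precisely why "small" was defined by $n_r^F\le\frac{1}{\epsilon}-2$ — and inserts the $n_c^T$ $4$‑columns of $T$ and, for each leaf $r\in R_{\{1\}}^T$, the $3$‑subset of $r$ disjoint from its ($O_{2,0}$) neighbour in $T$. Since $T$ is a connected component of $H$, after the deletion every element of every column of $T$ is uncovered, so this is a feasible replacement (just as in Lemmas~\ref{smallCC_H} and \ref{rowLeavesInT}). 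Its effect on the objective $4X_4+X_3$ is $4(n_c^T-n_r^T)+(|R_{\{1\}}^T|-1)=|R_{\{1\}}^T|-1$, which must be $\le 0$ by local optimality of $APX$; hence $|R_{\{1\}}^T|\le 1$, and then $|R_{\{3,4\}}^T|=|R_{\{1\}}^T|+a-2\le 0$.

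Combining the cases, $|R_{\{1\}}^T|\le 4$ and $|R_{\{3,4\}}^T|\le 2$ always, and therefore $|R_{\{2\}}^T|=n_r^T-|R_{\{1\}}^T|-|R_{\{3,4\}}^T|\ge n_r^T-6$. I expect the only real obstacle to be the $a=1$ subcase: improving the crude counting bound $|R_{\{3,4\}}^T|\le 3$ to $2$ forces one to reason about a tree containing an $O_{1,1}$ vertex, which falls outside Lemma~\ref{rowLeavesInT}, and the fix is exactly the augmentation of $T$ by the $3$‑row $s$ together with the local‑optimality swap above (together with the bookkeeping showing that swap is feasible).
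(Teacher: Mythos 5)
Your proof is correct, and it follows the same overall strategy as the paper's: bound the row leaves via Lemma~\ref{rowLeavesInT}(i), then convert that into a bound on $|R_{\{3,4\}}^T|$ through the degree identity for trees. The one place where you genuinely diverge is the case $a=|O_{1,1}^T|=1$, and there your treatment is actually more careful than the paper's. The paper applies (\ref{graphTheoryEq}) to $T$ in the form $|R_{\{1\}}^T| = 2 + |R_{\{3\}}^T| + 2|R_{\{4\}}^T|$, which tacitly assumes every leaf of $T$ is a row; when $T$ has an $O_{1,1}$ leaf the correct identity carries the extra $+a$ you include, and then $|R_{\{1\}}^T|\le 4$ alone only yields $|R_{\{3,4\}}^T|\le 3$ — the paper would need to invoke Lemma~\ref{rowLeavesInT}(ii) (which it proves but does not cite in the corollary) to rule out $|R_{\{1\}}^T|=4$ in that case. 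You instead rerun the augmentation-and-swap argument from the proof of Lemma~\ref{rowLeavesInT}(ii) and extract its full strength: the swap changes the objective by $|R_{\{1\}}^T|-1$, so local optimality forces $|R_{\{1\}}^T|\le 1$ and hence $|R_{\{3,4\}}^T|=0$ when an $O_{1,1}$ leaf is present. Your feasibility bookkeeping (the $O_{2,0}$ columns of $T$ meet no $3$-rows, the leaf rows' $3$-subsets live in columns outside $T$, and $n_r^T+1\le\frac1\epsilon$ by the definition of small) is sound; the only cosmetic slip is writing $|R_{\{3,4\}}^T|=|R_{\{1\}}^T|+a-2$ where the identity gives $|R_{\{3\}}^T|+2|R_{\{4\}}^T|$ on the left, but since that quantity dominates $|R_{\{3,4\}}^T|$ the conclusion stands.
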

\begin{proof}
The leaves of $T$ are either $R_{\{1\}}^T$ or $O_{1,1}$ vertices. If all of them are $R_{\{1\}}^T$ vertices,
then by Lemma \ref{rowLeavesInT} (i): $|R_{\{1\}}^T| \le 4$. Otherwise, Corollary \ref{smallCCOne11} implies
that $T$ contains exactly one $O_{1,1}$ vertex. Deleting it from $T$, we obtain a subtree $T'$ whose all leaves
are the $R_{\{1\}}^{T'}$ vertices. By Lemma \ref{rowLeavesInT} (i): $|R_{\{1\}}^T| = |R_{\{1\}}^{T'}| \le 4$.

For the second part,
recall from Graph Theory that the number of leaves in
a nontrivial connected graph $G$ with $n_i$ vertices of degree $i$,
$i=1,...,\Delta(G)$, is bounded by:
\begin{equation}  \label{graphTheoryIneq}
n_1 \le 2 + \sum_{i=3}^{\Delta(G)} (i-2)n_i \ .
\end{equation}
(This follows from $\sum_{i=1}^{\Delta(G)} i n_i = 2 |E(G)| \ge
2(|V(G)|-1)= 2(\sum_{i=1}^{\Delta(G)} n_i - 1)$).
If $G$ is a tree, then (\ref{graphTheoryIneq}) holds as an equality, which we apply to $T$ and obtain:
\begin{equation}  \label{graphTheoryEq}
|R_{\{1\}}^T| = 2 + |R_{\{3\}}^T| + 2 |R_{\{4\}}^T| \ .
\end{equation}
We then conclude that:
\begin{displaymath}
|R_{\{3,4\}}^T| = |R_{\{3\}}^T| + |R_{\{4\}}^T| \le |R_{\{3\}}^T| + 2 |R_{\{4\}}^T| = |R_{\{1\}}^T| -2 \le 2 \ .
\end{displaymath}
(The last inequality follows from the first part). Thus  $|R_{\{3,4\}}^T| \le 2$.
\end{proof}  \\

Corollary \ref{corTreeR1R34} implies that most of the rows in a small tree $T$ have degree $2$, i.e., they
are the $R_{\{2\}}^T$ vertices. This is intuitive, as we can view these rows as "links" connecting two
columns in a "chain", while very few rows  are "end-rows" (namely, the $R_{\{1\}}^T$ ones), and even fewer rows
are "links" to other "chains" (the $R_{\{3,4\}}^T$ ones). Observe that for $F$ a cycle or a double edge, it is
trivial that \emph{all} the rows have degree $2$, i.e.
$|R_{\{1\}}^F| = |R_{\{3,4\}}^F| = 0$, $|R_{\{2\}}^F| = n_r^F$.

For the big CCs of $H$, the dominance of the $R_{\{2\}}$ rows still holds, but in a weaker sense.
In order to establish it, we  look at small neighborhoods around
the vertices of a big CC $F$, bound the number of vertices of degrees $3$ or $4$,
and by summation obtain a bound on $|R_{\{3,4\}}^F|$. A bound on
$|R_{\{1\}}^F|$ then follows naturally.\\

\begin{definition}
For $\epsilon > 0$ and $v \in H$, let $B_\epsilon(v)$ be the neighborhood of radius $\frac{1}{5\epsilon}$
centered at $v$ in $H$, i.e., the set of all vertices $u$ in $H$
such there exists a $u-v$ path in $H$ of length at most $\frac{1}{5\epsilon}$.
\end{definition}
Observe that $|B_\epsilon(v)|$ may be greater than
$\frac{1}{\epsilon}$. In addition, it is possible for a "boundary" vertex $u \in B_\epsilon(v)$
that $deg_{B_\epsilon(v)}(u) < deg_H(u)$,
i.e., if its distance from $v$ is exactly $\frac{1}{5\epsilon}$.\\
\noindent
\begin{lemma}   \label{R34inNeighborhoodHle2}
For any  $v \in H$, $B_\epsilon(v)$ contains at most two vertices of degree $3$ or $4$ in $H$, i.e.,
$|B_\epsilon(v) \cap R_{\{3,4\}}^H| \le 2$.
\end{lemma}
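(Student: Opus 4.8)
The plan is to argue by contradiction: suppose some $B_\epsilon(v)$ contains three vertices $x_1, x_2, x_3$ of degree $3$ or $4$ in $H$. Since $B_\epsilon(v)$ has radius $\frac{1}{5\epsilon}$, any two of the $x_i$ are joined by a path through $v$ of length at most $\frac{2}{5\epsilon}$, so all three lie in a connected subgraph of $H$ on at most $\frac{3}{5\epsilon}+1$ row vertices (counting rows along the three $v$-to-$x_i$ paths; the precise bound is immaterial as long as it is comfortably below $\frac{1}{\epsilon}$). First I would pass to a small subtree $T$ of $H$ that contains $x_1, x_2, x_3$: take the union of the three shortest paths from $v$ to the $x_i$ and extract a spanning tree of the resulting subgraph; if this union happens to contain a cycle or double edge, Lemma \ref{smallCC_H}'s proof technique (a small cycle or double edge has no outside neighbors) already forces a contradiction with the $x_i$ being internal high-degree vertices, so we may assume we genuinely have a tree. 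Prune $T$ so that every leaf is either a row leaf in $R_{\{1\}}^T$ or an $O_{1,1}$ vertex and so that $x_1,x_2,x_3$ are retained — this is just deleting pendant paths that don't lead to any $x_i$, and it can only shrink $T$.

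Now the core of the argument is a counting step on $T$ using the graph-theory identity \eqref{graphTheoryEq}-style reasoning from Corollary \ref{corTreeR1R34}. Having three vertices of degree $\ge 3$ in a tree forces at least $2 + 3 = 5$ "leaves," where each $O_{1,1}$ leaf and each $R_{\{1\}}$ leaf is counted once (formally: attach the $3$-row neighbor of each $O_{1,1}$ leaf, as in the proof of Lemma \ref{rowLeavesInT}(ii), to convert all leaves to row leaves, obtaining a tree $T'$ with $n_r^{T'} = n_c^{T'}+1$ and $|R_{\{1\}}^{T'}| = 2 + \sum_{i\ge 3}(i-2)|R_{\{i\}}^{T'}| \ge 2+3 = 5$). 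This is exactly the situation ruled out by Lemma \ref{rowLeavesInT}(i) — which caps the number of row leaves in a small subtree at $4$ — provided $n_r^{T'} \le \frac{1}{\epsilon}$, which holds since $T' \subseteq B_\epsilon(v)$ has at most $\frac{3}{5\epsilon}+O(1) < \frac{1}{\epsilon}$ rows. So the existence of three degree-$\ge 3$ vertices contradicts local optimality of $APX$, and the lemma follows with the bound $2$.

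The main obstacle I expect is bookkeeping: making the passage from "three high-degree vertices in a ball" to "a small tree with $\ge 5$ row leaves" fully rigorous. Two subtleties need care. First, an $x_i$ of degree $3$ or $4$ in $H$ might have degree only $2$ inside the subgraph I extract (its extra edges leave $B_\epsilon(v)$), so I must either track $x_i$'s neighbors in $H$ directly or enlarge the extracted subgraph slightly to include one more vertex along each "extra" edge at $x_i$ — still keeping the row count below $\frac{1}{\epsilon}$, which is why the radius was chosen as $\frac{1}{5\epsilon}$ rather than, say, $\frac{1}{3\epsilon}$. Second, I must make sure the pruning that turns all leaves into row-or-$O_{1,1}$ leaves does not accidentally delete an $x_i$; since each $x_i$ has degree $\ge 3$ it is never a leaf and never on a pendant path, so it survives. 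Once these points are handled, the inequality \eqref{graphTheoryIneq} applied to the final tree does all the work, and the contradiction is with Lemma \ref{rowLeavesInT}(i) exactly as in the degree-$4$-leaf case.
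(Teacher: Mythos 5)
Your overall strategy is the same as the paper's: extract a small tree inside $B_\epsilon(v)$ containing the three alleged high-degree vertices, augment it so that they are genuine branch points, and then derive at least $5$ row leaves to contradict Lemma \ref{rowLeavesInT}. The size bookkeeping, the use of a spanning tree to avoid cycles, and the degree-count $n_1 = 2+\sum_{i\ge3}(i-2)n_i \ge 5$ are all fine. But there is a genuine gap at the point where you invoke Lemma \ref{rowLeavesInT}(i): the new leaves created by your augmentation at each $x_i$ are \emph{column} vertices (the $x_i$ are rows, so their neighbors in $H$ are columns, generically $O_{2,0}$ columns), and Lemma \ref{rowLeavesInT}(i) applies only to trees whose leaves are all rows. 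You handle the $O_{1,1}$ leaves by attaching their $3$-row neighbors, but you never address the $O_{2,0}$ column leaves. Worse, your pruning prescription (``prune $T$ so that every leaf is either a row leaf in $R_{\{1\}}^T$ or an $O_{1,1}$ vertex'') would delete exactly the pendant edges you added to push the $x_i$ up to degree $\ge 3$, so the argument is circular as written: either the leaves are wrong for the lemma, or the degrees drop back below $3$ and you no longer get $5$ leaves.

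The missing step, which is precisely the paper's second augmentation, is this: after producing the small tree with at least $5$ column leaves, note that at most one of them is an $O_{1,1}$ vertex (two $O_{1,1}$ vertices in the same CC are at distance at least $\frac{1}{\epsilon}-1$ rows apart by Lemma \ref{bigCC}, so a small tree contains at most one); hence at least $4$ leaves are $O_{2,0}$ columns, each of which has a \emph{second} $4$-row neighbor in $H$ lying outside the tree. These second neighbors are pairwise distinct and do not close a cycle (otherwise a small cycle sits properly inside a CC of $H$, contradicting Lemma \ref{smallCC_H}). Attaching them converts every $O_{2,0}$ leaf into an internal vertex with a row leaf below it, the tree still has at most $\frac{3}{5\epsilon}+10 \le \frac{1}{\epsilon}$ rows, and it now has either $\ge 5$ row leaves or exactly $4$ row leaves plus one $O_{1,1}$ leaf --- contradicting parts (i) and (ii) of Lemma \ref{rowLeavesInT} respectively. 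Without this extension the contradiction does not materialize.
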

\begin{proof}
Suppose to the contrary that there exists $v \in H$ such that
$B_\epsilon(v)$ contains at least $3$ vertices in $R_{\{3,4\}}^H$. Pick any
three of these vertices and denote them by $v_1, v_2, v_3$. Let $B'$ be a spanning tree of $B_\epsilon(v)$, and let
$P_i$ be the $v-v_i$ path in $B'$, $i=1,2,3$. Let $T$ be the subtree of
$B'$ defined by  $T \equiv \bigcup_{i=1}^3 P_i$. We first show
how to augment $T$ to obtain a subtree $T' \subseteq  B_\epsilon(v)$ with
at least $5$ leaves which are column vertices:
\begin{itemize}
\item{Case 1:} $T$ has at least two leaves in $\{ v_1,v_2,v_3 \}$,
say $v_1$ and $v_2$. Then each of
$v_1, v_2$ has at least two neighbors which are not in $T$, and in
addition, either $v$ is a leaf or $v_3$ has at least one neighbor which is not in $T$.
These neighbors are distinct and are different from $v$, otherwise $H$ contains a small cycle as a proper
subset, contradicting Lemma \ref{smallCC_H}. Let $T'$ be the tree
obtained by adding the edges connecting these neighbors to $T$. Then
$T'$ has at least $5$ leaves, which are columns.
\item{Case 2:}  $T$ is a simple path from $v$ to (say) $v_1$: then $v_1$ has at least two neighbors
which are not in $T$, and each of $v_2,v_3$ has at least one
neighbor which is not in $T$. These neighbors are distinct by an
argument similar to that in Case 1. Let $T'$ be the tree obtained by
adding the edges connecting these neighbors to $T$. Then again, $T'$
has at least $5$ column leaves ($v$ being one of them).
\end{itemize}
In both cases, we obtained a tree $T'$ of size at most
$\frac{3}{5\epsilon}+5$, which we clearly may assume to be less than
$\frac{1}{\epsilon}$, with at least $5$ leaves which are column
vertices. Now, since $T'$ is small, it follows by Corollary \ref{smallCCOne11}
that among these $5$ column leaves, at most one is an $O_{1,1}$ column vertex.
Thus, at least $4$ leaves are $O_{2,0}$ columns. Each such $O_{2,0}$ leaf
has an additional row neighbor outside $T'$. Again, these neighbors are distinct,
otherwise there is a contradiction to Lemma \ref{smallCC_H}. Adding the edges
connecting these row neighbors to $T'$, we obtain a tree of size at
most $\frac{3}{5\epsilon}+10 \le \frac{1}{\epsilon}$.
It either has $5$ or more row leaves, or exactly $4$ row leaves and one
$O_{1,1}$ leaf. In both cases we obtain a contradiction to Lemma \ref{rowLeavesInT}.
\end{proof}   \\

\noindent
We are now ready to upper-bound the number of vertices of degree
$1,3$, and $4$ in the big CCs of $H$. In particular, this establishes
the dominance (in terms of a lower bound) of rows of degree $2$ which we previously stated.
Since, as we mentioned, we look at each CC separately,
all bounds are in terms of the total number of rows in the specific CC.
\begin{lemma}   \label{WAY2boundRbig}
Let $F$ be a big CC of $H$.
(i) $|R_{\{3,4\}}^F| = O(\epsilon)n_r^F$,
 (ii) $|R_{\{1\}}^F| = O(\epsilon) n_r^F$,
(iii) $|R_{\{2\}}^F| \ge (1 - O(\epsilon)) n_r^F$.
\end{lemma}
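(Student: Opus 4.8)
The plan is to prove the three parts in the order (i), then (ii), then (iii), with (i) carrying all the real work. For part (i) I would run a double‑counting (charging) argument based on Lemma~\ref{R34inNeighborhoodHle2}: count pairs $(v,w)$ with $w\in R_{\{3,4\}}^F$ and $v$ a vertex of $F$ lying in $B_\epsilon(w)$. By symmetry of graph distance, $v\in B_\epsilon(w)$ iff $w\in B_\epsilon(v)$, so the number of such pairs equals $\sum_{v\in V(F)}|B_\epsilon(v)\cap R_{\{3,4\}}^F|$, which by Lemma~\ref{R34inNeighborhoodHle2} is at most $2|V(F)|$. On the other hand it equals $\sum_{w\in R_{\{3,4\}}^F}|B_\epsilon(w)\cap V(F)|=\sum_{w\in R_{\{3,4\}}^F}|B_\epsilon(w)|$. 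The crux is a lower bound $|B_\epsilon(w)|=\Omega(1/\epsilon)$ for every $w\in F$: since $F$ is connected, either some vertex of $F$ lies at distance $\ge \frac1{5\epsilon}$ from $w$, in which case a shortest path realizes $\Omega(1/\epsilon)$ vertices all inside $B_\epsilon(w)$, or $B_\epsilon(w)=V(F)$, in which case $|B_\epsilon(w)|=|V(F)|\ge n_r^F\ge \frac1\epsilon-1$ because $F$ is big. Combining, $\Omega(1/\epsilon)\,|R_{\{3,4\}}^F|\le 2|V(F)|$. Finally, bounding $n_c^F$ in terms of $n_r^F$: in the bipartite multigraph $F$ every column vertex has degree $1$ or $2$ (Lemma~\ref{O11_leafH} for $O_{1,1}$, and $O_{2,0}$ columns have exactly two incident $4$‑row edges) and every $4$‑row has degree at most $4$, so $n_c^F\le\sum_{c}\deg_F(c)=\sum_{r}\deg_F(r)\le 4n_r^F$, hence $|V(F)|\le 5n_r^F$. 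This yields $|R_{\{3,4\}}^F|=O(\epsilon)\,n_r^F$.

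For part (ii) I would invoke the graph‑theoretic leaf inequality~(\ref{graphTheoryIneq}) applied to the connected multigraph $F$. Since the maximum degree in $H$ is $4$ (columns have degree $\le 2$, $4$‑rows degree $\le 4$), it gives $n_1^F\le 2+n_3^F+2n_4^F$, where $n_i^F$ counts all vertices of degree $i$. The degree‑$1$ vertices of $F$ are the row leaves $R_{\{1\}}^F$ together with the $O_{1,1}$ columns (which are leaves by Lemma~\ref{O11_leafH}), while $n_3^F=|R_{\{3\}}^F|$ and $n_4^F=|R_{\{4\}}^F|$ consist only of rows. Hence
\[
|R_{\{1\}}^F|\le n_1^F\le 2+|R_{\{3\}}^F|+2|R_{\{4\}}^F|\le 2+2|R_{\{3,4\}}^F|.
\]
By part (i) the last term is $O(\epsilon)\,n_r^F$, and the additive constant $2$ is also $O(\epsilon)\,n_r^F$ since $F$ is big ($n_r^F\ge \frac1\epsilon-1$, so $2\le \frac{2\epsilon}{1-\epsilon}n_r^F$). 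Therefore $|R_{\{1\}}^F|=O(\epsilon)\,n_r^F$.

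Part (iii) is then immediate: every $4$‑row of $F$ has degree $1$, $2$, $3$, or $4$, and $F$ being a connected big component contains no isolated vertices, so $n_r^F=|R_{\{1\}}^F|+|R_{\{2\}}^F|+|R_{\{3,4\}}^F|$, whence $|R_{\{2\}}^F|=n_r^F-|R_{\{1\}}^F|-|R_{\{3,4\}}^F|\ge(1-O(\epsilon))\,n_r^F$ by (i) and (ii).

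The main obstacle is part (i), and within it the two quantitative inputs: the ball‑size lower bound $|B_\epsilon(w)|=\Omega(1/\epsilon)$ (which is where connectedness of $F$ and the definition of ``big'' are genuinely used, via the two‑case argument above) and the bound $n_c^F\le 4n_r^F$ controlling the column count; once these are in place, parts (ii) and (iii) are routine and follow purely from (i) and the standard leaf inequality.
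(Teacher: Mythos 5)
Your proposal is correct and follows essentially the same route as the paper: the same double-counting of ball memberships against Lemma \ref{R34inNeighborhoodHle2} combined with the lower bound $|B_\epsilon(\cdot)|\ge\frac{1}{5\epsilon}$ and the estimate $n_c^F\le 4n_r^F$ for part (i), the leaf inequality (\ref{graphTheoryIneq}) for part (ii), and summation of the degree classes for part (iii). If anything, you are slightly more explicit than the paper on two minor points (justifying the ball-size lower bound via the two-case connectivity argument, and distinguishing $n_1^F$ from $|R_{\{1\}}^F|$ because of the $O_{1,1}$ leaves), but the substance is identical.
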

\begin{proof}
For part (i), observe that:
\begin{displaymath}
|R_{\{3,4\}}^F| = \sum_{v \in R_{\{3,4\}}^F} 1
\le \sum_{v \in R_{\{3,4\}}^F} 5 \epsilon |B_\epsilon(v)| \ ,
\end{displaymath}
where the inequality follows from the fact that for $v$ in a big CC, $|B_\epsilon(v)| \ge \frac{1}{5\epsilon}$.

Now, consider the {\it multi-set} of vertices which belong to the (possibly
overlapping) neighborhoods around all of $R_{\{3,4\}}^F$ vertices, that is, we look at
$S \equiv \cup_{v \in R_{\{3,4\}}^F} B_\epsilon(v)$ where we allow repetitions of
elements in $S$.
Every vertex appears at
most twice in $S$. To see this, suppose to the contrary that
there is a vertex $u$ which appears at least three times in $S$. Then any
three centers of neighborhoods which cover $u$ are three
$R_{\{3,4\}}^F$ vertices in $B_\epsilon(u)$. $F$ is a CC of $H$, therefore
$R_{\{3,4\}}^F \subseteq R_{\{3,4\}}^H$, implying
$|B_\epsilon(u) \cap R_{\{3,4\}}^H| \ge |B_\epsilon(u) \cap R_{\{3,4\}}^F| \ge 3$,
which is a contradiction to
Lemma \ref{R34inNeighborhoodHle2}. Hence, $\sum_{v \in R_{\{3,4\}}^F}
|B_\epsilon(v)| = |S| \le 2n_F$.
Combining this with the previous inequality, we obtain:
\begin{equation}   \label{boundR34big_n_F}
|R_{\{3,4\}}^F| \le  10 \epsilon n_F \ .
\end{equation}
We would like to obtain the bound in terms of $n_r^F$, the number of rows in $F$.
Observe that each column intersects at most $4$ rows. Thus, $n_c^F \le 4n_r^F$, implying
that  $n_F = n_r^F + n_c^F \le 5 n_r^F$.
Substituting this in (\ref{boundR34big_n_F}), we obtain:
\begin{equation}   \label{WAY2boundR34big}
|R_{\{3,4\}}^F| \le 50 \epsilon n_r^F \ .
\end{equation}
This proves part (i).

For part (ii), applying (\ref{graphTheoryIneq}) to $F$, we obtain:
\begin{displaymath}
|R_{\{1\}}^F| \le 2 + |R_{\{3\}}^F| + 2 |R_{\{4\}}^F| \le 2 + 2 |R_{\{3,4\}}^F|
\le 2 + 100 \epsilon n_r^F \le 102 \epsilon n_r^F \ ,
\end{displaymath}
where the third inequality follows from (\ref{WAY2boundR34big}), and the last one from the
assumption that $F$ is big. This proves part (ii).

Part (iii) follows from parts (i) and (ii) (as $n_r^F = |R_{\{1\}}^F| + |R_{\{2\}}^F| + |R_{\{3,4\}}^F|$).
This completes the proof.
\end{proof}  \\

Now consider the $O_{2,0}^F$ columns for some big CC $F$ of $H$. We show that their number
is about the same as that of $R_{\{2\}}^F$ vertices. Intuitively, this is true since, as
we proved, most of $F$'s columns are in $O_{2,0}^F$, most of its rows are in $R_{\{2\}}^F$, and
in every path the vertices alternate between rows and columns.
Formally:
\begin{lemma} \label{boundO20}
For a big CC $F$ of $H$:
$| R_{\{2\}}^F | - O(\epsilon) n_r^F \le  |O_{2,0}^F| \le |R_{\{2\}}^F | + O(\epsilon) n_r^F$.
\end{lemma}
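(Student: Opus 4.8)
The plan is to prove the two-sided bound by a single edge-counting argument in the bipartite graph $F$, combined with the ``negligibility'' estimates already established for the degree-$1$, degree-$3$ and degree-$4$ row vertices and for the $O_{1,1}$ columns.

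First I would record the relevant structure of $F$. Since $F$ is a CC of $H$, it is bipartite with one side consisting of $4-$rows and the other side consisting of $O_{1,1}$ and $O_{2,0}$ columns only; in particular every column of $F$ lies in $O_{1,1}^F \cup O_{2,0}^F$. By Lemma~\ref{O11_leafH} each $O_{1,1}$ column is a leaf, i.e.\ has degree $1$ in $F$, and each $O_{2,0}$ column has $deg_G = i+j = 2$, hence degree exactly $2$ in $F$ (edges counted with multiplicity, so a double edge is fine). Counting edge-endpoints on the column side therefore gives $|E(F)| = |O_{1,1}^F| + 2|O_{2,0}^F|$. Counting edge-endpoints on the row side and splitting the rows according to their degree gives $|E(F)| = |R_{\{1\}}^F| + 2|R_{\{2\}}^F| + D$, where $D$ is the sum of the degrees of the $R_{\{3,4\}}^F$ vertices, so that $3|R_{\{3,4\}}^F| \le D \le 4|R_{\{3,4\}}^F|$.

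Equating the two expressions for $|E(F)|$ yields $2|O_{2,0}^F| - 2|R_{\{2\}}^F| = |R_{\{1\}}^F| - |O_{1,1}^F| + D$. Now I would invoke Lemma~\ref{WAY2boundRbig}(i),(ii), which give $|R_{\{3,4\}}^F| = O(\epsilon)n_r^F$ (so $D \le 4|R_{\{3,4\}}^F| = O(\epsilon)n_r^F$) and $|R_{\{1\}}^F| = O(\epsilon)n_r^F$, together with Lemma~\ref{Voronoi}, which gives $|O_{1,1}^F| = O(\epsilon)n_r^F$. Hence the right-hand side has absolute value $O(\epsilon)n_r^F$, so $\bigl| |O_{2,0}^F| - |R_{\{2\}}^F| \bigr| = O(\epsilon)n_r^F$, which is precisely the claimed chain of inequalities.

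There is no genuine obstacle here beyond bookkeeping. The two points that require a little care are: that the hypothesis that $F$ is \emph{big} is exactly what makes the additive constant ``$+2$'' coming from the leaf bound (\ref{graphTheoryIneq}) absorbable into the $O(\epsilon)n_r^F$ term (this is already built into the statement of Lemma~\ref{WAY2boundRbig}); and that $D$ must be bounded above by $4|R_{\{3,4\}}^F|$ rather than $3|R_{\{3,4\}}^F|$, since an $R_{\{4\}}$ row contributes $4$ to the degree sum. All the auxiliary facts needed — that $O_{1,1}$ columns are leaves, that $O_{2,0}$ columns have degree exactly $2$, and the three negligibility estimates — are available from the lemmas already proved above.
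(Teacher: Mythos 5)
Your proposal is correct and follows essentially the same argument as the paper: a double count of the edges of $F$ over its two partite classes, with the degree-$1$, degree-$3$/$4$ rows and the $O_{1,1}$ columns absorbed into the $O(\epsilon)n_r^F$ error via Lemma~\ref{WAY2boundRbig}(i),(ii) and Lemma~\ref{Voronoi}. The only cosmetic difference is that you state the edge count as an exact identity before estimating, whereas the paper sandwiches $\sum_i i|R_{\{i\}}^F|$ between two bounds; the content is identical.
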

\begin{proof}
Let $F$ be a big CC of $H$.
We bound $\sum_{i=1}^4  i| R_{\{i\}}^F |$  from below and from above to obtain:
\begin{displaymath}
2|R_{\{2\}}^F| \le \sum_{i=1}^4  i| R_{\{i\}}^F |
\le | R_{\{1\}}^F | + 2| R_{\{2\}}^F| + 4| R_{\{3,4\}}^F|
\le 2| R_{\{2\}}^F | + O(\epsilon) n_r^F \ ,
\end{displaymath}
where the last inequality follows from Lemma \ref{WAY2boundRbig}(i),(ii).
Counting $F$'s edges using each of its two partite sets, we obtain:
\begin{displaymath}
\sum_{i=1}^4  i| R_{\{i\}}^F | = |O_{1,1}^F| + 2|O_{2,0}^F| = 2|O_{2,0}^F| + O(\epsilon) n_r^F \ ,
\end{displaymath}
where the last equality is by Lemma \ref{Voronoi}. Thus:
\begin{displaymath}
2|R_{\{2\}}^F| \le 2|O_{2,0}^F| + O(\epsilon) n_r^F
\le 2| R_{\{2\}}^F | + O(\epsilon) n_r^F \ .
\end{displaymath}
Subtracting $O(\epsilon) n_r^F$ from all sides and dividing by $2$ yields the claim. \\
\end{proof}  \\

We note that for small CC's, the last result holds in a stronger sense:
\begin{remark}  \label{O20negSmallCCH}
Let $F$ be a CC of $H$.
(i) If $F$ is a small tree then $| R_{\{2\}}^F | -1 \le |O_{2,0}^F| \le | R_{\{2\}}^F | + 5$.
(ii) If $F$ is a small cycle or a double edge then $|O_{2,0}^F| = |R_{\{2\}}^F |$.
\end{remark}
\begin{proof}
(i) Let $F$ be a small CC of $H$ which is a tree. First suppose that $F$ contains no $O_{1,1}$ vertices,
i.e., all of its leaves the $R_{\{1\}}^F$ vertices. Then equality (\ref{nr_nc1}) holds for $F$, i.e.,
$n_r^F = n_c^F + 1$ (this is true by the argument used in the proof of Lemma \ref{rowLeavesInT} (i)).
Thus, $|O_{2,0}^F| = n_c^F = n_r^F - 1$.
We now bound $|O_{2,0}^F|$ from above and from below:
\begin{displaymath}
|O_{2,0}^F|  = n_r^F - 1 \le | R_{\{2\}}^F | +5 \ ,
\end{displaymath}
where the inequality follows from Corollary \ref{corTreeR1R34}, and trivially:
\begin{displaymath}
|O_{2,0}^F|  = n_r^F - 1 \ge | R_{\{2\}}^F | -1 \ .
\end{displaymath}
Thus, $| R_{\{2\}}^F | -1 \le |O_{2,0}^F| \le | R_{\{2\}}^F | + 5$, as required.

If $O_{1,1}^F \ne \emptyset$ then by Corollary \ref{smallCCOne11}, $F$ contains exactly one
$O_{1,1}$ vertex. We delete it from $F$ to obtain a tree, denoted $F'$, with all its leaves being $R_{\{1\}}$
vertices. Thus, the last result holds for $F'$, i.e., $| R_{\{2\}}^{F'} | -1 \le |O_{2,0}^{F'}| \le | R_{\{2\}}^{F'} | + 5$. By observing that $O_{2,0}^F = O_{2,0}^{F'}$ and $R_{\{2\}}^F = R_{\{2\}}^{F'}$, we establish the
result for $F$ as well. \\

(ii) This is trivial.
\end{proof}  \\

Recall that our goal is to bound $X_{1,1}$ and $X_{2,0}$ - the proportions of $O_{1,1}$ and $O_{2,0}$
in $G$. So far we obtained a good estimation of their proportions in $H$: Corollary \ref{smallCCOne11} and Lemma
\ref{Voronoi} imply that the $O_{1,1}$ vertices are negligible in small and big CCs of $H$, respectively ;
Lemma \ref{boundO20} and Remark \ref{O20negSmallCCH}  imply that intuitively, the proportion of $O_{2,0}$
in $H$ is about one half (the other half consists mainly of rows of degree $2$).
However, in order to bound the proportions \emph{in $G$}, we need to take into account the columns which are
not in $O_{1,1}$ or $O_{2,0}$ but intersect some row in that CC. This motivates the following construction:  \\

\noindent {\bf The $\widetilde H$ graph}\\
Denote those columns which intersect some row in $H$ and
are not in $O_{1,1} \cup O_{2,0}$  by $\widetilde O$.
We construct the $\widetilde H$ graph, which need not be a subgraph of $G$, in two steps.
First, let $\widetilde H$ be the graph obtained from $H$ by connecting each row in $H$ to distinct
new vertices representing the $\widetilde O$-columns intersecting it. We emphasize that an $\widetilde O$-column
may appear in a few CCs of $\widetilde H$: Suppose $o \in \widetilde O$ and $C_1,...,C_l$, $2 \le l \le 4$ are CCs of $H$ such that each
one contains a row intersecting $o$. Then each corresponding CC in $\widetilde H$ will have its
own (distinct) vertex representing $o$. Thus, in terms of vertex labels (where each vertex has a label of the set
represented by it), the CCs of $\widetilde H$ are not $\widetilde O$-column-disjoint. But they
are $O_{1,1} \cup O_{2,0}$-column-disjoint as well as row-disjoint, and hence in particular
are edge-disjoint. Figure \ref{exampleTildeH} shows the $\widetilde H$ graph corresponding to the instance
given in Figure \ref{exampleH} up to this step.
\begin{figure} [t!]\vspace{-0.4in} %
\hspace{1.15in}
\includegraphics[angle=270,width=0.6\textwidth]{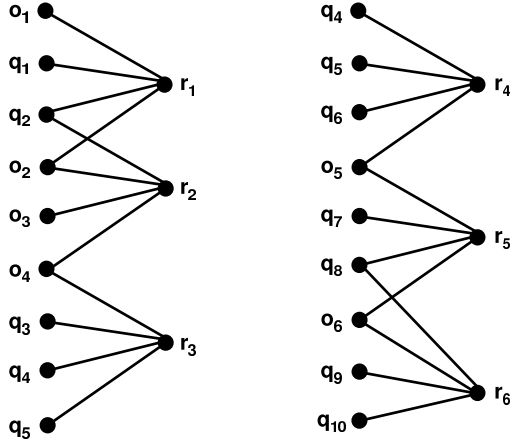}\vspace{0.4in}
\begin{center} \caption{The $\widetilde H$ graph corresponding to the instance
given in Figure \ref{exampleH} prior to the addition of the $\widetilde H_0$ subgraph.} \label{exampleTildeH}
\end{center}
\end{figure}

At a final step in the construction, we add to $\widetilde H$ the subgraph of $G$ induced
by the set of remaining vertices (if any), that is, all vertices which do not belong
to any CC from the previous step. Denote this subgraph by $\widetilde H_0$. Note that
$\widetilde H_0$ need not be connected. Let $\mathcal{C}$ denote the collection of $\widetilde H$'s
CCs. The disjointness of the rows implies:
\begin{equation}   \label{sumRowH}
n_r = n^{\widetilde H}_r = \sum_{F \in \mathcal{C} \cup \widetilde H_0 } n^F_r \ .
\end{equation}
We use the following notation.
Let $F \in \mathcal{C} \cup \widetilde H_0$.
Denote by $\widetilde O^F$ the $\widetilde O$-columns in $F$.
Denote by $E_{2,0}^F$ the set of edges incident to the $O_{2,0}^F$ columns,
by $E_{1,1}^F$ the set of edges incident to the $O_{1,1}^F$ columns,
and by $\widetilde E^F$ the set of remaining edges, i.e. $\widetilde E^F
\equiv E(F) \setminus ( E_{2,0}^F \cup E_{1,1}^F )$.
Note that for $F \in \mathcal{C}$ ($F \ne \widetilde H_0$),
all the edges in $\widetilde E^F$ are incident to columns in $\widetilde O^F$.
Define $E_{2,0} \equiv \bigcup_{F \in \mathcal{C}} \widetilde E_{2,0}^F$, $E_{1,1} \equiv \bigcup_{F \in \mathcal{C}} \widetilde E_{1,1}^F$, and $\widetilde E \equiv \bigcup_{F \in \mathcal{C}} \widetilde E^F$.
Finally, denote by $r_i^F$ the number of rows in $V(F) \cap R_{\{i\}}^H$.
The following observations are trivial (the first two hold in $H$ as well):
\begin{lemma}  \label{trivialH0properties}
(i) For each $F \in \mathcal{C} \cup \widetilde H_0$,  $|E_{1,1}^F| = |O_{1,1}^F|$ and $|E_{2,0}^F| = 2 |O_{2,0}^F|$,
(ii) For each $F \in \mathcal{C}$, each vertex $o \in O_{1,1}^F$ is a leaf in $F$,
(iii) For each $F \in \mathcal{C}$, a row in $F$ which belongs to $R_{\{i\}}^H$ contributes $i$
edges to $E_{1,1}^F \cup E_{2,0}^F$ and $4-i$ edges to $\widetilde E^F$, $i=1,...,4$, that is:
\begin{eqnarray} \label{cycleE20}
|E_{2,0}^F| + |E_{1,1}^F| & = & r_1^F + 2 r_2^F + 3 r_3^F + 4 r_4^F \ ,  \\
 \label{cycleEtilde}
 |\widetilde E^F| & = & 3 r_1^F + 2 r_2^F + r_3^F \ .
\end{eqnarray}
\end{lemma}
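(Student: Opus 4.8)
The plan is to derive all three parts from elementary incidence counting in the bipartite intersection multigraph, exploiting the Type~A setting through three facts. \emph{Fact 1:} every $4$-row $r$ of $APX$ has $\deg_G(r)=4$, since its four base elements lie in $OPT$-columns and $OPT$ is a partition (by the disjointness assumption), so $\sum_{v\in OPT}|r\cap v|=4$. \emph{Fact 2:} a column $v\in O_{i,j}$ has $\deg_G(v)=i+j$, with its $i$ edges running to $4$-rows and its $j$ edges running to $3$-rows; in particular an $O_{2,0}$ column has two edges, both to $4$-rows, and an $O_{1,1}$ column has one edge to a $4$-row and one to a $3$-row. \emph{Fact 3:} passing from $H$ to $\widetilde H$ only attaches brand-new $\widetilde O$-column vertices to the rows of $H$ and never touches the edges incident to $O_{1,1}\cup O_{2,0}$ columns, so the degree of any $O_{1,1}$ or $O_{2,0}$ vertex is the same in $\widetilde H$ (hence in its component $F$) as in $H$. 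For $F\in\widetilde H_0$ all of (i) is vacuous, since $\widetilde H_0$ contains no $O_{1,1}$ or $O_{2,0}$ column (these all lie in $V(H)$ by construction), so both sides of (i) vanish; (ii) and (iii) are only asserted for $F\in\mathcal C$.

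I would prove (ii) first, since it feeds (i). An $O_{1,1}$ column $o$ has $\deg_G(o)=2$ by Fact~2, and only its $4$-row edge survives in $H$ --- this is exactly Lemma~\ref{O11_leafH} --- so by Fact~3, $\deg_F(o)=1$ and $o$ is a leaf of $F$. Then for (i): $|E_{1,1}^F|=\sum_{o\in O_{1,1}^F}\deg_F(o)$ and $|E_{2,0}^F|=\sum_{o\in O_{2,0}^F}\deg_F(o)$, because each edge of $E_{1,1}^F$ (resp.\ $E_{2,0}^F$) has a unique column endpoint lying in $O_{1,1}^F$ (resp.\ $O_{2,0}^F$), columns forming one side of the bipartition. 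Part~(ii) gives $\deg_F(o)=1$ for $o\in O_{1,1}^F$, hence $|E_{1,1}^F|=|O_{1,1}^F|$; and an $O_{2,0}$ column has $\deg_G=2$ with both edges to $4$-rows, which Fact~3 leaves unchanged, so $\deg_F(o)=2$ and $|E_{2,0}^F|=2|O_{2,0}^F|$.

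For (iii), fix a row $r$ of $F$; it is a $4$-row with $\deg_G(r)=4$ by Fact~1. If $r\in R_{\{i\}}^H$ then by definition exactly $i$ of its incident edges lead to $O_{1,1}\cup O_{2,0}$ columns, i.e.\ $r$ contributes exactly $i$ edges to $E_{1,1}^F\cup E_{2,0}^F$ (a disjoint union, since no column is simultaneously in $O_{1,1}$ and $O_{2,0}$). The remaining $4-i$ edges of $r$ go to columns intersecting the $H$-row $r$ but not in $O_{1,1}\cup O_{2,0}$, hence to $\widetilde O$-columns; in $\widetilde H$ these become $r$'s $4-i$ edges to (copies of) $\widetilde O$-columns, which are precisely the edges of $\widetilde E^F$ incident to $r$ (recall that for $F\in\mathcal C$, $\widetilde E^F$ consists exactly of the edges incident to $\widetilde O^F$). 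Since $F$ is bipartite with all rows being $4$-rows, every edge of $E_{1,1}^F\cup E_{2,0}^F$ and every edge of $\widetilde E^F$ has a unique row endpoint, so summing the per-row contributions over the $r_1^F+r_2^F+r_3^F+r_4^F$ rows of $F$ yields $|E_{2,0}^F|+|E_{1,1}^F|=\sum_{i=1}^4 i\,r_i^F=r_1^F+2r_2^F+3r_3^F+4r_4^F$ and $|\widetilde E^F|=\sum_{i=1}^4(4-i)r_i^F=3r_1^F+2r_2^F+r_3^F$.

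I do not expect a genuine obstacle --- this is pure incidence bookkeeping --- so the only point needing care is the multigraph aspect together with the fact that a single $\widetilde O$-column can appear as several distinct vertex copies in different components of $\widetilde H$. One must therefore phrase all counts in terms of edges (parallel edges counted with multiplicity) rather than neighbours, verify that ``the edges of $F$ incident to $r$'' is well defined, and confirm that the $H\to\widetilde H$ construction truly preserves the degree at every $O_{1,1}\cup O_{2,0}$ vertex (Fact~3). Once that is pinned down, the three identities fall straight out of $\deg_G(\text{$4$-row})=4$, $\deg_G(O_{2,0})=2$, and $\deg_H(O_{1,1})=1$.
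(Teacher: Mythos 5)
Your proof is correct and is exactly the incidence-counting argument the paper has in mind — the paper simply states this lemma as a trivial observation and omits the proof. Your care about the multigraph/multiplicity convention and about $\widetilde H_0$ containing no $O_{1,1}\cup O_{2,0}$ columns matches how the paper later uses the lemma (e.g.\ in Case 1 of Lemma \ref{E20Etilde}).
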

\begin{lemma} \label{lem_e_tilde_le_long_sum}
For any Type A-instance,
\begin{equation} \label{ineq_e_tilde_le_long_sum}
|\widetilde E| \le   |O_{1,2}| + |O_{1,3}|
+ 2|O_{2,1}| + 2|O_{2,2}| + 3|O_{3,0}| + 3|O_{3,1}| + 4|O_{4,0}| \ .
\end{equation}
\end{lemma}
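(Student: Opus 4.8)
The plan is to double-count the edges of $\widetilde E$ from the side of the $\widetilde O$-columns rather than from the side of the rows. First I would recall, from the construction of $\widetilde H$, that every edge counted by $\widetilde E = \bigcup_{F\in\mathcal{C}}\widetilde E^F$ joins a ($4$-)row of $H$ to (a copy of) some $\widetilde O$-column, and that such an edge never lies in $\widetilde H_0$ (all rows of $H$ and all $\widetilde O$-copies sit in the CCs of $\mathcal{C}$); hence $|\widetilde E|$ is exactly the total multiplicity of edges of $\widetilde H$ that are incident to $\widetilde O$-column vertices. Re-summing this quantity by columns, a fixed column $o\in\widetilde O$ contributes the sum of the degrees of all of its copies in $\widetilde H$ (one copy per CC of $H$ that meets $o$), which by construction equals the number of base elements of $o$ that happen to be covered by $4$-rows belonging to $H$.

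The key — and essentially only — observation is then that if $o\in O_{i,j}$, this contribution is at most $i$: by the very definition of $O_{i,j}$, precisely $i$ of $o$'s elements are covered by $4$-rows of $G$ in the first place, so at most $i$ of them can be covered by $4$-rows of $H$. Consequently $|\widetilde E|\le\sum_{o\in\widetilde O}i(o)$, where $i(o)$ denotes the first index of the class $O_{i,j}$ containing $o$. To finish, note that $\widetilde O$ avoids $O_{1,1}$ and $O_{2,0}$ by definition, while Lemma~\ref{zeroXij} empties $O_{0,0},O_{0,1},O_{0,2},O_{0,3},O_{1,0}$; so every $\widetilde O$-column lies in one of $O_{1,2},O_{1,3},O_{2,1},O_{2,2},O_{3,0},O_{3,1},O_{4,0}$, with $i(o)$ equal to $1,1,2,2,3,3,4$ respectively. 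Grouping $\sum_{o\in\widetilde O}i(o)$ by class and enlarging each partial sum so that it ranges over the whole class — legitimate since all terms are nonnegative — produces exactly the right-hand side of (\ref{ineq_e_tilde_le_long_sum}).

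There is no genuine obstacle in this lemma; the only point demanding care is the bookkeeping caused by a single $\widetilde O$-column being represented by several distinct vertices in $\widetilde H$, one per CC of $H$ that it meets. One must check that summing the degrees of all those copies still recovers only the $4$-row incidences of the one underlying column, and that no edge of $\widetilde E$ is counted twice in the process — this is where the edge-count identities of Lemma~\ref{trivialH0properties} can be invoked to make the accounting airtight. Once this is set up, the inequality is immediate, and in fact holds with slack whenever some $4$-row touching an $\widetilde O$-column lies outside $H$.
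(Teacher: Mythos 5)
Your proof is correct and follows essentially the same route as the paper's: both count the edges of $\widetilde E$ from the column side, observe that a column $o\in O_{i,j}$ can contribute at most $i$ edges (since only its $i$ elements covered by $4$-rows can give rise to edges toward rows of $H$), and then sum over the classes that $\widetilde O$ can meet, which by Lemma~\ref{zeroXij} and the exclusion of $O_{1,1}\cup O_{2,0}$ are exactly the seven classes on the right-hand side. The paper's version is just a terser statement of the same observation.
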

\begin{proof}
Consider a vertex $o \in O_{i,j}$, $i \ge 1$, $(i,j) \notin \{(1,1),(2,0)\}$.
$o \in \widetilde O$ if there exist an $O_{2,0}$-column $q$ and a $4-$row $r$
such that $r \cap q \ne \emptyset$ and $r \cap o \ne \emptyset$.
In this case, $o$ contributes at most $i$ edges (possibly in different CCs) to
$\widetilde E$. (Otherwise it contributes zero).
\end{proof} \\

We now derive a linear inequality, which provides an upper-bound on the number of edges
in $E_{2,0}$ and $E_{1,1}$.
\begin{lemma}  \label{E20Etilde}
For any Type A-instance, $|E_{2,0}| + 3|E_{1,1}|  \le |\widetilde E| + O(\epsilon)n_r$.
\end{lemma}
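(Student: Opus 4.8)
The plan is to prove the inequality componentwise over the connected components $F \in \mathcal{C}$ of $\widetilde H$, and then sum. Since distinct components of $\mathcal{C}$ are edge-disjoint, $|E_{2,0}| = \sum_{F \in \mathcal{C}} |E_{2,0}^F|$, and likewise $|E_{1,1}| = \sum_{F \in \mathcal{C}} |E_{1,1}^F|$ and $|\widetilde E| = \sum_{F \in \mathcal{C}} |\widetilde E^F|$. So it suffices to establish, for every $F \in \mathcal{C}$, that $|E_{2,0}^F| + 3|E_{1,1}^F| \le |\widetilde E^F| + O(\epsilon) n_r^F$, with the error term actually needed only when $F$ is big; summing over $F$ and invoking $\sum_{F \in \mathcal{C}} n_r^F \le n_r$ (which follows from (\ref{sumRowH})) then gives the lemma. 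By Lemma \ref{trivialH0properties}(i) we have $|E_{2,0}^F| = 2|O_{2,0}^F|$ and $|E_{1,1}^F| = |O_{1,1}^F|$, so the left-hand side equals $2|O_{2,0}^F| + 3|O_{1,1}^F|$, while by (\ref{cycleEtilde}) the right-hand side is $3r_1^F + 2r_2^F + r_3^F$ (up to the error term). Thus everything reduces to bounding the number of $O_{2,0}$- and $O_{1,1}$-columns in $F$ against the row-degree counts $r_i^F$; let $\widehat F$ denote the component of $H$ from which $F$ was built, so that $\widehat F$ and $F$ share the same rows and the same $O_{1,1}$- and $O_{2,0}$-columns, and $\widehat F$ is big exactly when $F$ is.

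When $F$ is big, the bound is immediate from the structure already in hand: Lemma \ref{Voronoi} gives $|O_{1,1}^F| = O(\epsilon) n_r^F$ and Lemma \ref{boundO20} gives $|O_{2,0}^F| \le |R_{\{2\}}^F| + O(\epsilon) n_r^F = r_2^F + O(\epsilon) n_r^F$, whence $2|O_{2,0}^F| + 3|O_{1,1}^F| \le 2r_2^F + O(\epsilon) n_r^F \le |\widetilde E^F| + O(\epsilon) n_r^F$. When $F$ is small, I would split according to Lemma \ref{smallCC_H}. If $\widehat F$ is a cycle or a double edge, then it has no leaves, so $F$ has no $O_{1,1}$-vertex by Lemma \ref{O11_leafH}, and all its rows have $H$-degree $2$, whence $r_2^F = n_r^F$ and, by Remark \ref{O20negSmallCCH}(ii), $|O_{2,0}^F| = r_2^F = n_r^F$; then (\ref{cycleEtilde}) gives $|\widetilde E^F| = 2n_r^F$ and the inequality holds with equality and no error term.

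The remaining case — $\widehat F$ a small tree — is where I expect the main difficulty, since here one cannot afford even a constant additive error: there may be $\Theta(n_r)$ small tree components, so the $+5$ slack in Remark \ref{O20negSmallCCH}(i) would accumulate to a $\Theta(n_r)$ loss rather than an $O(\epsilon)n_r$ one. Instead I would extract the exact identity $|O_{2,0}^F| = n_r^F - 1$: counting the edges of the bipartite tree $\widehat F$ through its columns (which are precisely its $O_{1,1}$- and $O_{2,0}$-vertices, of degrees $1$ and $2$) gives $|E(\widehat F)| = |O_{1,1}^F| + 2|O_{2,0}^F|$, and this must equal $|V(\widehat F)| - 1 = n_r^F + |O_{1,1}^F| + |O_{2,0}^F| - 1$. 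Together with $|O_{1,1}^F| \le 1$ (Corollary \ref{smallCCOne11}) and the tree leaf identity $r_1^F + |O_{1,1}^F| = 2 + r_3^F + 2r_4^F$ — the equality case of (\ref{graphTheoryIneq}) applied to $\widehat F$, using that the leaves of $\widehat F$ are exactly its degree-$1$ rows plus the at-most-one $O_{1,1}$-leaf — a direct substitution into (\ref{cycleEtilde}) yields $|\widetilde E^F| - \big(2|O_{2,0}^F| + 3|O_{1,1}^F|\big) = 4\big(1 - |O_{1,1}^F|\big) \ge 0$. This gives the componentwise inequality with no error for every small $F$; combining the three cases and summing over $F \in \mathcal{C}$ completes the proof.
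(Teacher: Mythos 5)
Your proof is correct and follows essentially the same route as the paper's: a componentwise case analysis (big components via Lemma \ref{Voronoi} and Lemma \ref{boundO20}; cycles and double edges exactly; small trees via the exact tree leaf identity rather than the lossy Remark \ref{O20negSmallCCH}), summed using (\ref{sumRowH}). Your small-tree computation, which keeps the $O_{1,1}$ leaf inside the degree identity and arrives at $4(1-|O_{1,1}^F|)\ge 0$, is in fact slightly more careful than the paper's corresponding step, which assumes all leaves are rows and then bounds $|E_{1,1}^F|\le 1$ separately.
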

\begin{proof}
We  show that for every $F \in \mathcal{C} \cup \widetilde H_0$, we have:
\begin{equation}   \label{E20F}
|E_{2,0}^F| + 3|E_{1,1}^F|  \le |\widetilde E^F| + O(\epsilon)n^F_r \ .
\end{equation}
By definitions of $E_{2,0}$, $E_{1,1}$ and $\widetilde E$, and using (\ref{sumRowH}), the claim then
follows by summing over all $F \in \mathcal{C} \cup \widetilde H_0$.
We distinguish four cases, according to the type of $F$.
\begin{itemize}
\item{\bf Case 1: $F = \widetilde H_0$ } \\
Since $\widetilde H_0$ contains no $O_{1,1}$ and $O_{2,0}$
vertices, it follows that $E_{1,1}^{\widetilde H_0} = E_{2,0}^{\widetilde H_0} = \emptyset$.
Hence (\ref{E20F}) trivially holds.
\item{\bf Case 2: $F$ is a CC of $\widetilde H$ obtained from a double edge or a
small cycle in $H$ } \\
Denote by $C$ the double edge or the cycle in $H$ from which $F$ is obtained.
Then $C$ is a CC of $H$, and its rows are precisely the rows of $F$ (because $\widetilde H$
was obtained from $H$ by adding columns).
$C$ has even length, with its vertices alternating between $O_{2,0}$ columns and $R_{\{2\}}^H$ $(4-)$rows.
Thus, each such row has two $O_{2,0}$ column neighbors (in $C$ and therefore in $F$) and two $\widetilde O$ column
neighbors (in $F \setminus C$). Therefore, it contributes two edges to $E_{2,0}^F$ and two to $\widetilde E^F$, i.e.:
\begin{equation} \label{E20cycle}
|E_{2,0}^F| = |\widetilde E^F| = 2n_r^F \ .
\end{equation}
Finally, observe that $O_{1,1}^F = \emptyset$: this is true because as we just noted, $C$'s columns
are only in $O_{2,0}$, and $F$ was obtained from $C$ by adding $\widetilde O$ columns (which, by definition,
are not in $O_{1,1}$).
Thus, $|O_{1,1}^F| = 0$, implying that $|E_{1,1}| = 0$.
This fact and (\ref{E20cycle}) establish (\ref{E20F}).
\item{\bf Case 3: $F$ is a CC of $\widetilde H$ obtained from a small tree in $H$ } \\
The rows of $F$ are precisely the rows of the tree
in $H$ which $F$ is obtained from. Thus,
subtracting (\ref{cycleE20}) from (\ref{cycleEtilde}), we obtain:
\begin{displaymath}
|\widetilde E^F| - |E_{2,0}^F| - |E_{1,1}^F|  = 2(r_1^F - r_3^F - 2 r_4^F) = 4 \ ,
\end{displaymath}
where the last equality follows from
$r_1^F = 2 + r_3^F + 2 r_4^F$, which holds due to (\ref{graphTheoryEq}). This implies:
\begin{equation}   \label{diffE20EtildeCycle}
|E_{2,0}^F|  \le |\widetilde E^F| - 4 \ .
\end{equation}
Now, by Corollary \ref{smallCCOne11}, $F$ can have at most one $O_{1,1}$ vertex.
By Lemma \ref{trivialH0properties} (ii),
such a vertex is a leaf in $F$, implying that $|E_{1,1}^F| = |O_{1,1}^F| \le 1$. Combining
this with (\ref{diffE20EtildeCycle}), we obtain:
\begin{displaymath}
|E_{2,0}^F| + 3|E_{1,1}^F| \le |E_{2,0}^F| + 3 \le |\widetilde E^F| - 1 \ ,
\end{displaymath}
establishing (\ref{E20F}).
\item{\bf Case 4: $F$ is a CC of $\widetilde H$ obtained from a big CC of $H$ } \\
We have:
\begin{displaymath}
|\widetilde E^F| \ge 2 r_2^F  \ge 2|O_{2,0}^F| - O(\epsilon)n_r^F = |E_{2,0}^F| - O(\epsilon)n_r^F \ ,
\end{displaymath}
where the first inequality follows from (\ref{cycleEtilde}),
the second inequality follows from Lemma \ref{boundO20}, and the equality follows from
Lemma \ref{trivialH0properties} (i).
In order to complete the proof, it suffices to show that $|E_{1,1}^F| = O(\epsilon)n_r^F$.
Denote by $F'$ the (big) CC of $H$ from which $F$ is obtained.
By Lemma \ref{Voronoi}, we have $|O_{1,1}^{F'}| = O(\epsilon)n_r^{F'}$.
Since $O_{1,1}^{F'} = O_{1,1}^F$ and similarly, the rows of $F'$ are
precisely the rows of $F$, we also have: $|O_{1,1}^F| = O(\epsilon)n_r^F$.
By Lemma \ref{trivialH0properties} (i): $|E_{1,1}^F| = |O_{1,1}^F|$.
Hence $|E_{1,1}^F| = O(\epsilon)n_r^F$, as required.
\end{itemize}
\end{proof}

\noindent
We are now ready to bound a linear combination of $X_{2,0}$ and $X_{1,1}$:
\begin{lemma} \label{LP_ineq_Type A}
For any Type A-instance:\\
\begin{equation} \label{X20inequality}
2 X_{2,0} + 3 X_{1,1} \le X_{1,2} + X_{1,3} + 2 X_{2,1} + 2 X_{2,2}
+ 3 X_{3,0} + 3 X_{3,1} + 4 X_{4,0} + O(\epsilon) \ .
\end{equation}
\end{lemma}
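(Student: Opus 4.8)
The plan is to assemble (\ref{X20inequality}) from two already-established estimates and then normalize by $|OPT|$. First I would translate Lemma \ref{E20Etilde}, which is stated in terms of edges of $\widetilde H$, back into the column variables. By Lemma \ref{trivialH0properties}(i), for every CC $F$ of $\widetilde H$ we have $|E_{1,1}^F| = |O_{1,1}^F|$ and $|E_{2,0}^F| = 2|O_{2,0}^F|$. Since the CCs of $\widetilde H$ are $O_{1,1}\cup O_{2,0}$-column-disjoint, and every $O_{1,1}$- and $O_{2,0}$-column already occurs in $H$ and hence lies in exactly one CC of $\widetilde H$ (none of them falls into $\widetilde H_0$), summing these per-component identities over all of $\widetilde H$'s CCs yields the global identities $|E_{1,1}| = |O_{1,1}|$ and $|E_{2,0}| = 2|O_{2,0}|$.

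Combining these identities with Lemma \ref{E20Etilde} gives $2|O_{2,0}| + 3|O_{1,1}| \le |\widetilde E| + O(\epsilon)n_r$. Next I would plug in Lemma \ref{lem_e_tilde_le_long_sum}, which bounds $|\widetilde E|$ above by $|O_{1,2}| + |O_{1,3}| + 2|O_{2,1}| + 2|O_{2,2}| + 3|O_{3,0}| + 3|O_{3,1}| + 4|O_{4,0}|$, to obtain the same inequality with $|\widetilde E|$ replaced by this sum of $|O_{i,j}|$'s (the $O(\epsilon)n_r$ error term is unchanged).

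Finally, I would divide through by $|OPT| = n_c$. Each ratio $|O_{i,j}|/|OPT|$ is exactly $X_{i,j}$, so the left-hand side becomes $2X_{2,0} + 3X_{1,1}$ and the main sum on the right becomes the sum of the corresponding $X_{i,j}$'s with matching coefficients. For the error term I would invoke Observation \ref{apxOptSameOrder}: since $n_r \le |APX| = \Theta(|OPT|)$, we have $O(\epsilon)n_r/|OPT| = O(\epsilon)$. This produces precisely the inequality (\ref{X20inequality}).

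The argument is essentially bookkeeping, and the one place deserving care is the step that upgrades the per-component edge identities of Lemma \ref{trivialH0properties}(i) to the global equalities $|E_{2,0}| = 2|O_{2,0}|$ and $|E_{1,1}| = |O_{1,1}|$: this relies on the fact that the CC-decomposition of $\widetilde H$ partitions the $O_{1,1}$- and $O_{2,0}$-columns with no column split across components or omitted into $\widetilde H_0$, which follows from the construction of $\widetilde H$ out of $H$. Everything else is substitution and a single division.
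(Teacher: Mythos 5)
Your proposal is correct and follows essentially the same route as the paper's proof: sum the per-component identities of Lemma \ref{trivialH0properties}(i) to get $|E_{1,1}|=|O_{1,1}|$ and $|E_{2,0}|=2|O_{2,0}|$, combine with Lemma \ref{E20Etilde} and Lemma \ref{lem_e_tilde_le_long_sum}, convert the $O(\epsilon)n_r$ error term via Observation \ref{apxOptSameOrder}, and divide by $|OPT|$. The only difference is cosmetic ordering of the steps.
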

\begin{proof}
From Lemma \ref{E20Etilde} we have: $|E_{2,0}| + 3|E_{1,1}|  \le |\widetilde E| + O(\epsilon)n_r$.
From Observation \ref{apxOptSameOrder} we obtain $n_r = \Theta(n_c)$, so
we also have: $|E_{2,0}| + 3|E_{1,1}|  \le |\widetilde E| + O(\epsilon)n_c$.
We would like to write this inequality in terms of the column sets.
By summation, Lemma \ref{trivialH0properties}(i) implies that
 $|E_{1,1}| = |O_{1,1}|$ and $|E_{2,0}| = 2|O_{2,0}|$.
Thus, we obtain:
\begin{equation} \label{O20Ef}
2|O_{2,0}| + 3|O_{1,1}|  \le |\widetilde E| + O(\epsilon)n_c \ .
\end{equation}
Using Lemma \ref{lem_e_tilde_le_long_sum}, we obtain:
\begin{displaymath}
2|O_{2,0}| + 3|O_{1,1}|  \le  |O_{1,2}| + |O_{1,3}|
+ 2|O_{2,1}| + 2|O_{2,2}| + 3|O_{3,0}| + 3|O_{3,1}| + 4|O_{4,0}| + O(\epsilon)n_c \ .
\end{displaymath}
Dividing both sides by $n_c = |OPT|$, we obtain the required inequality.
\end{proof}
   \\

By providing the last constraint, Lemma \ref{LP_ineq_Type A} concludes our construction of
the LP, which upper-bounds $\rho$ - the approximation ratio of the algorithm (for Type A-instances). Recall
that the other constraints are that the variables are non-negative and that their sum is $1$.
In addition, the variables $X_{0,0}, X_{0,1}, X_{0,2}, X_{0,3}, X_{1,0}$ are zero, by Lemma \ref{zeroXij}.
The objective function was stated in (\ref{objFunc}). Thus, the complete program is:
\begin{eqnarray}
\max &   \sum_{0 \le i+j \le 4} c_{i,j} X_{i,j} \qquad \qquad \qquad \qquad \qquad \qquad \qquad \qquad \qquad \qquad & \nonumber\\
\label{A2primalCons1}
\textrm{s.t.} & 3X_{1,1}-X_{1,2}-X_{1,3}+2X_{2,0}- 2X_{2,1}- 2X_{2,2} -3X_{3,0}-3X_{3,1}-4X_{4,0} \le O(\epsilon)   &  \qquad  \\
\label{A2primalCons2}
&  \sum_{0 \le i+j \le 4} X_{i,j}  =  1  \qquad \qquad \qquad \qquad \qquad \qquad \qquad \qquad \qquad \qquad \qquad  &   \nonumber  \\
&  X_{i,j} \ge 0  \qquad \qquad  0 \le i+j \le 4  \qquad \qquad \qquad \qquad \qquad \qquad \qquad  \quad  & \nonumber \\
& X_{0,0}, X_{0,1}, X_{0,2}, X_{0,3}, X_{1,0} = 0   \qquad \qquad \qquad \qquad \qquad \qquad \qquad \qquad \quad  & \nonumber \\
\nonumber
\end{eqnarray}
(inequality (\ref{A2primalCons1}) is obtained from (\ref{X20inequality}) by rearranging terms).
Specifically, given $\epsilon >0$, the ratio $\rho$ is upper-bounded by the objective function
value of the LP. We now turn to solve this program. We simplify it, first by omitting
the zero variables $X_{0,0}, X_{0,1}, X_{0,2}, X_{0,3}, X_{1,0}$.
Denote the set of (remaining) relevant indices by $I \equiv \{ (1,1),(1,2),(1,3),(2,0),(2,1),(2,2),
(3,0),(3,1),(4,0) \}$.
Next, since our goal is to solve the LP for arbitrarily small values of $\epsilon$,
we replace $O(\epsilon)$ in the constraint (\ref{A2primalCons1}) by zero.
Using (\ref{cij}) to obtain the explicit values of $c_{i,j}$'s, the modified LP is:
\begin{eqnarray}
\max &   \frac{19}{12}X_{1,1} + \frac{17}{12}X_{1,2} + \frac{5}{4}X_{1,3} + \frac{3}{2}X_{2,0}
+ \frac{4}{3}X_{2,1} + \frac{7}{6}X_{2,2} + \frac{5}{4}X_{3,0}
+ \frac{13}{12}X_{3,1}  + X_{4,0}  & \nonumber\\
\label{modCons1}
\textrm{s.t.} & 3X_{1,1}-X_{1,2}-X_{1,3}+2X_{2,0}- 2X_{2,1}- 2X_{2,2} -3X_{3,0}-3X_{3,1}-4X_{4,0} \le 0   &  \qquad  \\
\label{modCons2}
&  \sum_{(i,j) \in I} X_{i,j}  =  1  \qquad \qquad \qquad \qquad \qquad \qquad \qquad \qquad \qquad \qquad \qquad  &   \\
&  X_{i,j} \ge 0  \qquad \qquad  (i,j) \in I  \qquad \qquad \qquad \qquad \qquad \qquad \qquad  \quad  & \nonumber \\
\nonumber
\end{eqnarray}
In order to solve this LP, we use the dual program. Let $y,z$ be the dual variables corresponding to
constraints (\ref{modCons1}),(\ref{modCons2}), respectively. The dual program is then:\\
\begin{eqnarray*}
\min & z  \qquad \qquad \qquad \qquad \qquad \qquad   &  \\
\textrm{s.t.} & 3y+z \ge \frac{19}{12} \qquad \qquad \qquad \quad &   \\
   & -y+z \ge \frac{17}{12} \qquad   \qquad \qquad  \quad &  \\
  & -y+z \ge \frac{5}{4} \qquad   \qquad \qquad  \quad &  \\
  & 2y+z \ge \frac{3}{2} \qquad   \qquad \qquad  \quad &  \\
  & -2y+z \ge \frac{4}{3} \qquad   \qquad \qquad  \quad &  \\
  & -2y+z \ge \frac{7}{6} \qquad   \qquad \qquad  \quad &  \\
  & -3y+z \ge \frac{5}{4} \qquad   \qquad \qquad  \quad &  \\
  & -3y+z \ge \frac{13}{12} \qquad   \qquad \qquad  \quad &  \\
  & -4y+z \ge 1 \qquad   \qquad \qquad  \quad &  \\
 & y \ge 0  \qquad   \qquad \qquad  \quad &  \\
\end{eqnarray*}
Let $X^*$ be the vector consisting of $X_{1,1}=\frac{1}{4}, X_{1,2}=\frac{3}{4}$, and $X_{i,j} = 0$
for all $(i,j) \in I \setminus \{ (1,1),(1,2) \}$.
It is clear that $X^*$ is a feasible primal solution. The corresponding objective function
value is $c_{1,1}X_{1,1}^* + c_{1,2}X_{1,2}^* = \frac{19}{12} \cdot \frac{1}{4} +
 \frac{17}{12} \cdot \frac{3}{4} = \frac{35}{24}$.
Let $(y^*,z^*) \equiv (\frac{1}{24} , \frac{35}{24})$. It is straightforward to verify that
it is a feasible dual solution. The corresponding objective function value is $z^* = \frac{35}{24}$, which is
equal to that of the primal. Thus, from the duality theorem, we conclude that  $X^*$ and
$(y^*,z^*)$ are optimal solutions to the primal and dual programs, respectively. By the construction of
the (primal) LP, we conclude the following result:
\begin{theorem}  \label{apxRatioTypeA}
For Type A-instances,
$A2$ is a $(\rho + \epsilon)$-approximation algorithm for
$(2,4)$-UUSC, where $\rho \le \frac{35}{24} = 1.458333...$
\end{theorem}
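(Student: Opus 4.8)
The plan is to assemble the ingredients developed above into a single linear program whose optimum bounds the worst-case ratio on Type A-instances, and then to solve that program by exhibiting a matched primal--dual pair.

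First I would fix an arbitrary Type A-instance $I$ with local optimum $APX$ and optimal solution $OPT$, and recall that the pricing argument leading to $(\ref{objFunc})$ gives $|APX|/|OPT| = \sum_{0 \le i+j \le 4} c_{i,j} X_{i,j}$, where the $X_{i,j}$ are the column proportions of $I$. The point $(X_{i,j})$ is feasible for the program displayed just before Theorem~\ref{apxRatioTypeA}: the $X_{i,j}$ are non-negative and sum to $1$ by definition, the five variables $X_{0,0},X_{0,1},X_{0,2},X_{0,3},X_{1,0}$ vanish by Lemma~\ref{zeroXij}, and the remaining constraint $(\ref{A2primalCons1})$ is exactly Lemma~\ref{LP_ineq_Type A} after rearranging terms. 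Hence the ratio of \emph{every} Type A-instance is at most the optimal value of this LP, up to the additive $O(\epsilon)$ slack carried along in Lemma~\ref{LP_ineq_Type A}. Here it matters that, by Observation~\ref{apxOptSameOrder}, the error term $O(\epsilon)n_r$ appearing in Lemma~\ref{E20Etilde} equals $O(\epsilon)n_c = O(\epsilon)|OPT|$, so that after dividing by $|OPT|$ it contributes only an additive $O(\epsilon)$ to the ratio rather than a multiplicative factor.

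Next I would discard the zero variables and the $O(\epsilon)$ term to obtain the clean LP $(\ref{modCons1})$--$(\ref{modCons2})$, and solve it by guessing that its optimum is supported only on the two variables with the largest cost coefficients, $X_{1,1}$ and $X_{1,2}$. Imposing $(\ref{modCons1})$ with equality, i.e. $3X_{1,1}=X_{1,2}$, together with $X_{1,1}+X_{1,2}=1$, forces $X^\ast=(X_{1,1},X_{1,2})=(\frac14,\frac34)$, with objective value $\frac{19}{12}\cdot\frac14+\frac{17}{12}\cdot\frac34=\frac{35}{24}$. To certify optimality I would write the dual (minimize $z$ subject to $\alpha_{i,j}\,y+z\ge c_{i,j}$ for each of the nine remaining indices $(i,j)$, where $\alpha_{i,j}$ is the coefficient of $X_{i,j}$ in $(\ref{modCons1})$, together with $y\ge 0$) and check that $(y^\ast,z^\ast)=(\frac1{24},\frac{35}{24})$ satisfies all nine inequalities, the two binding ones being those for $X_{1,1}$ and $X_{1,2}$. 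Since the primal and dual objective values coincide at $\frac{35}{24}$, LP duality gives that $\frac{35}{24}$ is the exact LP optimum.

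Combining the two steps, every Type A-instance satisfies $|APX|/|OPT|\le \frac{35}{24}+O(\epsilon)$; since $A2$ is parameterized by $\epsilon$ and the hidden constant in this $O(\epsilon)$ is absolute, running the algorithm with a sufficiently small parameter makes the slack smaller than any prescribed $\epsilon$, yielding the claimed $(\rho+\epsilon)$ bound with $\rho\le\frac{35}{24}=1.458333\ldots$ The only place that requires genuine thought rather than bookkeeping is the choice of relaxation and of the primal guess: once one sees that a single linear inequality relating the expensive columns $X_{2,0},X_{1,1}$ to the cheaper ones is both available (Lemma~\ref{LP_ineq_Type A}) and tight at $(\frac14,\frac34)$, the remainder is a routine duality verification. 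I therefore expect essentially all of the difficulty to lie in Lemma~\ref{LP_ineq_Type A} and the $H$/$\widetilde H$ constructions behind it, with the proof of Theorem~\ref{apxRatioTypeA} itself being this short wrap-up.
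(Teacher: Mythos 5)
Your proposal is correct and follows essentially the same route as the paper: the same factor-revealing LP assembled from Lemma~\ref{zeroXij} and Lemma~\ref{LP_ineq_Type A}, the same primal optimum $X^*=(X_{1,1},X_{1,2})=(\frac14,\frac34)$, and the same dual certificate $(y^*,z^*)=(\frac1{24},\frac{35}{24})$ verified by LP duality. You are also right that the substantive work lives in Lemma~\ref{LP_ineq_Type A} and the $H$/$\widetilde H$ machinery, with the theorem itself being the routine wrap-up.
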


\subsection{Bounding $\rho$ in Type B-instances}

In this subsection we assume that the instance is of Type B, that is, $OPT$ consists of $3-$ and
$4-$columns, and $APX$ consists of $2-$ and $4-$rows. We  use the analogous notation to that of the
previous section.
\begin{definition}
For given $OPT$ and $APX$,
let $O_i^4$ be the set of $4-$columns in which $i$ elements are covered (by $4-$rows), $i=0,...,4$,
and let $X_i^4 \equiv \frac{|O_i^4|}{|OPT|}$ be the proportion of these columns in $OPT$.
Similarly, let $O_i^3$ be the set of $3-$columns in which $i$ elements are covered (by $4-$rows),
$i=0,...,3$, and let $X_i^3 \equiv \frac{|O_i^3|}{|OPT|}$.
For any graph $F$, let $O_i^{s,F} \equiv O_i^s \cap V(F)$, $s=3,4$, $i=0,...,s$.
\end{definition}
The objective function of set cover in terms of these new variables is:
\begin{equation}
|APX| = X_2+X_3+X_4 = \sum_{i=0}^4 c_i^4 |O_i^4| +  \sum_{i=0}^3 c_i^3 |O_i^3|
\end{equation}
where
\begin{displaymath} \label{ci4}
c_i^4 \equiv \frac{i}{4} + \frac{4-i}{2} = 2 - \frac{i}{4} \ ,  \qquad  i=0,...,4 \ ,
\end{displaymath}
and
\begin{displaymath} \label{ci3}
c_i^3 \equiv \frac{i}{4} + \frac{3-i}{2} = \frac{3}{2} - \frac{i}{4} \ ,  \qquad  i=0,...,3 \ .
\end{displaymath}
Explicitly, the column costs are:
\begin{equation} \label{ci34values}
(c_0^4,...,c_4^4) = (2,1.75,1.5,1.25,1) \quad , \quad (c_0^3,...,c_3^3) = (1.5,1.25,1,0.75) \ .
\end{equation}
Observe that $c_i^4 = c_{i,0}$ from the previous section ($i=0,...,4$).
The objective function of our LP, which bounds $\rho$ from above, is:
\begin{equation}
\max \sum_{i=0}^4 c_i^4 X_i^4 +  \sum_{i=0}^3 c_i^3 X_i^3 \ .
\end{equation}
Considering the highest $c_i^j$'s (i.e., the costs of the most expensive columns),
the following result is analogous to Lemma \ref{zeroXij}  and therefore its proof is omitted:
\begin{lemma}  \label{zeroX34}
For any Type B-instance, $O_0^3, O_0^4, O_1^4 = \emptyset$.
Equivalently, $X_0^3, X_0^4, X_1^4 = 0$.
\end{lemma}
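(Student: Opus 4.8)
The statement to prove is Lemma~\ref{zeroX34}: for a Type B-instance, $O_0^3=O_0^4=O_1^4=\emptyset$. The approach is exactly the local-optimality argument used for Lemma~\ref{zeroXij}, adapted to the Type B setting where $APX$ has only $2-$ and $4-$rows (no $3-$rows) and $OPT$ has only $3-$ and $4-$columns. In each case I assume the relevant column set is nonempty, exhibit a column $S$ all (or all but one) of whose elements are \emph{uncovered} (i.e. covered by $2-$rows), and then perform a single local improvement step of the type permitted by Algorithm $A2$ — removing at most $\frac1\epsilon$ $3-$ and $4-$rows and inserting $3-$ and $4-$sets — that strictly increases the objective $4X_4+X_3$. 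Since $APX$ is a local optimum, this is a contradiction.

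The concrete steps. First, $O_0^4=\emptyset$: if some $4-$column $S$ has all four elements uncovered, then inserting $S$ (a $4-$set) while removing no rows raises $X_4$ by one, increasing $4X_4+X_3$ by $4$; note no row removal is needed since the elements of $S$ were covered only by $2-$rows, which the objective ignores and which the algorithm may simply discard in the final $2-$set cleanup phase. Wait — more carefully: inserting $S$ is legal as a local step (remove zero $3,4$-rows, insert one $4$-set), and after insertion those $2-$rows become redundant; the objective strictly increases, contradiction. Second, $O_0^3=\emptyset$: identical, with $S$ a $3-$column with all three elements uncovered; inserting the $3-$set $S$ raises $X_3$ by one. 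Third, $O_1^4=\emptyset$: if a $4-$column $S$ has exactly one element covered by a $4-$row $R$ and the other three uncovered, remove $R$ from $APX$ and insert $S$ together with the $3-$row subset $R\setminus(R\cap S)$ of $R$ (legal by the subset-closure assumption); this trades one $4-$row for one $4-$set and one $3-$set, so $4X_4+X_3$ changes by $4\cdot 0 + 1 = +1 > 0$, a contradiction. In all three cases the local step involves at most one row removal, well within the $\frac1\epsilon$ budget.

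There is essentially no obstacle here: the only point that needs a word of care is that in Type B instances $APX$ contains no $3-$rows, so the replacement sets used must be checkable against the $3,4$-set-only vocabulary of the local step — which they are ($S$ is a $3-$ or $4-$column, and $R\setminus(R\cap S)$ is a $3-$set). The argument is word-for-word parallel to Lemma~\ref{zeroXij}, which is why the paper says "its proof is omitted"; I would simply note that the Type B restriction ($OPT$ having only $3-$ and $4-$columns rather than just $4-$columns) only removes cases to check, and that the absence of $3-$rows in $APX$ plays no role since none of the local steps above insert a $3-$row into the final cover without also accounting for it in the objective. Hence $X_0^3=X_0^4=X_1^4=0$.
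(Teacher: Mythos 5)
Your proof is correct and is exactly the adaptation the paper intends when it says the proof "is analogous to Lemma \ref{zeroXij} and therefore its proof is omitted": the cases $O_0^3,O_0^4$ mirror the $O_{0,i}$ argument (insert the fully uncovered column as a legal zero-removal local step) and the case $O_1^4$ mirrors the $O_{1,0}$ argument (swap the $4-$row $R$ for the column $S$ plus the $3-$set $R\setminus(R\cap S)$, gaining $+1$ in $4X_4+X_3$). Your side remarks — that the $2-$rows are not counted by the objective and that the absence of $3-$rows in $APX$ only removes cases — are the right observations and match the paper's setup.
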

The next highest coefficient is $c_2^4 = 1.5$, so we derive a bound on $X_2^4$.
The intersection graph $G$ is defined exactly the same, and we assume that it is connected and big
(i.e., Assumption \ref{Gconnected} holds for this instance type as well).
Formally, it consists of $3-$ and $4-$columns in the $OPT$ partite, and $4-$rows in the $APX$
one. As for $H$ and $\widetilde H$: \\

\noindent {\bf The $H$ subgraph} \\
$H$ is the subgraph of $G$ induced
by the $O_2^4$-columns and the ($4-$)rows intersecting them. Note that these columns are analogous
to the $O_{2,0}$ columns of Type A-instance, while there is no analog to $O_{1,1}$ columns. Thus,
$H$'s structure is the same, that is, $H$ obtained from a Type B-instance is a special case of $H$
obtained from a Type A-instance, with no $O_{1,1}$ columns.
Thus, the results from the previous section hold trivially. Specifically, regarding the $H$ subgraph,
Lemmas \ref{O11_leafH}, \ref{bigCC}
and \ref{Voronoi} are irrelevant, Lemma \ref{smallCC_H} holds, Lemma \ref{rowLeavesInT} (i) holds (part (ii)
is irrelevant), Lemma \ref{R34inNeighborhoodHle2} holds, and Lemma \ref{WAY2boundRbig}
holds. The analog of Lemma \ref{boundO20} is:
\begin{lemma} \label{boundO20_in_TypeB}
For any Type B-instance, for each big CC F: $| R_{\{2\}}^F | - O(\epsilon) n_r^F \le  |O_2^{4,F}|
\le |R_{\{2\}}^F | + O(\epsilon) n_r^F.$
\end{lemma}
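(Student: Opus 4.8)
The plan is to mirror the proof of Lemma~\ref{boundO20} almost verbatim, using the fact that a Type B-instance produces an $H$ subgraph with no $O_{1,1}$ columns, so that the estimates become slightly cleaner — in fact the lower bound will hold exactly, with no $O(\epsilon)$ slack.

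First I would fix a big CC $F$ of $H$ and sandwich the quantity $\sum_{i=1}^{4} i\,|R_{\{i\}}^F|$ between two expressions in $|R_{\{2\}}^F|$. Keeping only the degree-$2$ term gives $\sum_{i=1}^{4} i\,|R_{\{i\}}^F| \ge 2|R_{\{2\}}^F|$. In the other direction, bounding each degree-$1$ row by itself and each degree-$3$ or degree-$4$ row by $4$, and then invoking Lemma~\ref{WAY2boundRbig}(i),(ii) — which, as noted in the paragraph introducing the Type B $H$ subgraph, applies here since this $H$ is just the Type A $H$ with the $O_{1,1}$ columns removed — gives $\sum_{i=1}^{4} i\,|R_{\{i\}}^F| \le |R_{\{1\}}^F| + 2|R_{\{2\}}^F| + 4|R_{\{3,4\}}^F| \le 2|R_{\{2\}}^F| + O(\epsilon)n_r^F$.

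Next I would recompute the same edge sum from the column side. Because $H$ is the subgraph of $G$ induced by the $O_2^4$-columns and the $4-$rows meeting them, every column vertex of $F$ lies in $O_2^4$ and hence, by definition of $O_2^4$, has exactly $2$ of its elements covered by $4-$rows, so its degree in $H$ (equivalently in $F$) is $2$. Since there is no analog of the $O_{1,1}$ columns, no Voronoi-type estimate (Lemma~\ref{Voronoi}) is needed: the edge count is simply $\sum_{i=1}^{4} i\,|R_{\{i\}}^F| = \sum_{v \in O_2^{4,F}} \deg_F(v) = 2|O_2^{4,F}|$, exactly. Combining with the two-sided bound above yields $2|R_{\{2\}}^F| \le 2|O_2^{4,F}| \le 2|R_{\{2\}}^F| + O(\epsilon)n_r^F$; dividing by $2$ gives $|R_{\{2\}}^F| \le |O_2^{4,F}| \le |R_{\{2\}}^F| + O(\epsilon)n_r^F$, which is stronger than (hence implies) the claimed inequality.

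There is essentially no obstacle here: the only point requiring a moment's care is that the structural inputs used inside Lemma~\ref{WAY2boundRbig} — principally Lemma~\ref{smallCC_H} (invoked via Lemma~\ref{R34inNeighborhoodHle2} to forbid small cycles or double edges as proper subsets of a larger CC) — genuinely transfer to the Type B setting; but this is already recorded in the text, which observes that the Type B $H$ is the special case of the Type A $H$ obtained by deleting the $O_{1,1}$ columns, so all of these facts are inherited directly. The only real difference from Type A is cosmetic: the absence of $O_{1,1}$ makes the edge-count identity $\sum_i i|R_{\{i\}}^F| = 2|O_2^{4,F}|$ hold on the nose, so the lower half of the estimate carries no error term at all.
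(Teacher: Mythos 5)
Your proof is correct and is exactly the adaptation the paper intends: the lemma is stated as the analog of Lemma~\ref{boundO20}, whose proof you reproduce, correctly substituting the exact column-side edge count $2|O_2^{4,F}|$ for the Type~A count $|O_{1,1}^F|+2|O_{2,0}^F|$ and noting that the structural lemmas feeding Lemma~\ref{WAY2boundRbig} transfer since the Type~B $H$ is the Type~A $H$ without $O_{1,1}$ columns. Your observation that the lower bound then holds with no $O(\epsilon)$ slack is a valid (harmless) strengthening of the stated inequality.
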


\noindent {\bf The $\widetilde H$ graph} \\
$\widetilde H$ is, again, similar to  $\widetilde H$ from the previous section, but with no columns
analogous to $O_{1,1}$. Specifically, let $\widetilde O$ be the set of columns which intersect
some row in $H$ (i.e., a $4-$row intersecting some $O_2^4$ column). For each CC $F$ of $H$, connect
each row in $F$ to distinct vertices representing the $\widetilde O$-columns intersecting it.
Denote these vertices by $\widetilde O^F$.
Let $\widetilde E$ be the set new edges used to connect those vertices.
Also, let $\widetilde H_0$ denote the subgraph of $G$ induced by the remaining vertices (which
include all the $3-$rows), and add it to $\widetilde H$.
Finally, let $E_2^{4,F}$, $\widetilde E^F$, $E_2^4$ denote the set of edges incident
to $O_2^{4,F}$, $\widetilde O^F$, $O_2^4$ vertices, respectively.
The analog of Lemma \ref{trivialH0properties} is (only parts (i) and (iii) are relevant):
\begin{lemma}  \label{trivialH0properties_TypeB}
(i) For each $F \in \mathcal{C} \cup  \widetilde H_0$,
\begin{equation} \label{TypeB_e2o}
|E_2^{4,F}| = 2 |O_2^{4,F}| \ ,
\end{equation}
(ii) For each $F \in \mathcal{C}$, a row in $F$ which belongs to $R_{\{i\}}^H$ contributes $i$
edges to $E_2^{4,F}$ and $4-i$ edges to $\widetilde E^F$, $i=1,...,4$.
\end{lemma}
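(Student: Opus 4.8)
The plan is to transcribe the proof of Lemma~\ref{trivialH0properties} into the Type~B setting. A Type~B-instance gives rise to the same kind of auxiliary graphs $H$ and $\widetilde H$ as a Type~A-instance, except that there are no $O_{1,1}$-columns; consequently the middle item of Lemma~\ref{trivialH0properties} (about $O_{1,1}$-leaves) becomes vacuous, and only the analogs of its first and third items --- the two assertions stated here --- need to be proved.

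For part (i) I would begin with the elementary degree count in the intersection graph. Since in a Type~B-instance $APX$ contains no $3$-rows, the row side of $G$ consists solely of $4$-rows, so a column $c \in O_i^4$ has $deg_G(c) = i$: one (possibly parallel) edge per element of $c$ covered by a $4$-row, while the other $4-i$ elements of $c$ are covered by $2$-rows and contribute no edge. In particular every $c \in O_2^4$ has degree $2$ in $G$. I would then observe that this degree is unchanged when $H$ and $\widetilde H$ are built from $G$: forming $H$ only removes vertices and edges that are not incident to an $O_2^4$-column (every $4$-row meeting $c$ is retained), forming $\widetilde H$ only adds new $\widetilde O$-column vertices together with edges incident to them, and $\widetilde H_0$ contains no $O_2^4$-column. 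Hence each $c \in O_2^{4,F}$ still has degree $2$ in its component $F \in \mathcal{C} \cup \widetilde H_0$; since the edge-sets at distinct columns are disjoint, summing over $O_2^{4,F}$ yields $|E_2^{4,F}| = 2|O_2^{4,F}|$, which is (\ref{TypeB_e2o}).

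For part (ii), take a row $r$ of some $F \in \mathcal{C}$ with $r \in R_{\{i\}}^H$, so $deg_H(r) = i$; as all rows of $H$ are $4$-rows, $r$ covers exactly four base elements. Exactly $i$ of them lie in $O_2^4$-columns --- these are the endpoints of the $i$ edges at $r$ inside $H$, and those edges belong to $E_2^{4,F}$ because the $O_2^4$-columns meeting $r$ lie in $r$'s component. The remaining $4-i$ elements of $r$ lie in $OPT$-columns that are not in $O_2^4$; since $r$ is a row of $H$, each such column is by definition an $\widetilde O$-column, so when $\widetilde H$ is built $r$ is joined to the distinct vertices representing them, contributing $4-i$ edges to $\widetilde E^F$. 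This gives the asserted $i \,/\, (4-i)$ split; summing over the rows of $F$ then recovers the counting identities $|E_2^{4,F}| = r_1^F + 2r_2^F + 3r_3^F + 4r_4^F$ and $|\widetilde E^F| = 3r_1^F + 2r_2^F + r_3^F$, the analogs of (\ref{cycleE20})--(\ref{cycleEtilde}).

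The argument is essentially routine; the only points that need a moment's care are verifying that neither the construction of $H$ nor that of $\widetilde H$ alters the multi-degree of an $O_2^4$-column, and confirming that the $4-i$ leftover elements of a row $r \in H$ really sit in $\widetilde O$-columns --- which holds because each such element is covered by the $4$-row $r$ itself and hence belongs to a genuine column of $OPT$ that, by the count giving $deg_H(r) = i$, lies outside $O_2^4$ and is therefore in $\widetilde O$.
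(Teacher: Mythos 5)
Your proof is correct and fills in exactly the routine degree-counting that the paper treats as trivial (it states both Lemma \ref{trivialH0properties} and its Type~B analog without proof). The two points you flag as needing care --- that the constructions of $H$ and $\widetilde H$ preserve the multi-degree of an $O_2^4$-column, and that the $4-i$ leftover elements of a degree-$i$ row sit in $\widetilde O$-columns --- are indeed the only substantive checks, and you handle both correctly.
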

The analogs of Lemmas \ref{lem_e_tilde_le_long_sum} and  \ref{E20Etilde} are, respectively:
\begin{lemma}  \label{lem_e_tilde_le_long_sum_TypeB}
For any Type B-instance,
\begin{equation} \label{ineq_e_tilde_le_long_sum_TypeB}
|\widetilde E| \le   |O_1^3| + 2|O_2^3| +
3|O_3^4| + 3|O_3^3| + 4|O_4^4| \ ,
\end{equation}
\end{lemma}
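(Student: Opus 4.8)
The plan is to copy, essentially verbatim, the proof of Lemma~\ref{lem_e_tilde_le_long_sum}, replacing the Type~A column classes by the ones that occur in a Type~B-instance. The structural input is that in a Type~B-instance $OPT$ consists only of $3-$ and $4-$columns and the intersection graph $G$ has only $4-$rows in its $APX$-partite; consequently, for any column $v \in O_i^s$ ($s \in \{3,4\}$), the degree $deg_G(v)$ equals $i$, the number of elements of $v$ that are covered by $4-$rows.

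First I would pick an arbitrary column $o \in \widetilde O$. By construction $o$ intersects some $4-$row of $H$, hence $o$ has at least one element covered by a $4-$row, and $o \notin O_2^4$. Invoking Lemma~\ref{zeroX34}, which rules out $O_0^3, O_0^4, O_1^4$, we get that $o$ lies in exactly one of $O_1^3, O_2^3, O_3^3, O_3^4, O_4^4$, and in these five cases $o$ has, respectively, $1, 2, 3, 3, 4$ elements covered by $4-$rows.

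Next I would bound the number of $\widetilde E$-edges incident to $o$. Since a column of $\widetilde O$ may be duplicated across several CCs of $\widetilde H$, I would argue as in Lemma~\ref{lem_e_tilde_le_long_sum}: every such edge corresponds to an edge of $G$ joining $o$ to a $4-$row lying in $H$, and because the CCs of $\widetilde H$ are row-disjoint, distinct copies of $o$ use disjoint sets of these rows. Hence the total number of $\widetilde E$-edges incident to the copies of $o$ is at most $deg_G(o)$, i.e. at most the number of elements of $o$ covered by $4-$rows. Summing this bound over all columns $o \in \widetilde O$, grouped by the five types listed above, gives
\begin{displaymath}
|\widetilde E| \le |O_1^3| + 2|O_2^3| + 3|O_3^3| + 3|O_3^4| + 4|O_4^4| \ ,
\end{displaymath}
which is (\ref{ineq_e_tilde_le_long_sum_TypeB}) up to reordering of the terms.

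I do not expect any genuine obstacle: the argument is a direct specialization of Lemma~\ref{lem_e_tilde_le_long_sum}. The only point that needs a sentence of care, exactly as in that lemma, is the bookkeeping for $\widetilde O$-columns appearing in more than one connected component of $\widetilde H$, and this is settled by the row-disjointness of those components together with the identity $deg_G(v) = i$ for $v \in O_i^s$.
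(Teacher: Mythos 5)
Your proof is correct and is exactly the argument the paper intends: the paper states this lemma as the direct analog of Lemma \ref{lem_e_tilde_le_long_sum} and omits the proof, and your specialization (each $\widetilde O$-column in $O_i^s$ contributes at most $i$ edges to $\widetilde E$, summed over the five column types surviving Lemma \ref{zeroX34}) is the same counting argument, with the multi-CC bookkeeping handled just as in the Type A case.
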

\begin{lemma}  \label{E20Etilde_TypeB}
For any Type B-instance,
\begin{equation} \label{TypeB_e_ePLUSo}
|E_2^4|  \le |\widetilde E| + O(\epsilon)n_r \ .
\end{equation}
\end{lemma}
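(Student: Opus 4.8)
The plan is to mirror the componentwise proof of Lemma~\ref{E20Etilde}. Concretely, I would show that for every $F \in \mathcal{C} \cup \widetilde H_0$ the local inequality $|E_2^{4,F}| \le |\widetilde E^F| + O(\epsilon)n_r^F$ holds, and then sum over all such $F$, using the row-partition identity~(\ref{sumRowH}) together with $E_2^4 = \bigcup_F E_2^{4,F}$ and $\widetilde E = \bigcup_F \widetilde E^F$ to recover~(\ref{TypeB_e_ePLUSo}). The essential simplification relative to the Type~A argument is that a Type~B instance has no $O_{1,1}$ columns, so the $3|E_{1,1}|$ term disappears entirely; moreover every $O_2^4$ column has exactly two neighbours in $H$ (the two $4$-rows covering its two covered elements, both of which belong to $H$ by construction), hence every leaf of $H$, and of each connected component of $H$, is a $4$-row vertex. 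This is precisely what lets the Type~A structural machinery about $H$ specialize cleanly.

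I would then distinguish the same four cases for $F$ as in Lemma~\ref{E20Etilde}. If $F = \widetilde H_0$, there are no $O_2^4$ columns, so $E_2^{4,\widetilde H_0} = \emptyset$ and the inequality is immediate. If $F$ is obtained from a double edge or a small cycle of $H$, its vertices alternate between $O_2^4$ columns and degree-$2$ $4$-rows, so each row contributes two edges to $E_2^{4,F}$ and two edges to $\widetilde E^F$, giving $|E_2^{4,F}| = |\widetilde E^F| = 2n_r^F$. If $F$ is obtained from a small tree of $H$, Lemma~\ref{trivialH0properties_TypeB}(ii) gives $|E_2^{4,F}| = r_1^F + 2r_2^F + 3r_3^F + 4r_4^F$ and $|\widetilde E^F| = 3r_1^F + 2r_2^F + r_3^F$; subtracting and applying the tree leaf-count identity~(\ref{graphTheoryEq}), which is valid here since all leaves of $F$ are $R_{\{1\}}$ vertices, yields $|\widetilde E^F| - |E_2^{4,F}| = 2(r_1^F - r_3^F - 2r_4^F) = 4$, so in particular $|E_2^{4,F}| \le |\widetilde E^F|$. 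If $F$ is obtained from a big component of $H$, Lemma~\ref{trivialH0properties_TypeB}(ii) again gives $|\widetilde E^F| \ge 2r_2^F$, Lemma~\ref{boundO20_in_TypeB} gives $2r_2^F \ge 2|O_2^{4,F}| - O(\epsilon)n_r^F$, and Lemma~\ref{trivialH0properties_TypeB}(i) gives $2|O_2^{4,F}| = |E_2^{4,F}|$; chaining these three gives $|E_2^{4,F}| \le |\widetilde E^F| + O(\epsilon)n_r^F$.

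Summing the four per-component bounds over $F \in \mathcal{C} \cup \widetilde H_0$ then closes the argument: the left-hand sides sum to $|E_2^4|$, the $|\widetilde E^F|$ terms sum to $|\widetilde E|$, and by~(\ref{sumRowH}) the error terms sum to $O(\epsilon)n_r$. I do not anticipate a real obstacle; the one point needing care is confirming that the Type~A lemmas invoked here (the structure theorem for small components, Lemma~\ref{smallCC_H}; the leaf bound, Lemma~\ref{rowLeavesInT}; and the derivation of~(\ref{graphTheoryEq})) genuinely carry over once $O_{1,1}$ columns are excluded — which reduces to the observation that no component of $H$ can have a column leaf — together with the routine bookkeeping for degenerate small components (a single double edge, a tiny tree, and so on), all of which are absorbed by the alternation argument of the cycle case and the leaf identity of the tree case.
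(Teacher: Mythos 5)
Your proposal is correct and follows exactly the route the paper takes: the paper's proof of this lemma is literally the statement that it is identical to the proof of Lemma~\ref{E20Etilde} after substituting $E_2^{4,F}$ for $E_{2,0}^F$ and $\emptyset$ for $E_{1,1}^F$, and your four-case componentwise argument is a faithful expansion of that substitution. Your added observation that every $O_2^4$ column has degree $2$ in $H$ (so no component has a column leaf) is precisely the point that makes the Type~A tree machinery carry over.
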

(The proof of Lemma \ref{E20Etilde_TypeB} is identical to that of Lemma \ref{E20Etilde} with substituting
$E_2^{4,F}$ for $E_{2,0}^F$ and $\emptyset$ for $E_{1,1}^F$). \\

\noindent
Using (\ref{TypeB_e2o}) and summing over all $F \in \mathcal{C} \cup  \widetilde H_0$, we obtain:
\begin{equation} \label{E24_2O24_TypeB}
|E_2^4| = \sum_{F \in \mathcal{C} \cup  \widetilde H_0} |E_2^{4,F}|
=  \sum_{F \in \mathcal{C} \cup  \widetilde H_0} 2 |O_2^{4,F}| = 2 |O_2^4| \ .
\end{equation}
Now, substituting (\ref{E24_2O24_TypeB}) in the left-hand side of (\ref{TypeB_e_ePLUSo}),
and (\ref{ineq_e_tilde_le_long_sum_TypeB}) in its right-hand side,
and using $n_r = \Theta(n_c)$ (from Observation \ref{apxOptSameOrder}), we obtain:
\begin{displaymath} \label{final ineq_o_TypeB}
2 |O_2^4| \le   |O_1^3| + 2|O_2^3| +
3|O_3^4| + 3|O_3^3| + 4|O_4^4| + O(\epsilon)n_c \ .
\end{displaymath}
Dividing by $n_c = |OPT|$, we obtain the analog of Lemma \ref{LP_ineq_Type A}:
\begin{lemma} \label{LP_ineq_Type B}
For any Type B-instance:
\begin{displaymath}
2 X_2^4 \le   X_1^3 + 2X_2^3 +
3X_3^4 + 3X_3^3 + 4X_4^4 + O(\epsilon)n_c \ .
\end{displaymath}
\end{lemma}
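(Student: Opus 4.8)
The plan is to obtain the inequality by chaining together the three facts established immediately above---the identity~(\ref{E24_2O24_TypeB}), Lemma~\ref{E20Etilde_TypeB}, and Lemma~\ref{lem_e_tilde_le_long_sum_TypeB}---and then normalizing by $|OPT|$. First I would recall why~(\ref{E24_2O24_TypeB}) holds: although an $\widetilde O$-column may occur as several distinct vertices spread over different CCs of $\widetilde H$, the $O_2^4$-columns and the rows of $H$ are never duplicated in $\widetilde H$, so every edge incident to an $O_2^4$-column belongs to exactly one $F \in \mathcal{C} \cup \widetilde H_0$. Hence the sets $E_2^{4,F}$ partition $E_2^4$, and summing~(\ref{TypeB_e2o}) over all $F$ gives $|E_2^4| = \sum_F |E_2^{4,F}| = \sum_F 2|O_2^{4,F}| = 2|O_2^4|$.

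Next I would feed this into Lemma~\ref{E20Etilde_TypeB}. Replacing $|E_2^4|$ by $2|O_2^4|$ on the left-hand side of~(\ref{TypeB_e_ePLUSo}) and bounding $|\widetilde E|$ on the right-hand side via Lemma~\ref{lem_e_tilde_le_long_sum_TypeB} yields
\[
2|O_2^4| \;\le\; |O_1^3| + 2|O_2^3| + 3|O_3^4| + 3|O_3^3| + 4|O_4^4| + O(\epsilon)\,n_r .
\]
By Observation~\ref{apxOptSameOrder} we have $n_r \le |APX| = \Theta(|OPT|) = \Theta(n_c)$, so the error term $O(\epsilon)\,n_r$ is $O(\epsilon)\,n_c$. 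Dividing through by $n_c = |OPT|$ and invoking the definitions $X_i^4 = |O_i^4|/|OPT|$ and $X_i^3 = |O_i^3|/|OPT|$ then produces the claimed inequality.

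Since every step substitutes an already-proved relation, there is no genuine obstacle here; all the real work was done in the structural analysis of $H$ and $\widetilde H$ (which is why Lemma~\ref{E20Etilde_TypeB} could simply be inherited from the Type~A argument). The only two points that deserve care are the bookkeeping of the $O(\epsilon)$ term under the replacement $n_r \mapsto n_c$, and the verification that the edge families $E_2^{4,F}$ genuinely partition $E_2^4$---without the duplication of $\widetilde O$-columns being an issue---so that the sum collapses to exactly $2|O_2^4|$ rather than overcounting.
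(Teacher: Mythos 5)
Your proposal is correct and follows essentially the same route as the paper: sum (\ref{TypeB_e2o}) over all components to get $|E_2^4| = 2|O_2^4|$, substitute this and Lemma~\ref{lem_e_tilde_le_long_sum_TypeB} into Lemma~\ref{E20Etilde_TypeB}, convert $O(\epsilon)n_r$ to $O(\epsilon)n_c$ via Observation~\ref{apxOptSameOrder}, and divide by $n_c = |OPT|$. Your extra remark that the families $E_2^{4,F}$ genuinely partition $E_2^4$ (since only $\widetilde O$-columns, not $O_2^4$-columns or rows, are duplicated across components of $\widetilde H$) is a justification the paper leaves implicit, and it is correct.
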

Using (\ref{ci34values}), the inequality from Lemma \ref{LP_ineq_Type B},
and  substituting $X_0^3, X_0^4, X_1^4 = 0$ (by Lemma \ref{zeroX34}),
we obtain the following LP, which upper-bounds $\rho$ for Type B-instances:
\begin{eqnarray}
\max &  1.5 X_2^4 + 1.25 X_3^4 + X_4^4
+ 1.25 X_1^3 + X_2^3 + 0.75 X_3^3  & \nonumber\\
\textrm{s.t.} &
2 X_2^4  - 3X_3^4  - 4X_4^4 - X_1^3 - 2X_2^3 - 3X_3^3  \le 0  &  \qquad \nonumber  \\
 &  X_2^4 + X_3^4 + X_4^4 + X_1^3 + X_2^3 + X_3^3  = 1 \qquad  \quad  & \nonumber \\
&  X_2^4,X_3^4,X_4^4, X_1^3,X_2^3,X_3^3  \ge 0   \qquad  \quad  & \nonumber \\
\nonumber
\end{eqnarray}
The dual program is:
\begin{eqnarray*}
\min & z  \qquad \qquad \qquad \qquad \qquad \qquad   &  \\
\textrm{s.t.}   & 2y+z \ge 1.5 \qquad   \qquad \qquad  \quad &  \\
  & -3y+z \ge 1.25 \qquad   \qquad \qquad  \quad &  \\
  & -4y+z \ge 1 \qquad   \qquad \qquad  \quad &  \\
  & -y+z \ge 1.25 \qquad   \qquad \qquad  \quad &  \\
  & -2y+z \ge 1 \qquad   \qquad \qquad  \quad &  \\
  & -3y+z \ge 0.75 \qquad   \qquad \qquad  \quad &  \\
 & y \ge 0  \qquad   \qquad \qquad  \quad &  \\
\end{eqnarray*}
It is straightforward to verify that:
\begin{displaymath}
X^* \equiv (X_2^4, X_3^4, X_4^4, X_1^3, X_2^3, X_3^3 )
= (\frac{3}{5} , \frac{2}{5} , 0 , 0 , 0 , 0 )
\end{displaymath}
and
\begin{displaymath}
(y^*,z^*) = (\frac{1}{20} , \frac{7}{5}  )
\end{displaymath}
are primal and dual feasible solutions, respectively, achieving the same objective function
value of $\frac{7}{5}$. Thus, they are optimal solution, which implies:
\begin{theorem} \label{apxRatioTypeB}
For Type B-instances,
$A2$ is a $(\rho + \epsilon)$-approximation algorithm for
$(2,4)$-UUSC, where $\rho \le \frac{7}{5} = 1.4$.
\end{theorem}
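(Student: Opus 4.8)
The plan is to argue, exactly as in the Type A case, that the linear program displayed just above the statement is a valid relaxation of the worst-case approximation ratio over all Type B-instances, and then to certify its optimal value by exhibiting matching primal and dual feasible solutions.

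First I would assemble the relaxation. For any Type B-instance, the point $(X_2^4, X_3^4, X_4^4, X_1^3, X_2^3, X_3^3)$ read off from $APX$ and $OPT$ is non-negative and has coordinate sum $1$ by definition; the coordinates $X_0^3, X_0^4, X_1^4$ vanish by Lemma \ref{zeroX34}; and the single nontrivial inequality $2X_2^4 - 3X_3^4 - 4X_4^4 - X_1^3 - 2X_2^3 - 3X_3^3 \le O(\epsilon)$ holds by Lemma \ref{LP_ineq_Type B}. Since the pricing argument of this subsection shows that the approximation ratio on that instance equals $\sum_{i} c_i^4 X_i^4 + \sum_{i} c_i^3 X_i^3$, the maximum of this objective over the feasible region is an upper bound on $\rho$. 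The $O(\epsilon)$ slack on the right-hand side only perturbs the optimum by $O(\epsilon)$, which is absorbed into the additive $\epsilon$ in the statement, so it suffices to bound the LP with that slack set to zero, which is precisely the program written above.

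Next I would solve that LP by weak duality. The primal point $X^* = (\tfrac{3}{5}, \tfrac{2}{5}, 0, 0, 0, 0)$ is feasible, since its coordinates are non-negative and sum to $1$ and $2\cdot\tfrac{3}{5} - 3\cdot\tfrac{2}{5} = 0 \le 0$; its objective value is $\tfrac{3}{2}\cdot\tfrac{3}{5} + \tfrac{5}{4}\cdot\tfrac{2}{5} = \tfrac{7}{5}$. The dual point $(y^*, z^*) = (\tfrac{1}{20}, \tfrac{7}{5})$ is feasible, since $y^* \ge 0$ and each of the six dual constraints reduces to a one-line numerical check (e.g. $2y^* + z^* = \tfrac{1}{10} + \tfrac{7}{5} = \tfrac{3}{2}$, $-3y^* + z^* = \tfrac{7}{5} - \tfrac{3}{20} = \tfrac{5}{4}$, $-4y^* + z^* = \tfrac{7}{5} - \tfrac{1}{5} = \tfrac{6}{5} \ge 1$, and the rest similarly), with dual objective $z^* = \tfrac{7}{5}$. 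As the two objective values coincide, both solutions are optimal, the LP value is $\tfrac{7}{5}$, and therefore $\rho \le \tfrac{7}{5}$.

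The real content lies not in this last step — which is mechanical once the LP is in hand — but in the structural lemmas feeding it, above all Lemma \ref{LP_ineq_Type B} and the $H$ and $\widetilde H$ machinery it rests on; those are, however, already in place and are largely inherited from the Type A analysis, with $O_2^4$ playing the role of $O_{2,0}$ and no column of the $O_{1,1}$ type present. The only point needing a little care is the bookkeeping of the $O(\epsilon)$ terms: one checks that replacing the right-hand-side slack by zero changes the LP optimum by at most $O(\epsilon)$ (it does, since the feasible region varies continuously with that constant and the objective is bounded on it), so that the clean bound $\tfrac{7}{5}$ is recovered up to the stated additive $\epsilon$.
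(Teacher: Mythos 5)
Your proposal is correct and follows exactly the paper's argument: assemble the Type B factor-revealing LP from Lemmas \ref{zeroX34} and \ref{LP_ineq_Type B} together with the pricing identity, then certify its value $\frac{7}{5}$ via the same primal solution $X^*=(\frac{3}{5},\frac{2}{5},0,0,0,0)$ and dual solution $(y^*,z^*)=(\frac{1}{20},\frac{7}{5})$. The numerical verifications all check out, so nothing further is needed.
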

Combining Theorems \ref{apxRatioTypeA} and \ref{apxRatioTypeB}, and using Assumption \ref{opt4sets},
 altogether we obtain:
\begin{theorem}
$A2$ is a $(\rho + \epsilon)$-approximation algorithm for
$(2,4)$-UUSC, where $\rho \le \frac{35}{24} = 1.458333...$ .
\end{theorem}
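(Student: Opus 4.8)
The plan is to obtain the statement as a short combination of Lemma~\ref{Lemma_opt4sets} with Theorems~\ref{apxRatioTypeA} and~\ref{apxRatioTypeB}; the substantive work has already been done, and all that remains is to reduce an arbitrary instance to one of the two structured types and then take the larger of the two guarantees. I would start from an arbitrary instance with $APX$ a local optimum of $A2$ and $OPT$ an optimal cover, and set $SOL=OPT$. After the preliminary reductions (closing the input under taking subsets, and the argument preceding Assumption~\ref{apx_opt}, which lets us assume $APX\cap OPT=\emptyset$ while only possibly increasing the ratio), Lemma~\ref{Lemma_opt4sets} yields an instance $I'$ with a local optimum $APX'$ and a feasible solution $SOL'$ satisfying $\rho'\equiv|APX'|/|SOL'|=|APX|/|OPT|\equiv\rho$, where $SOL'$ has no $2$-columns and, in addition, either $SOL'$ has no $3$-columns or $APX'$ has no $3$-rows.

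Next I split on property (iii) of that lemma. If $SOL'$ contains no $3$-columns then, combined with the absence of $2$-columns, $SOL'$ consists solely of $4$-columns, so $I'$ is a Type~A instance (in the sense of Assumption~\ref{opt4sets}, where $APX'$ may freely contain $2$-, $3$-, and $4$-rows), and Theorem~\ref{apxRatioTypeA} gives $|APX'|/|OPT_{I'}|\le\frac{35}{24}+\epsilon$. Otherwise $APX'$ has no $3$-rows, hence uses only $2$- and $4$-rows, while $SOL'$ (still without $2$-columns) uses only $3$- and $4$-columns, so $I'$ is a Type~B instance, and Theorem~\ref{apxRatioTypeB} gives $|APX'|/|OPT_{I'}|\le\frac{7}{5}+\epsilon$. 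Since $SOL'$ is feasible for $I'$ we have $|OPT_{I'}|\le|SOL'|$, so $\rho=\rho'=|APX'|/|SOL'|\le|APX'|/|OPT_{I'}|$, and in either case $\rho\le\max\{\frac{35}{24},\frac{7}{5}\}+\epsilon=\frac{35}{24}+\epsilon=1.458333\ldots+\epsilon$. The $+\epsilon$ slack is already contained in Theorems~\ref{apxRatioTypeA} and~\ref{apxRatioTypeB}: it originates from the $O(\epsilon)n_r$ terms in the factor-revealing LPs, together with the observation (Lemma~\ref{smallCC} and Assumption~\ref{Gconnected}) that small connected components of the intersection graph are covered optimally, so that the per-component ratios may be aggregated.

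The combination itself is routine; the real obstacles are the ingredients. The hardest is Theorem~\ref{apxRatioTypeA}, whose proof needs the full structural study of the $H$ and $\widetilde H$ graphs — Corollary~\ref{smallCCOne11}, Lemmas~\ref{Voronoi}, \ref{smallCC_H}, \ref{rowLeavesInT}, \ref{R34inNeighborhoodHle2}, \ref{WAY2boundRbig}, \ref{boundO20}, \ref{E20Etilde}, among others — just to establish the single LP constraint of Lemma~\ref{LP_ineq_Type A} that bounds the expensive columns $O_{2,0}$ and $O_{1,1}$. The other delicate point is Lemma~\ref{Lemma_opt4sets}, whose two-phase extension of sets must be designed so that local optimality of $APX'$ is preserved (in particular, in Phase~$2$ only $3$-rows, not $2$-rows, may be extended). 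One caveat worth flagging when writing the final proof is that Lemma~\ref{Lemma_opt4sets} guarantees $SOL'$ only feasible, not optimal, for $I'$; but as used above this is harmless, since an upper bound on $|APX'|/|OPT_{I'}|$ is a fortiori an upper bound on $|APX'|/|SOL'|=\rho$.
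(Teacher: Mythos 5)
Your proposal is correct and follows essentially the same route as the paper: the paper's proof of this theorem is exactly the one-line combination of Theorems~\ref{apxRatioTypeA} and~\ref{apxRatioTypeB} via Assumption~\ref{opt4sets}, which is in turn justified by Lemma~\ref{Lemma_opt4sets}. Your added caveat about $SOL'$ being merely feasible is well taken and handled in the right spirit, since the Type~A/B analyses only use local optimality of $APX'$ and the structure of the comparison cover, never its optimality.
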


In the following, we provide an example for which $\rho = \frac{25}{18} = 1.3888...$ .
The instance is of Type A.
Let $|OPT| = 36m$ for any fixed $m$, that is, $OPT$ consists of $36m$ $4-$columns, denoted $O_1,...,O_{36m}$,
covering $n = 144m$ base elements. The construction of a local optimum $APX$ is as follows.
The $4-$rows in $APX$ consist of two sets: In the first one, for each $i=1,...,12m-1$, there is a $4-$row
which intersects (i.e., covers a single element of) the four columns $O_{3i-2},...,O_{3i+1}$, and
there is one additional $4-$row intersecting $O_1,O_{36m-2},O_{36m-1},O_{36m}$. Thus, the first set contains
$12m$ rows. In the second set, for each $i=1,...,3m$, there are \emph{two} $4-$rows: one intersecting
$O_{3i-1},O_{9m+3i-1},O_{18m+3i-1},O_{27m+3i-1}$, and another one intersecting $O_{3i},O_{9m+3i},O_{18m+3i},O_{27m+3i}$.
Thus, the second set contains $6m$ rows, so the total number of $4-$rows in $APX$ is $X_4 = 18m$.

As for the $3-$rows in $APX$, for each $i=1,...,4m$, there is one $3-$row intersecting $O_{3i-1},O_{12m+3i-1},O_{24m+3i-1}$, and another one intersecting $O_{3i},O_{12m+3i},O_{24m+3i}$. Thus, the
total number of $3-$rows is $X_3 = 8m$.

For a given $\epsilon > 0$, taking $m$ large enough ensures that $APX$ is a local optimum.
Using the pricing scheme, it is easily verified that the $12m$ columns: $O_{3i-2}$, $i=1,...,12m$,
are in $O_{2,0}$, and the remaining $24m$ columns are in $O_{2,1}$. Hence $X_{2,0} = \frac{1}{3}$, $X_{2,1} = \frac{2}{3}$.
The corresponding costs are,  by (\ref{cij}),  $c_{2,0} = \frac{3}{2}$,  $c_{2,1} = \frac{4}{3}$. The obtained
approximation ratio is therefore:
\begin{displaymath}
\rho = c_{2,0}X_{2,0} + c_{2,1}X_{2,1}
= \frac{3}{2} \cdot \frac{1}{3} + \frac{4}{3} \cdot \frac{2}{3}
= \frac{25}{18}.
\end{displaymath}

\section{Concluding remarks}  \label{concluding}
In this paper we focused on a special case of the unweighted $k$-set cover problem.  We proposed a new paradigm to approach instances of this problem, and we showed that it gives better results than the previous known algorithms for unweighted $k$-set cover.  Our proof is for a restricted case in which the instance contains all the pairs of elements.  The technical reason to consider this special case is that all previous known improvements over the greedy algorithm have a special treatment of singletons, which makes the algorithms and their analysis much more complicated.  By neglecting this technical problem, we can concentrate on the way to handle the selection of large sets.

In this paper we showed that the non-oblivious local search methodology can outperform the other methods to approximate  unweighted $k$-set cover, and we conjecture that this is the case for the generalized case and not only for $(2,4)$-uniform instances.  We leave as major open problems the tuning of the parameters for the non-oblivious local search algorithm (i.e., the weights used in the objective function of the local search), as well as the analysis of the resulting algorithm for  unweighted $k$-set cover.

\appendix
\section{Proof of Theorem \ref{LP_solution}}
We  prove that the solution for $(P)$ stated in the theorem is optimal,
and then compute its objective function value.
In order to show optimality, we  construct the dual program of $(P)$,
denoted $(D)$, provide a feasible solution to it, and then use a complementary
slackness argument. By the complementary slackness, we  conclude that both
solutions are optimal. Then, we  compute the objective function
value of the primal solution. We start by constructing $(D)$. The dual decision variables are:
\begin{itemize}
\item $\beta_{p+1},...,\beta_k$ - correspond to the set of constraints (\ref{primalCons1}),
\item $\beta_{p}$ - corresponds to constraint (\ref{primalCons2}),
\item $\gamma_{p+1},...,\gamma_k$ - correspond to the set of constraints (\ref{primalCons3}).
\end{itemize}
The dual program is:\\

\noindent {\bf Program} {\boldmath $(D)$}
\begin{eqnarray}
\min & \sum_{i=p}^k \beta_i \qquad \qquad \qquad \qquad \qquad   &  \nonumber\\
\label{dualCons1}
\textrm{s.t.} &
\beta_k - j \gamma_k \ge \frac{j}{k} \qquad \qquad \qquad \quad &  j=1,...,k-1  \\
\label{dualCons2}
& \beta_k - (k-2) \gamma_k \ge 1 \qquad \qquad  \quad &  \\
\label{dualCons3} & \beta_i - j \gamma_i + j \gamma_{i+1} \ge
\frac{j}{i(i+1)} \qquad
& i=p+1,...,k-1, \quad j=1,...,i-1  \\
\label{dualCons4} & \beta_i - (i-2) \gamma_i + i \gamma_{i+1} \ge
\frac{1}{i+1}  \quad
&  i=p+1,...,k-1   \\
\label{dualCons5}
&  \beta_p + p \gamma_{p+1} \ge \frac{1}{p+1} \quad \qquad \qquad  &    \\
 &  \beta_i, \gamma_j \ge 0 \qquad \qquad \qquad \qquad \quad  &  i=p,...,k, \quad j=p+1,...,k \ . \nonumber \\ \nonumber
\end{eqnarray}
For this LP, the primal variables $a_{k,1},...,a_{k,k-1}$ correspond to the set of constraints (\ref{dualCons1}).
$a_{k,k}$ corresponds to (\ref{dualCons2}). $a_{i,j}$, $i=p+1,...,k-1$, $j=1,...,i-1$ correspond to
(\ref{dualCons3}). $a_{i,i}$, $i=p+1,...,k-1$ correspond to (\ref{dualCons4}),
and $a_{p,p}$ corresponds to (\ref{dualCons5}).

The dual solution is the following (it is the same for the two cases distinguished in $(P)$, depending
on the parity of $k-p$):
\begin{equation} \label{dual_solution}
\begin{array}{llll}
\label{dualSol}
\gamma_k  &=  & \frac{1}{k(k-1)}  &  \\
\gamma_{k-1} &= &  0  &  \\
\gamma_i &= & \gamma_{i+2} + \frac{2}{i(i+1)(i+2)} \ , &  \textrm{for all } i=p+1,...,k-2    \\
\beta_k &= & 1 + (k-2)\gamma_k  &   \\
\beta_i &= & \frac{1}{i+1} - i\gamma_{i+1} + (i-2)\gamma_i \ , & \textrm{for all } i=p+2,...,k-1   \\
\beta_{p+1} &= & \frac{p}{(p+1)(p+2)} + p\gamma_{p+1} -p\gamma_{p+2}  &   \\
\beta_p &= & \frac{1}{p+1}-p\gamma_{p+1} \ . &
\end{array}
\end{equation}

We proceed to verify that the primal and dual solutions which we constructed
are indeed feasible. In addition, we identify the set of tight constraints.
\begin{lemma}  \label{primalFeasTight}
(Primal Feasibility)\\
The primal solution stated in Theorem \ref{LP_solution}
is feasible for $(P)$. Moreover, the set of constraints
(\ref{primalCons1}),(\ref{primalCons2}) and (\ref{primalCons3}) are tight,
except for
(\ref{primalCons3}) for the value of $i=k-1$ when $k-p$ is odd.
\end{lemma}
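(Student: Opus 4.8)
The plan is to verify the primal solution of Theorem~\ref{LP_solution} constraint by constraint, exploiting one simple structural feature of that solution. Non-negativity is immediate, since every $a_{i,j}$ is $0$ or $1$. The feature I would extract first is that, in both Case~1 and Case~2, for each fixed first index $i\in\{p,p+1,\dots,k\}$ there is exactly one nonzero variable, say $a_{i,c(i)}=1$, and the column index always satisfies $c(i)\in\{i-1,i\}$. This is read off the two lists in the theorem: one has $c(i)=i$ when $i-p$ is even and $c(i)=i-1$ when $i-p$ is odd, for $p\le i\le k$ in Case~1 and for $p\le i\le k-2$ in Case~2, while in Case~2 the two extremal rows are the off-pattern entries $c(k-1)=k-2$ and $c(k)=k$. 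This observation already settles the easy constraints: each row sum in (\ref{primalCons1}) equals exactly $1$, so all of (\ref{primalCons1}) hold with equality, and $a_{p,p}=1$ (the $j=0$ term), so (\ref{primalCons2}) holds with equality; only the smallest case $k=p+1$ needs a separate direct check.

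For the approximation-consistency constraints (\ref{primalCons3}) I would rewrite the left-hand side at index $i$ using $c$. Because $c(i-1)\le i-1$ always, the first sum equals $\sum_{j=1}^{i-1} j\,a_{i-1,j}=c(i-1)$. The second sum, $\sum_{j=1}^{i-1} j\,a_{i,j}$, equals $c(i)$ when $c(i)=i-1$ and equals $0$ when $c(i)=i$ (the unique nonzero entry of row $i$ then sits at column $i$, outside the range of summation), and $a_{i,i}=1$ precisely when $c(i)=i$. Hence the left-hand side of (\ref{primalCons3}) is $c(i-1)-(i-2)$ if $c(i)=i$, and $c(i-1)-(i-1)$ if $c(i)=i-1$. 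In the latter case it is automatically $\le 0$, with equality exactly when $c(i-1)=i-1$; in the former case it is $\le 0$ exactly when $c(i-1)\le i-2$.

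It then remains to feed in the explicit pattern of $c$. In Case~1 the parity alternation ($c(i)=i$ for $i-p$ even, $c(i)=i-1$ for $i-p$ odd) holds throughout $p\le i\le k$, so $c(i)=i$ forces $c(i-1)=i-2$ and $c(i)=i-1$ forces $c(i-1)=i-1$; in both branches the left-hand side of (\ref{primalCons3}) is exactly $0$, so every such constraint is tight. In Case~2 the same alternation holds for $p\le i\le k-2$, giving tightness there by the identical computation; for the last two indices I use the off-pattern values. At $i=k$ we have $c(k)=k$ and $c(k-1)=k-2=i-2$, so the left-hand side is $0$, tight. At $i=k-1$ we have $c(k-1)=k-2=i-1$ while the previous row carries $c(k-2)=k-3=i-2$ (its offset $k-2-p$ being odd), so the left-hand side equals $(k-3)-(k-2)=-1<0$: feasible but not tight. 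This is exactly the exception recorded in the statement, and it also pins down the identity of the unique slack constraint, which matters for the complementary-slackness step to follow.

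I do not expect a genuine obstacle: the argument is purely combinatorial bookkeeping once the ``one nonzero entry per row'' structure is isolated. The only points requiring care are the two extremal rows $i=k-1$ and $i=k$ of Case~2, whose entries $a_{k-1,k-2}=a_{k,k}=1$ deliberately break the parity alternation and so cannot be handled by the generic formula, and the boundary and degenerate instances ($i=p+1$, and $k=p+1$), which should be checked by hand so that the index arithmetic is neither vacuous nor off by one.
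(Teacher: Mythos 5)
Your proposal is correct and follows essentially the same route as the paper: direct verification that each row of the solution has a single unit entry (giving tightness of (\ref{primalCons1}) and (\ref{primalCons2})), followed by a parity case analysis of (\ref{primalCons3}) that isolates $i=k-1$ in the odd case as the unique slack constraint with left-hand side $-1$. Your packaging via the column function $c(i)\in\{i-1,i\}$ is a clean reorganization of the paper's explicit case split on the parity of the offset, and your flagged boundary checks ($i=p+1$, $k=p+1$) are reasonable extra care rather than a gap.
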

\begin{proof}
First, it is trivial that the non-negativity constraints are satisfied since our
solution is $0/1$. The set of constraints (\ref{primalCons1}) are satisfied, and
tight, since for each $i=p+1,...,k$, there exists exactly one $j$ index such that
$a_{i,j}=1$ and for all other $j$ values $a_{i,j}=0$. Similarly, the constraint
(\ref{primalCons2}) is tight, as $a_{p,p}=1$.
As for the set of constraints (\ref{primalCons3}),  which for convenience we rewrite as:
\begin{displaymath}
\sum_{j=1}^{p+l} j a_{p+l,j} - \sum_{j=1}^{p+l} j a_{p+l+1,j} - (p+l-1)a_{p+l+1,p+l+1}  \le   0 \ ,
\qquad  l=0,...,k-p-1 \ ,
\end{displaymath}
we distinguish:
\begin{itemize}
\item{\bf Case 1:}
$k-p$ even:
\begin{itemize}
\item{}
For even values of $l$, $0 \le l \le k-p-2$,           
the first sum is $p+l$ since $a_{p+l,p+l}=1$ (and all other terms are zero),
the second sum is also $p+l$ because $a_{p+l+1,p+l}=1$,
and the last term is zero. Thus, the left-hand side is zero and the constraint
is tight.
\item{}
For odd values of $l$, $1 \le l \le k-p-1$, 
the first sum is $p+l-1$ since $a_{p+l,p+l-1}=1$, the second sum is
zero, and the last term is $p+l-1$ since $a_{p+l+1,p+l+1}=1$. The
constraint is tight.
\end{itemize}
\item{\bf Case 2:}
$k-p$ odd:
\begin{itemize}
\item{}
For even values of $l$, $0 \le l \le k-p-3$,            
the first sum is $p+l$ since $a_{p+l,p+l}=1$,
the second sum is $p+l$ because $a_{p+l+1,p+l}=1$,
and the last term is zero. The constraint is tight.
\item{}
For odd values of $l$, $1 \le l \le k-p-4$           
the first sum is $p+l-1$ since $a_{p+l,p+l-1}=1$,
the second sum is zero, and the last term is $p+l-1$ since $a_{p+l+1,p+l+1}=1$.
The constraint is tight.
\item{}
For $l=k-p-1$,
the first sum is $k-2$ since $a_{k-1,k-2}=1$, the second sum is zero,
and the last term is $k-2$ since $a_{k,k}=1$. The constraint is tight.
\item{}
For $l=k-p-2$,
the first sum is $k-3$ since $a_{k-2,k-3}=1$, the second sum is $k-2$
because $a_{k-1,k-2}=1$ and the last term is zero. Thus, the left-hand
side is $-1$, so the constraint is satisfied but not tight. Observe that this case
corresponds to the value $i=k-1$ in the original formulation (\ref{primalCons3}).
\end{itemize}
\end{itemize}
\end{proof}

We next consider the feasibility of the
dual solution given by (\ref{dual_solution}).
First of all, it is trivial that $\gamma_i \ge 0$, $i=p+1,...,k$.
Next, by straightforward substitution,  it is easily verified that the dual constraints
(\ref{dualCons2}) and (\ref{dualCons5}) are tight. For the other
constraints, we  use the following auxiliary calculations:
\begin{lemma}
For $i=p+1,...,k-1:$
\begin{equation} \label{sum_2_consec_gams}
\gamma_i + \gamma_{i+1} = \frac{1}{i(i+1)} \ , 
\end{equation}
\begin{equation} \label{diff_2_consec_gams}
\gamma_{i+1} - \gamma_i \le  \frac{1}{i(i+1)} \ . 
\end{equation}
\end{lemma}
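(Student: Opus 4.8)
The plan is to prove the identity (\ref{sum_2_consec_gams}) first, by downward induction on $i$ (from $i=k-1$ down to $i=p+1$), and then to obtain (\ref{diff_2_consec_gams}) as an immediate consequence of (\ref{sum_2_consec_gams}) together with the non-negativity of the $\gamma_i$'s, which was already observed right after (\ref{dual_solution}).

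For the base case $i=k-1$, the definition (\ref{dual_solution}) gives $\gamma_{k-1}=0$ and $\gamma_k=\frac{1}{k(k-1)}$, so $\gamma_{k-1}+\gamma_k=\frac{1}{(k-1)k}$, which is exactly (\ref{sum_2_consec_gams}) for $i=k-1$. For the inductive step, fix $i$ with $p+1\le i\le k-2$ and assume (\ref{sum_2_consec_gams}) holds for $i+1$, i.e.\ $\gamma_{i+1}+\gamma_{i+2}=\frac{1}{(i+1)(i+2)}$. Using the recurrence $\gamma_i=\gamma_{i+2}+\frac{2}{i(i+1)(i+2)}$ from (\ref{dual_solution}),
\begin{displaymath}
\gamma_i+\gamma_{i+1}=(\gamma_{i+1}+\gamma_{i+2})+\frac{2}{i(i+1)(i+2)}=\frac{1}{(i+1)(i+2)}+\frac{2}{i(i+1)(i+2)}=\frac{i+2}{i(i+1)(i+2)}=\frac{1}{i(i+1)} \ ,
\end{displaymath}
which completes the induction and establishes (\ref{sum_2_consec_gams}) for all $i=p+1,\dots,k-1$.

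Then (\ref{diff_2_consec_gams}) follows in one line: by (\ref{sum_2_consec_gams}) we have $\gamma_{i+1}=\frac{1}{i(i+1)}-\gamma_i$, hence $\gamma_{i+1}-\gamma_i=\frac{1}{i(i+1)}-2\gamma_i\le\frac{1}{i(i+1)}$, the last step being precisely $\gamma_i\ge 0$ (each $\gamma_i$ is a sum of the non-negative base values $\gamma_{k-1},\gamma_k$ and the positive terms $\frac{2}{j(j+1)(j+2)}$). There is essentially no obstacle here; the only thing to be careful about is the index bookkeeping — that the recurrence of (\ref{dual_solution}) is invoked only for $p+1\le i\le k-2$ and that the base case $i=k-1$ uses the two explicitly given values — and since this is parity-independent, a single argument covers both cases distinguished in Theorem \ref{LP_solution}. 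A more computational alternative would be to solve the recurrence by telescoping, using the partial-fraction identity $\frac{2}{j(j+1)(j+2)}=\frac{1}{j(j+1)}-\frac{1}{(j+1)(j+2)}$, but the induction above is shorter and cleaner.
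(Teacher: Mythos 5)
Your proof is correct and matches the paper's argument: the same downward induction on $i$ with base case $i=k-1$ using the explicit values $\gamma_{k-1}=0$, $\gamma_k=\frac{1}{k(k-1)}$ and the recurrence from (\ref{dual_solution}), followed by deducing (\ref{diff_2_consec_gams}) from (\ref{sum_2_consec_gams}) and $\gamma_i\ge 0$. No issues.
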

\begin{proof}
The first part is proved by induction: The case $i=k-1$ is immediate
since $\gamma_{k-1}=0$ and $\gamma_{k}=\frac{1}{k(k-1)}$. Assume that
(\ref{sum_2_consec_gams}) holds for $i$, $p+2 \le i \le k-1$. Then
for $i-1$, we have:
\begin{displaymath}
\begin{array}{lll}
\gamma_{i-1}+\gamma_{i}-\frac{1}{(i-1)i} &  = & \gamma_{i+1}+\frac{2}{(i-1)i(i+1)}+\gamma_{i}-\frac{1}{(i-1)i} \\
 &  = & \gamma_{i}+\gamma_{i+1} - \frac{1}{i(i+1)} \\
 & = & 0 \ ,
\end{array}
\end{displaymath}
where the last equality holds by the induction hypothesis.
For the second part, observe that
\begin{displaymath}
\frac{1}{i(i+1)} - \gamma_{i+1} + \gamma_i  \ge \frac{1}{i(i+1)} - \gamma_{i+1} - \gamma_i = 0 \ ,
\end{displaymath}
where the inequality holds since $\gamma_i \ge 0$ and the equality is by (\ref{sum_2_consec_gams}).
The result follows.
\end{proof}
\begin{lemma}  \label{dualFeasTight}
(Dual Feasibility)\\
The dual solution defined by (\ref{dualSol}) is feasible for $(D)$.
Moreover, the tight constraints are
(\ref{dualCons2}),(\ref{dualCons4}),(\ref{dualCons5}), and
(\ref{dualCons3}) for $i$ and $j$ values such that $j=i-1$.
\end{lemma}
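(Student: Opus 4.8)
The plan is to check the dual constraints of $(D)$ group by group, leaning on the two identities on consecutive $\gamma$'s, namely $\gamma_i+\gamma_{i+1}=\frac{1}{i(i+1)}$ (\ref{sum_2_consec_gams}) and $\gamma_{i+1}-\gamma_i\le\frac{1}{i(i+1)}$ (\ref{diff_2_consec_gams}), and to establish the tightness of (\ref{dualCons4}) early so that it can be plugged directly into (\ref{dualCons3}). First I would dispose of the easy items. Non-negativity $\gamma_i\ge0$ is immediate from the backward recursion in (\ref{dualSol}): $\gamma_k>0$, $\gamma_{k-1}=0$, and each step $\gamma_i=\gamma_{i+2}+\frac{2}{i(i+1)(i+2)}$ adds a positive quantity. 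Substituting the closed forms for $\beta_k$ and $\beta_p$ shows that (\ref{dualCons2}) and (\ref{dualCons5}) hold with equality. For (\ref{dualCons4}), the definition of $\beta_i$ for $i=p+2,\dots,k-1$ gives equality on the nose; for the boundary index $i=p+1$, substituting the definition of $\beta_{p+1}$ and applying (\ref{sum_2_consec_gams}) with $i=p+1$ collapses the left-hand side to $\frac{p}{(p+1)(p+2)}+\gamma_{p+1}+\gamma_{p+2}=\frac{p+1}{(p+1)(p+2)}=\frac{1}{p+2}$, again equality. Hence (\ref{dualCons4}) is tight throughout its range.

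Next I would handle (\ref{dualCons3}). Using the tightness of (\ref{dualCons4}) just proved, write $\beta_i=\frac{1}{i+1}+(i-2)\gamma_i-i\gamma_{i+1}$ (this formula holds for every $i=p+1,\dots,k-1$, including $i=p+1$). Substituting this into the left-hand side of (\ref{dualCons3}) and eliminating $\gamma_{i+1}$ via $\gamma_{i+1}=\frac{1}{i(i+1)}-\gamma_i$, everything telescopes: the left-hand side minus $\frac{j}{i(i+1)}$ equals $2(i-1-j)\gamma_i$. Since $1\le j\le i-1$ and $\gamma_i\ge0$, this is nonnegative, so (\ref{dualCons3}) holds, and it is tight exactly when $j=i-1$ whenever $\gamma_i>0$, which is the asserted tightness pattern. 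For (\ref{dualCons1}), write $\beta_k-j\gamma_k=1+(k-2-j)\gamma_k$ with $\gamma_k=\frac{1}{k(k-1)}$; clearing denominators reduces the required inequality $1+(k-2-j)\gamma_k\ge\frac{j}{k}$ to the elementary fact $k^2-2\ge jk$, which holds (strictly, for the non-degenerate values $k\ge3$) because $j\le k-1$, so (\ref{dualCons1}) is satisfied and not tight.

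Finally I would confirm $\beta_i\ge0$, which is part of feasibility: $\beta_k=1+(k-2)\gamma_k\ge1$; for $i=p+2,\dots,k-1$, eliminating $\gamma_{i+1}$ via (\ref{sum_2_consec_gams}) in the definition of $\beta_i$ yields the clean form $\beta_i=(2i-2)\gamma_i\ge0$; $\beta_p=\frac{1}{p+1}-p\gamma_{p+1}\ge0$ because (\ref{sum_2_consec_gams}) together with $\gamma_{p+2}\ge0$ forces $\gamma_{p+1}\le\frac{1}{(p+1)(p+2)}$, so $p\gamma_{p+1}\le\frac{p}{(p+1)(p+2)}<\frac{1}{p+1}$; and $\beta_{p+1}=\frac{p}{(p+1)(p+2)}-p(\gamma_{p+2}-\gamma_{p+1})\ge0$ follows from (\ref{diff_2_consec_gams}) with $i=p+1$, which bounds $\gamma_{p+2}-\gamma_{p+1}$ by $\frac{1}{(p+1)(p+2)}$. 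The whole argument is bookkeeping; the only step that genuinely needs the second identity (\ref{diff_2_consec_gams}) rather than the first is the nonnegativity of $\beta_{p+1}$, and the only mild care required is treating the boundary index $i=p+1$ separately when establishing (\ref{dualCons4}), since $\beta_{p+1}$ carries a different closed form. I expect that boundary case, rather than any real difficulty, to be where an error would most likely creep in.
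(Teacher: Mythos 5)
Your proposal is correct and follows essentially the same route as the paper: a group-by-group verification of the dual constraints driven by the two identities on consecutive $\gamma$'s, with tightness of (\ref{dualCons2}), (\ref{dualCons4}), (\ref{dualCons5}) and the $j=i-1$ cases of (\ref{dualCons3}) read off from the definitions. Your algebra is in fact slightly cleaner than the paper's (the reductions to $\beta_i=(2i-2)\gamma_i$ and to the residual $2(i-1-j)\gamma_i\ge 0$ for (\ref{dualCons3})), but the substance is identical.
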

\begin{proof}
We first identify the tight constraints in $(D)$. Consider the
set of constraints (\ref{dualCons4}). For $i$ values
$i=p+2,...,k-1$, it is easily seen that they are tight, by the definition
of $\beta_i$. For $i=p+1$, substituting the definition of
$\beta_{p+1}$, we obtain
\begin{displaymath}
\frac{p}{(p+1)(p+2)} + \gamma_{p+1} + \gamma_{p+2}  \ge \frac{1}{p+2} \ .
\end{displaymath}
By (\ref{sum_2_consec_gams}), the inequality is tight.

Consider the set of constraints (\ref{dualCons3}), for $j$ values $j=i-1$.
From $\beta_{p+1}$'s definition, it follows immediately that the case $i=p+1$
(hence $j=p$) is tight.
For $i \ge p+2$, substituting $\beta_i$'s definition yields:
\begin{displaymath}
\frac{1}{i+1} - \gamma_i - \gamma_{i+1} \ge \frac{i-1}{i(i+1)} \ .
\end{displaymath}
Again, (\ref{sum_2_consec_gams}) yields that it is tight.
   \\

We are done identifying the tight dual constraints. We now turn to
verify feasibility for the rest of the constraints. Consider the set
of constraints (\ref{dualCons1}). Substituting the definitions of
$\beta_k$ and $\gamma_k$, we obtain:
\begin{displaymath}
1 + \frac{k-j-2}{k(k-1)} \ge \frac{j}{k} \ , \qquad j=1,...,k-1 \ .
\end{displaymath}
The inequality clearly holds for $j \le k-2$. For $j=k-1$ it
evaluates to $1 - \frac{1}{k(k-1)} \ge \frac{k-1}{k}$, which holds, as
$k \ge 2$.
  \\
\noindent
Consider the set of constraints (\ref{dualCons3}), for $i=p+1,...,k-1$ and
$j=1,...,i-2$ (we have shown that cases for the values $j=i-1$ are tight).
For $i=p+2,...,k-1$, we evaluate $\beta_i$'s definition to obtain:
\begin{displaymath}
\begin{array}{lll}
\beta_i & = &  \frac{1}{i+1} - i \gamma_{i+1} + (i-2) \gamma_{i}   \\
& = &  \frac{1}{i+1} - \frac{1}{i(i+1)} - (i-1) \gamma_{i+1}    + (i-1) \gamma_{i}  \\
& = &  \frac{i-1}{i(i+1)} - (i-1) \gamma_{i+1}    + (i-1) \gamma_{i}  \ge   0 \ , \\
\end{array}
\end{displaymath}
where the third equality follows from (\ref{sum_2_consec_gams}) and
the inequality follows from (\ref{diff_2_consec_gams}). This proves
that for $i=p+2,...,k-1$, the constraints (\ref{dualCons3}) hold,
and also that $\beta_i \ge 0$. For $i=p+1$ (and $j=1,...,p-1$), we
evaluate $\beta_{p+1}$'s definition:
\begin{displaymath}
\beta_{p+1}  =   \frac{p}{(p+1)(p+2)} + p \gamma_{p+1}- p \gamma_{p+2}  \ge 0  \ ,
\end{displaymath}
where again, the inequality follows from
(\ref{diff_2_consec_gams}). This establishes that (\ref{dualCons3})
holds for $p+1$ $(j=1,...,p-1)$ and that $\beta_{p+1} \ge 0$.

It remains to show that $\beta_p$ and $\beta_k$ are nonnegative. As
$\gamma_k \ge 0$ and $k \ge 2$, it immediately follows from the
definition that $\beta_k \ge 0$. For $\beta_p$, we use
(\ref{sum_2_consec_gams}) to obtain:
\begin{displaymath}
\begin{array}{lll}
\beta_p & = &  \frac{1}{p+1} - p \gamma_{p+1}
         \ge   \frac{1}{p+1} - p ( \gamma_{p+1} + \gamma_{p+2} )  \\
& = &  \frac{1}{p+1} - \frac{p}{(p+1)(p+2)} =  \frac{2}{(p+1)(p+2)}  \ge 0 \ . \\
\end{array}
\end{displaymath}
\end{proof}  \\

We now show that the solutions $(a_{p,p},$
$a_{p+1,1},...,a_{p+1,p+1},$ $a_{p+2,1},...,a_{p+2,p+2},$
$a_{k,1},...,a_{k,k} )$ and $(\beta_p,...,\beta_k,$
$\gamma_{p+1},...,\gamma_k)$
satisfy the complementary slackness
conditions with respect to programs $(P)$ and $(D)$. Thus, they are both optimal.
\begin{lemma} (Primal and Dual Optimality) \\
The primal solution $(a_{p,p},$ $a_{p+1,1},...,a_{p+1,p+1},$
$a_{p+2,1},...,a_{p+2,p+2},$ $a_{k,1},...,a_{k,k} )$
stated in Theorem \ref{LP_solution} is optimal for $(P)$.
The dual solution $(\beta_p,...,\beta_k,$ $\gamma_{p+1},...,\gamma_k)$ defined by
(\ref{dualSol}) is optimal for $(D)$.
\end{lemma}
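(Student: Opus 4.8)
We plan to establish optimality of both solutions via the complementary slackness theorem. The feasibility half of its hypothesis is already in hand: Lemma~\ref{primalFeasTight} shows the primal solution of Theorem~\ref{LP_solution} is feasible for $(P)$, and Lemma~\ref{dualFeasTight} shows the solution defined by (\ref{dualSol}) is feasible for $(D)$. So all that remains is to verify the two complementary slackness conditions for this primal--dual pair, after which optimality of both solutions is immediate.

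The first condition to check is that every positive primal variable is matched with a tight dual constraint. Inspecting the primal solution in both parity cases, its positive entries are exactly $a_{p,p}$, the diagonal entries $a_{i,i}$, and the subdiagonal entries $a_{i,i-1}$; moreover the largest row index occurring among the positive subdiagonal entries is $k-1$ (so $a_{k,k-1}=0$), and no positive entry satisfies $j\le i-2$. We then match: $a_{p,p}$ with (\ref{dualCons5}); each $a_{i,i}$ with $p+1\le i\le k-1$ with (\ref{dualCons4}); $a_{k,k}$ with (\ref{dualCons2}); and each positive $a_{i,i-1}$, which necessarily has $i\le k-1$, with the $j=i-1$ instance of (\ref{dualCons3}). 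By Lemma~\ref{dualFeasTight} these are precisely the tight dual constraints, so this condition holds.

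The second condition is that every positive dual variable is matched with a tight primal constraint. The variables $\beta_{p},\dots,\beta_{k}$ correspond to the constraint (\ref{primalCons2}) (for $i=p$) and the constraints (\ref{primalCons1}) (for $i=p+1,\dots,k$), all of which are tight by Lemma~\ref{primalFeasTight}, so nothing needs checking for them. The variables $\gamma_{p+1},\dots,\gamma_{k}$ correspond to the constraints (\ref{primalCons3}); by Lemma~\ref{primalFeasTight} these are tight except, when $k-p$ is odd, for the one indexed $i=k-1$. The point that makes complementary slackness survive is that $\gamma_{k-1}=0$ in (\ref{dualSol}): the single non-tight primal constraint is paired with a vanishing dual variable, so no slackness condition is violated. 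With both conditions verified, the complementary slackness theorem delivers that the primal solution is optimal for $(P)$ and the dual solution is optimal for $(D)$.

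There is no serious analytic obstacle here; the whole argument is index bookkeeping resting on Lemmas~\ref{primalFeasTight} and~\ref{dualFeasTight}. The one place demanding care is confirming that the support of the primal solution coincides with the collection of tight dual constraints --- in particular that positive subdiagonal entries never reach row index $k$, and that the lone exceptional primal constraint aligns with $\gamma_{k-1}=0$. An alternative route, if one prefers to avoid the matching argument, is simply to compute $\sum_{i=p}^{k}\beta_i$ and check it equals the value of the primal objective at the stated solution; weak duality then forces optimality of both. We favor the complementary slackness version, since the inventory of tight constraints is already recorded in the feasibility lemmas.
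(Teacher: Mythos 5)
Your proposal is correct and follows essentially the same route as the paper: both invoke complementary slackness, using Lemma \ref{primalFeasTight} and Lemma \ref{dualFeasTight} to inventory the tight constraints, and both hinge on the observation that the single non-tight primal constraint (the $i=k-1$ instance of (\ref{primalCons3}) when $k-p$ is odd) is paired with $\gamma_{k-1}=0$. Your phrasing in terms of "positive variable $\Rightarrow$ tight constraint" is just the contrapositive of the paper's "non-tight constraint $\Rightarrow$ zero variable" bookkeeping, so the two arguments coincide.
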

\begin{proof}
Consider $(P)$. By Lemma \ref{primalFeasTight}, all
of $(P)$'s constraints are tight except for (\ref{primalCons3}) for
the value of $i=k-1$ (and when $k-p$ is odd). But the dual variable
corresponding to this constraint, $\gamma_{k-1}$, is zero.
Hence, all primal complementary slackness conditions are satisfied.

Consider $(D)$. From Lemma \ref{dualFeasTight}, it follows that the
constraints which are not tight are (\ref{dualCons1}), and
(\ref{dualCons3}) for the case $j=1,...,i-2$. The primal variables
corresponding to (\ref{dualCons1}) are $a_{k,1},...,a_{k,k-1}$, and
are all zeros. Hence the conditions are satisfied. The variables
corresponding to (\ref{dualCons3}) for $j=1,...,i-2$ are $a_{i,j}$,
$i=p+1,...,k-1$, $j=1,...i-2$. All of them are zeros, so again, the
conditions are satisfied. Therefore, all dual complementary
slackness conditions are satisfied.
Since the complementary slackness conditions hold, the primal (as well as the dual) solution
is optimal.
\end{proof}  \\

We now compute the primal objective function, denoted $POF$. This time we distinguish four cases,
depending on the parity of both $k$ and $p$. In each case, we substitute the primal solution in the
objective function.
\begin{itemize}
\item{\bf Case 1:} $p$ even, $k$ even (the term $1$ is for $a_{k,k}$):\\
\begin{displaymath}
\begin{array}{lll}
POF & = & 1 + \sum_{j=\frac{p}{2}}^{ \frac{k}{2} - 1 } \left(  \frac{2j}{2j(2j+1)}+ \frac{2j}{ (2j+1)(2j+2)} \right) \\
&  = & 1 + \sum_{j=\frac{p}{2}}^{ \frac{k}{2} - 1 } \left(  \frac{1}{2j+1} + \frac{j}{ (2j+1)(j+1)} \right) \\
& = & 1+ \sum_{j=\frac{p}{2}}^{ \frac{k}{2} - 1 } \frac{1}{j+1}  \\
& = & H_\frac{k}{2} - H_{\frac{p}{2}} + 1 \ . \\
\end{array}
\end{displaymath}

\item{\bf Case 2:} $p$ even, $k$ odd (the first two terms are for $a_{k,k}$, $a_{k-1,k-2}$ respectively):\\
\begin{displaymath}
\begin{array}{lll}
POF & = & 1 + \frac{k-2}{k(k-1)}
+ \sum_{j=\frac{p}{2}}^{ \frac{k-3}{2}  } \left(  \frac{2j}{2j(2j+1)} + \frac{2j}{ (2j+1)(2j+2)} \right) \\
& = & 1 + \frac{1}{k} - \frac{1}{k(k-1)}
+ \sum_{j=\frac{p}{2}}^{ \frac{k-3}{2} } \left(  \frac{1}{2j+1} + \frac{j}{ (2j+1)(j+1)} \right) \\
& = & 1 + \frac{1}{k} - \frac{1}{k(k-1)}
+ \sum_{j=\frac{p}{2}}^{ \frac{k-3}{2}  }   \frac{1}{j+1}  \\
&  = &  H_\frac{k-1}{2} - H_\frac{p}{2}   + 1 + \frac{1}{k} - \frac{1}{k(k-1)} \ . \\
\end{array}
\end{displaymath}

\item{\bf Case 3:} $p$ odd, $k$ even (the first two terms are for $a_{k,k}$, $a_{k-1,k-2}$ respectively):\\
\begin{displaymath}
\begin{array}{lll}
POF & = & 1  + \frac{k-2}{k(k-1)}
+ \sum_{j=\frac{p-1}{2}}^{ \frac{k}{2} - 2 } \left(  \frac{2j+1}{(2j+1)(2j+2)}+ \frac{2j+1}{ (2j+2)(2j+3)} \right) \\
& = & 1 + \frac{1}{k} - \frac{1}{k(k-1)} + \sum_{j=\frac{p-1}{2}}^{ \frac{k}{2} - 2 }  \frac{2}{2j+3}
\\
& = & 1 + \frac{1}{k} - \frac{1}{k(k-1)} + 2 \sum_{j=\frac{p+1}{2}}^{ \frac{k}{2} - 1 }  \frac{1}{2j+1}
 \\
& = & 2(H_{k} - H_{p+1}) - H_\frac{k}{2}  + H_\frac{p+1}{2} + 1 +  \frac{1}{k} - \frac{1}{k(k-1)}
 \ . \\
\end{array}
\end{displaymath}
where the last equality follows from the straightforward identity:
\begin{equation}  \label{oddHarmonic}
\sum_{j=l}^r  \frac{1}{2j+1} = H_{2r+2} - H_{2l} - \frac{1}{2}(H_{r+1}-H_l) \ , \ \quad \textrm{for all }  l \le r \ .
\end{equation}
\item{\bf Case 4:} $p$ odd, $k$ odd (the term $1$ is for $a_{k,k}$):\\
\begin{displaymath}
\begin{array}{lll}
POF & = &  1  + \sum_{j=\frac{p-1}{2}}^{ \frac{k-3}{2} } \left(  \frac{2j+1}{(2j+1)(2j+2)}+ \frac{2j+1}{ (2j+2)(2j+3)} \right)
\\
& = & 1  + \sum_{j=\frac{p-1}{2}}^{ \frac{k-3}{2} } \frac{2}{2j+3} \\
& = & 1 + 2\sum_{j=\frac{p+1}{2}}^{ \frac{k-1}{2} } \frac{1}{2j+1} \\
& = & 2(H_{k+1} - H_{p+1}) - H_\frac{k+1}{2}  + H_\frac{p+1}{2} + 1 \ . \\
\end{array}
\end{displaymath}
where again, we used (\ref{oddHarmonic}).
\end{itemize}
This completes the proof of Theorem \ref{LP_solution}.
\endpf

\end{document}